\crefname{hypothesis}{Hypothesis}{Hypotheses}
\title{On DC based Methods for Phase Retrieval}
\author{Meng Huang \thanks{Institute of Computational Mathematics, Academy of Mathematics and
Systems of Science, Chinese Academic Sciences, Beijing 100190, China. (\email{hm@lsec.cc.ac.cn}).}
\and Ming-Jun Lai \thanks{This author is partly supported by the National Science Foundation under grant DMS-1521537. Department of Mathematics, The University of Georgia, Athens, GA 30602. (\email{mjlai@uga.edu}).}
\and Abraham Varghese \thanks{Department of Mathematics, Shenandoah University, Winchester, VA 22601. (\email{opticsabru@gmail.com}).}
\and Zhiqiang Xu \thanks{Zhiqiang Xu was supported  by NSFC grant (11422113, 91630203, 11331012) and by National Basic Research Program of China (973 Program 2015CB856000).
Institute of Computational Mathematics, Academy of Mathematics and Systems of Science, Chinese Academic Sciences, Beijing 100190, China. (\email{xuzq@lsec.cc.ac.cn}).}
}
\def\bfa{{\bf a}}
\def\bfb{{\bf b}}
\def\bfc{{\bf c}}
\def\bfe{{\bf e}}
\def\bfu{{\bf u}}
\def\bfv{{\bf v}}
\def\bfw{{\bf w}}
\def\bfs{{\bf s}}
\def\bfx{{\bf x}}
\def\bfy{{\bf y}}
\def\bfz{{\bf z}}
\def\bfi{{\bf i}}
\newcommand{\abs}[1]{\left|#1\right|}
\def\3bar{{|\hspace{-.02in}|\hspace{-.02in}|}}
\def\RR{\mathbb{R}}
\newcommand{\C}{{\mathbb C}}
\newcommand{\normm}[1]{\|{#1}\|_2}
\newcommand{\T}{\top}
\begin{document}

\maketitle

\begin{abstract}
  In this paper, we develop  a new computational approach which is based on minimizing the
difference of two convex functionals (DC) to solve a broader class of phase retrieval problems.
The approach splits a standard nonlinear least squares minimizing function associated with the phase retrieval problem into the difference of two convex functions and then solves a sequence of convex minimization
sub-problems.   For each subproblem, the Nesterov's accelerated gradient descent algorithm or
the Barzilai-Borwein (BB) algorithm is used.  In the setting of sparse phase retrieval, a standard $\ell_1$
norm term is added into the  minimization mentioned above. The subproblem is approximated by a proximal
gradient method which is solved by the shrinkage-threshold technique directly without iterations. In addition,
a modified Attouch-Peypouquet technique is used to accelerate the iterative computation. These lead to
more effective algorithms than the  Wirtinger flow (WF) algorithm and the Gauss-Newton
(GN) algorithm and etc.. A convergence analysis of both DC based algorithms shows that the iterative
solutions is convergent linearly to a critical point
and will be closer to a global minimizer than the given initial starting point.
Our study is a deterministic analysis while
the study for the Wirtinger flow (WF) algorithm and its variants, the Gauss-Newton (GN) algorithm,
the trust region algorithm is based on the probability analysis.
 In particular, the DC based algorithms are able to retrieve solutions using a number $m$ of measurements
which is about twice of the number $n$ of entries in the solution with high frequency of successes.
When $m\approx n$, the $\ell_1$ DC based algorithm is able to retrieve sparse signals.
Finally, the paper discusses the nonexistence of the solution to the exact recovery of the phase retrieval
problem for arbitrary given measurement values. In addition, for a given set of measurement values, if there
is a solution, an estimate of the upper bound of the number of distinct solutions is also given.
\end{abstract}

\begin{keywords}
  phase retrieval, DC method, nonlinear least squares, convex analysis
\end{keywords}

\begin{AMS}
  68Q25,52A41
\end{AMS}

\section{Introduction}
\subsection{Phase retrieval}
The phase retrieval problem has been extensively studied in the last 40 years due to its numerous applications,
such as X-ray diffraction, crystallography, electron microscopy, optical imaging and etc..
See, e.g. \cite{CLS15a},  \cite{MIJALH02},  \cite{M90}, \cite{R93}, \cite{SECCMS15}, \cite{F82}, and
\cite{DF87}. In particular, see \cite{JEH15} for an explanation of the image recovery from the phaseless
measurements and a survey of recent research results.
Mathematically, the phase retrieval problem or simply called phase retrieval problem can be
stated as follows.  Given measurement vectors
$\bfa_i\in \mathbb{R}^n$ (or $\in \mathbb{C}^n$), and the measurement values $b_i\geq 0$,
we would like to recover an unknown signal  $\bfx \in \mathbb{R}^n$ (or $\in \mathbb{C}^n$)
through a set of quadratic equations:
\begin{equation}
\label{quadcond}
b_1 =|\langle {\bfa}_1, \bfx\rangle|^2, \ldots, b_m = |\langle {\bfa}_m, \bfx\rangle|^2.
\end{equation}
Note that for any $c\in \mathbb{R}^n$ (or $\in \mathbb{C}^n$)  with $\abs{c}=1$,
we have $b_i=|\langle {\bfa}_i, c\bfx\rangle|^2$ for all $i$. Thus we can only hope to recover $\bfx$
up to a unimodular constant. One fundamental  problem in phase retrieval is to
give the minimal $m$ for which there exists $A=(\bfa_1,\ldots, \bfa_m)^\T$ which can recover
$\bfx$ up to a unimodular constant.
For the real case, it is well known that the minimal measurement number  $m$ is  $2n-1$
\cite{balan2006signal}. For the complex case $\mathbb{C}^n$, this question remains open.
Conca, Edidin, Hering and Vinzant \cite{conca2015algebraic} proved $m\ge 4n-4 $ generic measurement vectors
$F=(\mathbf{f}_1,\ldots,\mathbf{f}_m)^\T$ have phase retrieval property for $\mathbb{C}^n$ and
they furthermore show that $4n-4$  is sharp  if $n$ is in the form of $2^k+1,\;k\in \mathbb{Z}_+$.
In \cite{vinzant2015small}, for the case $n=4$, Vinzant  present $11=4n-5<4n-4$ measurement vectors
which have phase retrieval property for $\mathbb{C}^4$ which
implies that $4n-4$ is not sharp for some dimension $n$. There is a similar study for the sparse
phase retrieval. See \cite{WX14}.

There  are many computational algorithms available to find a true signal $\bfx$ up to a phase factor.
It is common folklore that for given $\bfa_i, i=1, \ldots, m$, we may not be able to find a solution
$\bfx$ for any given vector $\bfb=(b_1, \cdots, b_m)^\top$,
e.g. a perturbation of the exact observed value vector $\bfb^*$.
We shall give this fact a mathematical explanation (see \cref{nochance} in a later section). Thus,
the phase retrieval problem is usually formulated as follows:
\begin{equation}
\label{LSmin}
\min_{\bfx \in \mathbb{R}^n \hbox{ or } \mathbb{C}^n} \sum_{i=1}^m (|\langle {\bfa}_i, \bfx\rangle|^2 -b_i)^2.
\end{equation}
Although it is not a convex minimization problem, the minimizing functional is a nice differentiable function
and hence, many computational algorithms can be developed and they are very successful actually.
A gradient descent method (called  Wirtinger flow in the complex variable setting) is
developed by Cand\`es et al. in \cite{WF}. They show that the WF algorithm
converges to the true signal up to a global phase factor with high probability provided the measurement
vectors are $m=O(n\log n)$ Gaussian measurements.  Many variants of
Wirtinger flow algorithms were developed. See \cite{CLS15a}, \cite{CLM15}, \cite{TWF},
\cite{ZL16}, and  \cite{BSRW18} for Truncated WF, Thresholded WF, Reshaped WF, Accelerated WF, and etc..
In \cite{Gaoxu}, Gao and Xu propose a Gauss-Newton (GN) algorithm to find a minimizer of \cref{LSmin}.
They proved  that, for the real signal, the GN algorithm can converge to the global optimal solution
quadratically with $O(n\log n)$ measurements starting from a good initial solution.
Indeed, Gao and Xu provided a formula for the initialization which is
much better than the initialization in \cite{WF} in numerical experiments.
Another approach for the minimization \cref{LSmin} is called the true region method which was studied
in \cite{SQW17} where a geometric description of the landscape function
$f(\bfx) = \sum_{i=1}^m (|\langle {\bfa}_i, \bfx\rangle|^2 -b_i)^2$ is given.
To recover sparse signals from the measurements (\ref{quadcond}), a standard approach is to add $\ell_1$
term $\lambda \|\bfx\|_1$ to \cref{LSmin} or use the projected gradient method as discussed in \cite{SBE14}.

\subsection{Our contribution}
In this paper, we consider a broader class of phase retrieval problem which includes standard phase
retrieval as a special case. We aim to recover $\bfx \in \mathbb{R}^n$ (or $\in \mathbb{C}^n$)
from nonlinear measurements
\begin{equation}\label{problem setup}
b_i=f(\langle {\bfa}_i, \bfx\rangle),\quad i=1, \ldots, m,
\end{equation}
where $f:\mathbb{C}\to \mathbb{R}_+$ is a continuous convex function which satisfies the following
coercive condition:
\begin{equation*}
 f(x) \to \infty \hbox{ when } |x| \to \infty.
\end{equation*}
If we take $f(x)=|x|^2$, then it reduces to the standard phase retrieval.
To guarantee the unique recovery of $\bfx$, it has been
proved that the measurement number satisfies  $m \ge n+1$ for the real case ($2n+1$ for the complex
case, respectively) (see \cite{HX}). Recovering $\bfx$ from the nonlinear observation is also
raised in many areas, such as  neural network etc. (cf. \cite{Non1, Non2}).

To reconstruct $\bfx$ by solving \cref{problem setup}, we can formulate it as
\begin{equation} \label{main problem}
\min_{\bfx \in \mathbb{R}^n \hbox{ or } \mathbb{C}^n} \sum_{i=1}^m (f(\langle {\bfa}_i, \bfx\rangle) -b_i)^2.
\end{equation}
We approach it by using the standard technique for a difference of convex minimizing functionals.
Indeed, for the case $\bfx\in
\RR^n$ and $\bfa_i \in \RR^n$, let $F(\bfx)= \sum_{i=1}^m (f(\langle {\bfa}_i, \bfx\rangle) -b_i)^2$
be the minimizing functional. As it is not convex, we can write it as
$$
F(\bfx) = F_1(\bfx) - F_2(\bfx):= \sum_{i=1}^m \left( f^2(\langle {\bfa}_i, \bfx\rangle) + b_i^2 \right)
 - \sum_{i=1}^m 2 b_i f(\langle {\bfa}_i, \bfx\rangle).
$$
Note that $f$ is a convex function with function value $f(x)\ge 0$ for all $x \in \RR$.
Then $F_1$ and $F_2$ are convex functions. The minimization \cref{main problem} will be approximated by
\begin{equation}
\label{DCalg}
\bfx^{(k+1)}:= \hbox{arg} \min_\bfx  F_1(\bfx) - \nabla F_2(\bfx)^\top (\bfx- \bfx^{(k)})
\end{equation}
for any given $\bfx^{(k)}$. We call this algorithm as DC based algorithm following from the ideas in
\cite{G13}, where the sparse solutions of underdetermined linear system were studied.
Due to the nice properties of $F_1$ and $F_2$, we will be able to establish much better results
than those in \cite{G13}.
When $\bfx\in \mathbb{C}^n$ and ${\bfa}_j\in \mathbb{C}^n$,
$j=1, \ldots, m$, we have to write $\bfx= \bfx_R+ \sqrt{-1}\bfx_I$ and similar for ${\bfa}_j$. Letting
$\bfc=[\bfx_R^\top\,\,
\bfx_I^\top]^\top \in \mathbb{R}^{2n}$, we view $F_1(\bfx)$ as  a functions in $G_1(\bfc)
=F_1(\bfx_R+\sqrt{-1}\bfx_I)$. Then $G_1(\bfc)$ is a convex function of variable $\bfc$.
Similarly, $G_2(\bfc)=F_2(\bfx_R+ \sqrt{-1}\bfx_I)$ is a convex function of $\bfc$.
For convenience, we simply discuss the case when $\bfx,\;  {\bfa}_j, \; j=1,\ldots, m$ are real.  The complex variable setting can be treated in the same fashion.

The above minimization \cref{DCalg} is a convex minimization problem with differentiable functional
for each $k$.
We can solve it by using the standard gradient descent method with Nestrov's acceleration (cf. \cite{N04})
or by using the Barzilai-Borwein (BB) method (cf. \cite{BB88}).
There are several nice properties of this DC based approach. We can
show that
\[
F(\bfx^{(k+1)}) \le F(\bfx^{(k)}) - \ell \|\bfx^{(k+1)}- \bfx^{(k)}\|^2
\]
for some constant
$\ell>0$. That is, $F(\bfx^{(k)}), k\ge 1$
is strictly decreasing sequence and hence, the sequence $\bfx^{(k)}$ will not converge to a local maximum.
Also, we can prove the sequence $\{\bfx^{(k)}\}_{k=1}^\infty $
converges to a critical point $\bfx^*$. Using the Kurdyka-\L ojasiewicz inequality, we can also show
$\|\bfx^{(k)}- \bfx^*\|\le C \theta^k$ for  $\theta\in
(0, 1)$. If the function $F(\bfx)$ has the property that any global minimizer $\bfx^\star$ is a local minimizer
over a neighborhood $N(\bfx^\star)$ and the initial point $\bfx^{(1)}$ is within $N(\bfx^\star)$,
then the DC based algorithm will converge
to the global  minimizer. Actually, the function $F(\bfx)$ has such property for standard phase retrieval
problem and the initial point is chosen by a careful initialization.
Our numerical experiments show that
our DC based algorithm can retrieve solutions when $m \approx 2n$.

Furthermore, we develop an $\ell_1$ DC based algorithm to reduce the number of measurements and recover sparse
signals. That is, starting from $\bfx^{(k)}$, we solve
\begin{equation}
\label{L1DCalg}
\bfx^{(k+1)}:= \hbox{arg} \min \lambda \|\bfx\|_1+ F_1(\bfx) - \nabla F_2(\bfx)^\top (\bfx- \bfx^{(k)})
\end{equation}
using a proximal gradient method, where $\lambda>0$ is a parameter.
The convergence of the $\ell_1$ DC based algorithm can be established
similar to  the DC base algorithm. To accelerate the convergence of the $\ell_1$ DC based algorithm, we
use Attouch-Peypouquet's acceleration (cf. \cite{AP16}). To have a better initialization, we use the projection
technique (cf. \cite{FD01}).  In addition, the hard thresholding operator is used to project each iteration
onto the set of sparse vectors. With these updates, the algorithm works very well.
The numerical experiments of the modified $\ell_1$ DC based algorithm can
recover sparse signals as long as $m\approx n$.

\subsection{Organization}
The paper is organized as follows. First, using tools of algebraic geometry, we explain some existence
for phase retrieval and give an estimate of how many distinct solutions in \cref{sec:existence}.
In \cref{sec:phase}, we give the analysis of convergence for our DC based algorithm.
Accelerated gradient descent methods, Nesterov's accelerated technique and the BB technqie for inner
iterations will be discussed in \cref{sec:Nes}. Furthermore, we will study the $\ell_1$ based
algorithm for retrieving sparse signals and discuss the convergence in \cref{sec:sparse}.
Our numerical experimental results are collected in \cref{sec:numerical}, where we show the performance of our DC
based algorithms and
comparison with the Gauss-Newton algorithm for general signals and sparse signals.
Mainly, we will show that the DC based algorithm
is able to retrieve signals when $m\approx 2n$ with high frequency of successes. In addition, our
$\ell_1$ DC based algorithm with the update techniques
is able to retrieve sparse signals with high frequency of successes when $m\approx n$.

\section{On Existence and Number of Phase Retrieval Solutions}
\label{sec:existence}
In this section, we shall discuss the existence of phase retrieval solution and give an estimate
of the number of distinct solutions.
To do so, we first recall  PhaseLift (cf. \cite{CESV12}) which shows the connection between phase retrieval
and low-rank matrix recovery.

Letting $X = \bfx \bfx^\top$ and $A_j = \bfa_j \bfa_j^\top$, $j=1, \ldots, m$,
the constrains in \cref{quadcond} can be rewritten as
\begin{equation*} 
  b_j=\hbox{tr}(A_j X),\quad j=1, \ldots, m,
\end{equation*}
where $\hbox{tr}(\cdot)$ is the trace operator.

Note that a scaling of $\bfx$ by a unimodular constant $c$ would not change $X$. Indeed, $(c\bfx)(c\bfx)^{\top}
= |c|^2 \bfx \bfx^{\top} = \bfx \bfx^{\top} = X$.
Conversely, given a positive semi definite matrix $X$ of rank 1, there exists a vector $\bfx$ such that $X =
\bfx \bfx^{\top}$.
So the phase retrieval problem can be recast as a matrix recover problem (cf. \cite{CESV12}):
Find $ X \in {\cal M}_1$ satisfying linear measurements: $ \hbox{tr}(A_jX)= b_j, j=1, \ldots, m$, where
${\cal M}_r= \{X\in \mathbb{R}^{n\times n} : \hbox{rank}(X) = r\}$.
It also can be considered as a low rank matrix recovery problem:
\begin{equation}
\label{phaseretrieval}
\min \{\hbox{rank}(X):  \hbox{tr}(A_j X)=b_j, j=1, \ldots, m, X\succeq 0\}.
\end{equation}
As we have pointed out above, for any given $b_j\geq 0$, $j=1, \ldots, m$,
there may not have a matrix $X\in {\cal M}_r$ with $r<n$ satisfying the constraints exactly.
Unless $b_j\geq 0$ are exactly the measurement values from a matrix $X$ so that we can use the
minimization \cref{phaseretrieval} to find the solution $X$,
we have to reformulate the above problem otherwise:
\begin{equation}
\label{phaseretrieval2}
\min \{ \sum_{i=1}^m |\hbox{tr}(A_j X)-b_j|^2, X\in {\cal M}_r, X\ge 0\}.
\end{equation}
As ${\cal M}_r$ is a closed set, the above least squares problem will have a bounded solution
if the following coercive condition holds:
\begin{equation*}
\sum_{i=1}^m |\hbox{tr}(A_j X)-b_j|^2 \to \infty \hbox{ when } \|X\|_F\to \infty.
\end{equation*}
In the case that the above coercive condition does not hold, one has to use other conditions to
insure that the minimizer in \cref{phaseretrieval2} is bounded. For example, if there is a matrix $X_0$
which is orthogonal to $A_j$ in the sense that
$\hbox{tr}(A_j X_0)=0$ for all $j=1, \ldots, m$, then the coercive condition will not hold as one can let
$X= \ell X_0$ with $\ell\to \infty$.

We are now ready to discuss the existence of the solution of phase retrieval problem.
Let ${\cal M}_r$ be the set of matrices of size $n\times n$ with rank $r$ and $\overline{\mathcal{M}}_{r}$
be the set of all matrices with rank $\le r$.
It is known that dimension of ${\cal M}_r$ is $2 n r-r^2$ (cf. Proposition 12.2 in \cite{JoeHarris}
for a proof). Since $\overline{\mathcal{M}_{r}}$ is the closure of ${\cal M}_r$
in the Zariski sense (cf. \cite{ZARISKI}) and hence  the dimension of
$\overline{\mathcal{M}}_r$ is also $2nr- r^2$.
Furthermore,  it is clear that $\overline{\mathcal{M}_{r}}$ is an algebraic variety.
In fact, $\overline{\mathcal{M}_{r}}$  is an irreducible variety which is a standard result in algebraic geometry. To make the paper self-contain,  we present a short proof.
\begin{lemma}\label{irreduciblity}
$\overline{\mathcal{M}_{r}}$ is an irreducible variety.
\end{lemma}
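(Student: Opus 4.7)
The plan is to exhibit $\overline{\mathcal{M}_r}$ as the image of an irreducible variety under a polynomial (hence Zariski-continuous) map, and then invoke the standard fact that the image, and its closure, of an irreducible set under a continuous map is irreducible.

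Concretely, I would consider the factorization map
\[
\phi:\mathbb{R}^{n\times r}\times \mathbb{R}^{r\times n}\longrightarrow \mathbb{R}^{n\times n},\qquad \phi(B,C)=BC.
\]
Each entry of $BC$ is a polynomial in the entries of $B$ and $C$, so $\phi$ is a morphism of affine varieties and in particular continuous in the Zariski topology. The domain $\mathbb{R}^{n\times r}\times \mathbb{R}^{r\times n}$ is an affine space, hence an irreducible variety.

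Next I would identify the image of $\phi$. On the one hand, $\mathrm{rank}(BC)\le \min(\mathrm{rank}(B),\mathrm{rank}(C))\le r$, so $\phi(B,C)\in \overline{\mathcal{M}_r}$. On the other hand, any $X$ with $\mathrm{rank}(X)\le r$ admits a rank factorization $X=BC$ with $B\in\mathbb{R}^{n\times r}$, $C\in\mathbb{R}^{r\times n}$ (take the first $r$ columns of $B$ to be a basis of the column space padded with zeros, and $C$ the coordinates). Hence $\phi(\mathbb{R}^{n\times r}\times \mathbb{R}^{r\times n})=\overline{\mathcal{M}_r}$. Note that $\overline{\mathcal{M}_r}$ is already Zariski closed since it is cut out by the vanishing of all $(r+1)\times(r+1)$ minors, so no further closure is needed.

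Finally I would conclude by the general principle: if $X$ is an irreducible topological space and $\phi:X\to Y$ is continuous, then $\phi(X)$ is irreducible (any decomposition $\phi(X)=A_1\cup A_2$ into proper closed subsets of $\phi(X)$ pulls back to a decomposition $X=\phi^{-1}(A_1)\cup\phi^{-1}(A_2)$ into proper closed subsets, a contradiction), and the closure of an irreducible subset is irreducible. Applying this to $\phi$ gives that $\overline{\mathcal{M}_r}$ is irreducible. The main point to verify carefully is the surjectivity of $\phi$ onto $\overline{\mathcal{M}_r}$; the other steps are formal properties of the Zariski topology. No serious obstacle arises here beyond correctly invoking these standard facts.
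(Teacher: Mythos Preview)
Your proof is correct and takes a genuinely different route from the paper's. The paper argues via the group action of $GL(n)\times GL(n)$ on $M_n(\mathbb{R})$ by $(G_1,G_2)\cdot M = G_1 M G_2^{-1}$: it observes that $\mathcal{M}_r$ is a single orbit under this action, hence the image of the irreducible variety $GL(n)\times GL(n)$ under a morphism, so $\mathcal{M}_r$ is irreducible and then so is its closure $\overline{\mathcal{M}_r}$.

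Your factorization map $\phi(B,C)=BC$ from affine space $\mathbb{R}^{n\times r}\times\mathbb{R}^{r\times n}$ is more elementary in two respects. First, the domain is visibly irreducible (an affine space), whereas the paper implicitly uses that $GL(n)$, as a nonempty Zariski-open subset of $M_n$, is irreducible. Second, your map surjects directly onto $\overline{\mathcal{M}_r}$, so no separate closure argument is needed; the paper's orbit map hits only $\mathcal{M}_r$ and then passes to the closure. What the paper's approach buys is the additional structural information that $\mathcal{M}_r$ itself is a single orbit (hence already irreducible before closing), which is not visible from your argument. Both are standard proofs of this classical fact.
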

\begin{proof}
Denote by $GL(n)$ the set of invertible $n\times n$ matrices. Consider the action of $GL(n)\times GL(n)$
on $M_n(R)$ given by: $ (G_1,G_2)\cdot M \mapsto G_1 M G_2^{-1}$, for all $G_1, G_2\in GL(n)$.  Fix a rank $r$ matrix $M$.
Then the variety $\mathcal{M}_{r}$ is the orbit of $M$. Hence, we have a surjective morphism,  a regular algebraic map
described by polynomials,  from $GL(n)\times GL(n)$ onto $\mathcal{M}_{r}$.
Since $GL(n)\times GL(n)$ is an irreducible variety, so is $\mathcal{M}_{r}$.
Hence, the closure $\overline{\mathcal{M}_{r_g}}$ of the irreducible set $\mathcal{M}_{r_g}$
is also irreducible \emph{c.f} (cf. Example I.1.4 in \cite{Hartshorne}).
\end{proof}

Define a map
$$
{\cal A} : \mathcal{M}_1 \rightarrow \mathbb{R}^{m}
$$
by projecting any matrix $X \in \mathcal{M}_1$ to $(b_1, \cdots, b_m)^\top \in  \mathbb{R}^{m}$ in the sense that
$${\cal A}(X)= (\hbox{tr}(A_1 X), \cdots, \hbox{tr}(A_m X))^\top.$$
We define the range ${\cal R}_+=\{{\cal A}(X):  X\in {\cal M}_1, X\succeq 0\}$ and the range ${\cal R}
=\{{\cal A}(X):  X\in {\cal M}_1\}$ of the map ${\cal A}$. It is clear that the dimension of ${\cal R}_+$ is less than or
equal to the dimension of ${\cal R}$. As the projection ${\cal A}$ is a regular map since each coordinate of the map ${\cal A}$
is a linear polynomial in entries of matrices,
we expect that $\dim({\cal R})$ is less than or equal to the dimension of the  ${\cal M}_1$
which is equal to $2n-1$.   If $m>2n-1$, then ${\cal R}$ is not able to occupy the whole space $\mathbb{R}^m$.
The Lebesgue measure of the range ${\cal R}$
is zero and hence, randomly choosing a vector $\bfb= (b_1, \cdots, b_m)^\top \in \mathbb{R}^m$, e.g. $\bfb\in \mathbb{R}^m_+$
 will not be in  ${\cal R}$ most likely and hence, not in ${\cal R}_+$. Thus, there will not be a solution $X\in {\cal M}_1$
 such that ${\cal A}(X)=\bfb$.

Certainly, these intuitions should be made more precise.
Recall the following result from Theorem 1.25 in Sec 6.3 of \cite{Shafarevich}.
\begin{lemma}\label{fiberdimensionlemma}
Let $f : X \rightarrow Y$ be a regular map between irreducible varieties.
Suppose that $f$ is surjective: $f(X) = Y$ , and that $\dim(X) = n$, $\dim(Y) = m$. Then $m \leq n$,
and
\begin{enumerate}
	\item for any $y \in Y$ and for any component $F$ of the fiber $f^{-1}(y)$,  $\dim(F) \geq n-m$;
	\item there exists a nonempty open subset $U \subset Y$ such that $\dim(f^{-1}(y)) = n - m$ for $y \in U$.
\end{enumerate}
\end{lemma}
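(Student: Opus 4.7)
The statement is a classical theorem on the dimension of fibers of a regular map, which is standard material in Shafarevich's text. Rather than reinvent the full machinery, my plan would be to extract the two mechanisms that drive it: transcendence degree arguments for the inequality $m \leq n$ and the generic fiber dimension, and Krull's Hauptidealsatz for the lower bound on all fibers.

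First I would establish $m \le n$. Since $f$ is surjective and $X, Y$ are irreducible, the pullback $f^{*}\colon k(Y)\hookrightarrow k(X)$ is an injection of function fields. Because dimension of an irreducible variety equals the transcendence degree of its function field over the base field $k$, this injection gives $m = \dim Y = \operatorname{trdeg}_{k} k(Y) \le \operatorname{trdeg}_{k} k(X) = \dim X = n$. From this same setup, the generic fiber (the scheme-theoretic fiber over the generic point $\eta$ of $Y$) has dimension $\operatorname{trdeg}_{k(Y)} k(X) = n-m$, which is the key input for part~(2).

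For part~(1), I would localize. Pick any $y\in Y$ and replace $Y$ by an affine open neighborhood $V\ni y$ inside which the maximal ideal $\mathfrak{m}_{y}\subset \mathcal{O}(V)$ is generated up to radical by $m=\dim Y$ regular functions $g_{1},\ldots,g_{m}$ (possible because every noetherian local ring of dimension $m$ has a system of parameters, which can be lifted to a neighborhood). Then on $f^{-1}(V)$, the fiber $f^{-1}(y)$ is the zero locus of the pullbacks $f^{*}g_{1},\ldots,f^{*}g_{m}$. Cover $f^{-1}(V)$ by affine opens $U_{\alpha}$ and apply Krull's Hauptidealsatz (in its higher form) inside each $U_{\alpha}$: a closed subset cut out by $m$ equations in an irreducible affine variety of dimension $n$ has every component of codimension at most $m$, hence of dimension at least $n-m$. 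Since every component $F$ of $f^{-1}(y)$ meets some $U_{\alpha}$ in a dense open of $F$, this forces $\dim F \ge n-m$.

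For part~(2), I would use upper semicontinuity of fiber dimension together with the generic-fiber computation. Define $Y_{k} := \{y\in Y : \dim f^{-1}(y)\ge k\}$. A standard argument (available in Shafarevich via a noetherian induction on $\dim Y$, or more abstractly via Chevalley's theorem on constructible sets) shows each $Y_{k}$ is closed in $Y$. Since the generic fiber has dimension exactly $n-m$, the closed set $Y_{n-m+1}$ is a proper closed subset of $Y$; its complement $U := Y\setminus Y_{n-m+1}$ is a nonempty open subset on which every fiber has dimension at most $n-m$. Combined with the lower bound from part~(1), the fiber dimension on $U$ equals $n-m$ exactly.

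The main technical obstacle is the upper semicontinuity step for part~(2): proving that the locus where the fiber dimension jumps is closed requires either a careful noetherian induction (passing to a hypersurface section of $Y$ and invoking part~(1) for the resulting maps) or the full strength of Chevalley's theorem on constructibility of images. The Hauptidealsatz step in part~(1) is routine once the affine reduction and the "system of parameters up to radical" statement are in hand; transcendence degree gives $m\le n$ essentially for free.
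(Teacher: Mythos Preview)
The paper does not prove this lemma at all: it is simply quoted as ``Theorem 1.25 in Sec.\ 6.3 of \cite{Shafarevich}'' and used as a black box in the subsequent arguments (Theorem~\ref{nochance} and Lemma~\ref{dimensions_been_equal}). There is therefore nothing to compare your proof against in the paper itself.

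That said, your sketch is the standard proof one finds in Shafarevich and related texts: the transcendence-degree computation for $m\le n$ and the generic fiber, Krull's Hauptidealsatz (via a local system of parameters) for the lower bound in part~(1), and upper semicontinuity of fiber dimension for part~(2). Your identification of the semicontinuity step as the most delicate piece is accurate; Shafarevich handles it by a noetherian induction rather than by invoking Chevalley directly, but either route works. No gaps in the outline.
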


We are now ready to prove
\begin{theorem}
\label{nochance}
If one chooses randomly a vector $\bfb=(b_1, \cdots, b_m)^\top \in \mathbb{R}_+^m$ with $m>2n-1$, the
probability of finding a solution $X$ to the minimization \cref{phaseretrieval} is zero.
In other words, for almost all the $\bfb=(b_1, \cdots, b_m)^\top \in \mathbb{R}_+^m$
the solution to \cref{phaseretrieval}  is a matrix with rank more than or equal to $2$.
\end{theorem}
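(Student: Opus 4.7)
The plan is to use the fiber dimension lemma (Lemma \ref{fiberdimensionlemma}) together with the irreducibility statement (Lemma \ref{irreduciblity}) to show that the image of the measurement map $\mathcal{A}$ is contained in a proper algebraic subvariety of $\mathbb{R}^m$, and then invoke the standard fact that such a subvariety carries zero Lebesgue measure. First I would regard $\mathcal{A}:\overline{\mathcal{M}_1}\to \mathbb{R}^m$ with $\mathcal{A}(X)=(\mathrm{tr}(A_1 X),\ldots,\mathrm{tr}(A_m X))^\top$; each coordinate is linear in the entries of $X$, so $\mathcal{A}$ is a regular (polynomial) map on the irreducible variety $\overline{\mathcal{M}_1}$, whose dimension is $2n-1$ (the $r=1$ case of the formula $2nr-r^2$ cited earlier). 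Let $\overline{\mathcal{R}}$ denote the Zariski closure of $\mathcal{A}(\overline{\mathcal{M}_1})$ in $\mathbb{R}^m$; it is itself an irreducible variety since it is the closure of the image of an irreducible variety under a regular map.

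Next I would apply Lemma \ref{fiberdimensionlemma} to the map $\mathcal{A}:\overline{\mathcal{M}_1}\to \overline{\mathcal{R}}$, which by construction is surjective onto $\overline{\mathcal{R}}$. The first conclusion of the lemma forces $\dim(\overline{\mathcal{R}})\leq \dim(\overline{\mathcal{M}_1})=2n-1$. The hypothesis $m>2n-1$ then guarantees $\dim(\overline{\mathcal{R}})<m$, so $\overline{\mathcal{R}}$ is a proper algebraic subvariety of $\mathbb{R}^m$, cut out by a nontrivial collection of polynomial equations.

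I would then conclude by the classical fact that a proper (real) algebraic subvariety of $\mathbb{R}^m$ has Lebesgue measure zero. Since $\mathcal{R}_+ \subseteq \mathcal{R} \subseteq \overline{\mathcal{R}}$, the set of measurement vectors $\bfb$ actually realized by some rank-$1$ positive semidefinite $X$ is contained in a measure-zero set. Therefore the probability that a vector $\bfb$ drawn from any absolutely continuous distribution on $\mathbb{R}_+^m$ lies in $\mathcal{R}_+$ is zero, which is exactly the statement that almost no $\bfb$ admits a rank-$1$ PSD solution to \cref{phaseretrieval}, forcing the minimizer (if it exists) to have rank at least $2$.

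The main obstacle is the bookkeeping between the algebraic and measure-theoretic notions of dimension: we are working with real varieties, whereas Lemma \ref{fiberdimensionlemma} is the standard statement from algebraic geometry over an algebraically closed field. I would handle this by viewing all objects as their complex counterparts (replacing $\mathcal{M}_1$ by its complexification and $\mathbb{R}^m$ by $\mathbb{C}^m$), applying the fiber dimension lemma there, and then observing that the defining polynomials of $\overline{\mathcal{R}}$ are real, so the real points of the complex variety still form a proper subset of $\mathbb{R}^m$ defined by nonzero real polynomial equations, hence of Lebesgue measure zero. The only other small subtlety is justifying surjectivity onto $\overline{\mathcal{R}}$, which is immediate by definition of closure combined with the fact that a constructible set is dense in its closure.
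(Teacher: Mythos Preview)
Your proposal is correct and follows essentially the same route as the paper: apply Lemma~\ref{fiberdimensionlemma} to the regular map $\mathcal{A}$ on the irreducible variety $\overline{\mathcal{M}_1}$ (Lemma~\ref{irreduciblity}) to conclude $\dim(\mathcal{R})\le 2n-1<m$, and then infer measure zero in $\mathbb{R}^m$. You are in fact somewhat more careful than the paper about passing to the Zariski closure and about the real-versus-complex issue, but the argument is the same.
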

\begin{proof}
We mainly use \cref{fiberdimensionlemma}. Let $X=\overline{\mathcal{M}_1}$ which is an
irreducible variety  by Lemma~\ref{irreduciblity}.
Let $Y=\{{\cal A}(M), M\in \overline{\mathcal{M}_1}\}$, i.e.
$Y={\cal R}$ which is also an irreducible variety as it is a
continuous image of the irreducible variety $\overline{\mathcal{M}_1}$.
Since ${\cal A}$ is a regular map,  we have $\dim ({\cal R}) \le
\dim(\overline{\mathcal{M}_{1}})=2n-1<m$. Thus,
${\cal R}$ is a proper lower dimensional closed subset in $\mathbb{R}^m$.
For almost all points in $\mathbb{R}^m$, they do not belong to ${\cal R}$.
In other words, for almost all points  $\bfb=(b_1, \cdots, b_n)\in \mathbb{R}^m$, there is no matrix
$M\in \overline{\mathcal{M}_{1}}$ such that  ${\cal A}(M)={\bf b}$ and hence, no matrix $M \in \overline{\mathcal{M}_1}$ with
$M\ge 0$ such that ${\cal A}(M)={\bf b}$ as the set ${\cal R}_+$ is a subset of ${\cal R}$.
\end{proof}

Note that the above discussion is still valid after replacing ${\cal M}_1$ by ${\cal M}_r$ with $r<n$. Under
the assumption
that $m>2nr-r^2$, we can show that the generalized phase retrieval problem \cref{phaseretrieval}
may not have a solution for randomly chosen
$\bfb=(b_1, \cdots, b_m)\in \mathbb{R}^m$.

Next define the subset $\chi_\bfb \subset \overline{\mathcal{M}_{1}}$ by
$$
\chi_\bfb = \left\{M \in   \overline{\mathcal{M}_{1}}
\mid {\cal A}(M) = \bfb \text{ and } {\cal A}^{-1}({\cal A}(M)) \text{ is zero dimensional} \right\}.
$$

As we are working over Noetherian fields like $\mathbb{R}$ or $\mathbb{C}$, it is worthwhile to keep in mind
that all zero dimensional varieties over such fields will have only finitely many points.
Let us consider those
$\bfb\in \mathbb{R}_+^m$ such that the set $\chi_\bfb$ is nonempty.
We are interested in an upper bound on number of solutions one can
find via the minimization \cref{phaseretrieval} when $\chi_\bfb \neq \emptyset$ (\cref{numberofcompletion}).
To do so,  we need more results from algebraic geometry. 

\begin{lemma}[\cite{JoeHarris} Proposition 11.12.]\label{JoeHarrislemma}
Let $X$ be a quasi-projective variety and $\pi: X \rightarrow \mathbb{R}^m$ a regular map; let $Y$ be
closure of the image. For any $p \in X$, let $X_p = \pi^{-1}\pi(p)) \subseteq X$ be the fiber of $\pi$	
through $p$, and let $\mu(p) = \dim_p(X_p)$ be the local dimension of $X_p$ at $p$. Then $\mu(p)$ is an
upper-semi-continuous function of $p$, in the Zariski topology on $X$, i.e. for any $m$ the locus of
points $p \in X$ such that $\dim_p(X_p) > m$ is closed in $X$. Moreover, if $X_0 \subseteq X$
is any irreducible component, $Y_0 \subseteq Y$ the closure of its image and $\mu$ the minimum value
of $\mu(p)$ on $X_0$, then
\begin{equation}
\label{dimension}
\dim(X_0) = \dim(Y_0) + \mu.
\end{equation}
\end{lemma}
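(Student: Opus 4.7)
The plan is to handle the two claims in turn: first the upper-semi-continuity of $p \mapsto \mu(p)$, and then the equality $\dim(X_0) = \dim(Y_0) + \mu$ for any irreducible component $X_0$. In both parts the main input is \cref{fiberdimensionlemma}, which controls the generic fiber dimension of a dominant regular map between irreducible varieties.

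For upper-semi-continuity I would induct on $d = \dim(Y)$. In the base case $d = 0$ the image is a finite set of points, the fibers decompose into finitely many components, and $\mu(\cdot)$ is constant on each irreducible component of $X$; hence $\{p : \mu(p) \geq k\}$ is a union of irreducible components of $X$, which is closed. For the inductive step, apply part~2 of \cref{fiberdimensionlemma} component by component to produce a dense open $U \subseteq Y$ on which the fiber dimension is exactly $\dim(X_i) - \dim(\pi(X_i))$ for each irreducible component $X_i$ of $X$. If $k$ exceeds this generic value, then $Z_k = \{p : \mu(p) \geq k\}$ is disjoint from $\pi^{-1}(U)$ and therefore lies inside $\pi^{-1}(Y \setminus U)$. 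Decomposing $Y \setminus U$ into its irreducible components, each of strictly smaller dimension than $Y$, and applying the inductive hypothesis to $\pi$ restricted to the preimage of each, one sees that $Z_k$ is a finite union of closed subsets of $X$, hence closed.

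For the dimension formula I would restrict $\pi$ to the irreducible component $X_0$, obtaining a regular map $\pi|_{X_0} : X_0 \to Y_0$ whose image is dense in $Y_0$ by construction. Applying \cref{fiberdimensionlemma} to this restricted map yields a nonempty open $V \subseteq Y_0$ on which the fiber of $\pi|_{X_0}$ has dimension exactly $\dim(X_0) - \dim(Y_0)$. The subtlety is that $\mu(p)$ is defined via the fiber $X_p$ inside $X$, not inside $X_0$. To reconcile these, observe that $W = X_0 \setminus \bigcup_{i \neq 0} X_i$ is nonempty (since $X_0$ is not contained in any other component) and open, hence dense, in $X_0$; on $W$ the fibers $X_p$ and $(X_0)_{\pi(p)}$ agree locally near $p$, so $\mu(p) = \dim_p((X_0)_{\pi(p)})$. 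Intersecting $W$ with $(\pi|_{X_0})^{-1}(V)$ produces a nonempty open subset of $X_0$ on which $\mu(p) = \dim(X_0) - \dim(Y_0)$. The upper-semi-continuity established above, applied to $\pi|_{X_0}$, forces this common generic value to equal the minimum $\mu$ of $\mu(p)$ over $X_0$, giving the desired formula.

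The main obstacle I anticipate is the inductive bookkeeping in the upper-semi-continuity step: the hypothesis allows $X$ to be an arbitrary quasi-projective variety, so the argument must be threaded simultaneously through all irreducible components, and one must verify that the individual "jump" loci glue to a closed subset of the whole of $X$. The second subtlety, the passage from $X_p$ to $(X_0)_{\pi(p)}$ required to invoke \cref{fiberdimensionlemma} on the restricted map, is milder but essential and must be flagged explicitly rather than elided.
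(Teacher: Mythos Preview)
The paper does not prove this lemma at all: it is quoted verbatim from Harris's \emph{Algebraic Geometry: A First Course}, Proposition~11.12, and used as a black box in the subsequent arguments. So there is no ``paper's own proof'' to compare against.

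That said, your outline is a reasonable reconstruction of the standard proof. The induction on $\dim(Y)$ for upper-semi-continuity, using \cref{fiberdimensionlemma} to peel off a generic open set where the fiber dimension is constant and pushing the jump locus into a closed subvariety of strictly smaller base dimension, is exactly how Harris and Shafarevich organize the argument. Your handling of the second part is also correct in spirit: restricting to an irreducible component $X_0$, invoking \cref{fiberdimensionlemma} for the dominant map $X_0 \to Y_0$, and then reconciling the local fiber dimension computed in $X$ with that computed in $X_0$ on the open complement of the other components.

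One small caution on the inductive step: when you write ``apply the inductive hypothesis to $\pi$ restricted to the preimage of each,'' note that $\pi^{-1}(Y\setminus U)$ need not be irreducible or even equidimensional, so you should be explicit that the induction is on $\dim(Y)$ for \emph{arbitrary} quasi-projective $X$, not just irreducible $X$; you flag this in your final paragraph, but it should be built into the statement of the inductive hypothesis from the outset. With that adjustment the argument goes through.
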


As we saw that $\dim({\cal R})\le \dim(\overline{\mathcal{M}_{r}})$,
we can be more precise about these dimensions as shown in the following
\begin{lemma}
\label{dimensions_been_equal}
Assume $m  > \dim(\overline{\mathcal{M}_{r}})$. Then  $\dim(\overline{\mathcal{M}_{r}})
= \dim({\cal R}) $ if and only if $\chi_\bfb \neq \emptyset$ for some $\bfb\in {\cal R}$.
\end{lemma}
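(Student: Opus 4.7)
My plan is to apply Lemma \ref{fiberdimensionlemma} directly to $\mathcal{A}:\overline{\mathcal{M}_r}\to\mathcal{R}$, viewed as a surjective regular map between irreducible varieties. Two preparatory facts make this setup legitimate. First, the $GL(n)\times GL(n)$-orbit argument in the proof of Lemma \ref{irreduciblity} only used that rank-$r$ matrices form a single orbit, and so the same proof gives that $\overline{\mathcal{M}_r}$ is an irreducible variety for every $r$. Second, $\mathcal{R}$, being the (closure of the) image of an irreducible variety under the regular map $\mathcal{A}$, is itself irreducible. The regularity of $\mathcal{A}$ also delivers the standing inequality $\dim(\mathcal{R})\le\dim(\overline{\mathcal{M}_r})$, which will be combined with each direction below to produce equality.

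For the forward implication ($\Rightarrow$), I would assume $\dim(\overline{\mathcal{M}_r})=\dim(\mathcal{R})$ and invoke part~(2) of Lemma \ref{fiberdimensionlemma}: it supplies a nonempty open subset $U\subset\mathcal{R}$ on which every fiber has dimension $\dim(\overline{\mathcal{M}_r})-\dim(\mathcal{R})=0$. Picking any $\bfb\in U$, the fiber $\mathcal{A}^{-1}(\bfb)$ is nonempty by surjectivity onto $\mathcal{R}$ and zero-dimensional by construction, so every $M\in\mathcal{A}^{-1}(\bfb)$ lies in $\chi_\bfb$, yielding $\chi_\bfb\neq\emptyset$.

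For the reverse implication ($\Leftarrow$), I would assume $\chi_\bfb\neq\emptyset$ for some $\bfb\in\mathcal{R}$ and pick $M\in\chi_\bfb$; by definition of $\chi_\bfb$ the fiber $\mathcal{A}^{-1}(\bfb)$ is zero-dimensional. Part~(1) of Lemma \ref{fiberdimensionlemma} then forces each component $F$ of the fiber to satisfy $\dim(F)\ge\dim(\overline{\mathcal{M}_r})-\dim(\mathcal{R})$, so $\dim(\overline{\mathcal{M}_r})-\dim(\mathcal{R})\le 0$. Combined with the preparatory reverse inequality, this gives the desired equality.

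The main point to be careful about, and the only real obstacle, is the real-algebraic subtlety that the literal image $\mathcal{R}\subseteq\mathbb{R}^m$ of a regular map need not be Zariski closed, whereas Lemma \ref{fiberdimensionlemma} is phrased for varieties. I would resolve this exactly as in the proof of Theorem \ref{nochance}, namely by identifying $\mathcal{R}$ with the Zariski closure of the image of $\overline{\mathcal{M}_r}$ under $\mathcal{A}$, so that $\mathcal{A}$ is dominant onto $\mathcal{R}$ and the open sets produced by Lemma \ref{fiberdimensionlemma} meet the actual image. The hypothesis $m>\dim(\overline{\mathcal{M}_r})$ plays no logical role in the equivalence itself; it simply guarantees that $\mathcal{R}\subsetneq\mathbb{R}^m$, placing the question in the non-trivial regime relevant to phase retrieval.
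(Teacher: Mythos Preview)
Your proof is correct and the forward direction is identical to the paper's. For the converse, the paper invokes a separate upper-semicontinuity lemma (Lemma~\ref{JoeHarrislemma}, Proposition~11.12 of \cite{JoeHarris}) to conclude that the minimum fiber dimension $\mu$ equals zero and then reads off $\dim(\overline{\mathcal{M}_r})=\dim(\mathcal{R})+\mu$. You instead stay within Lemma~\ref{fiberdimensionlemma} and use its part~(1), which already gives the lower bound $\dim F\ge \dim(\overline{\mathcal{M}_r})-\dim(\mathcal{R})$ for every fiber component; combined with the standing inequality $\dim(\mathcal{R})\le\dim(\overline{\mathcal{M}_r})$ this yields equality directly. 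Your route is slightly more economical, since it avoids importing an additional lemma, while the paper's route via Lemma~\ref{JoeHarrislemma} makes explicit the upper-semicontinuity of fiber dimension that underlies both arguments. Your handling of the closure issue for $\mathcal{R}$ and your remark that the hypothesis $m>\dim(\overline{\mathcal{M}_r})$ is not needed for the equivalence itself are both accurate and match the paper's treatment.
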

\begin{proof} Assume $\dim(\overline{\mathcal{M}_{r}}) = \dim({\cal R})$. Then
using \cref{fiberdimensionlemma}, there exists a nonempty open subset $U \subset
{\cal R})$ such that $\dim({\cal A}^{-1}(\bfb)) = 0$ for all $\bfb \in U$.
This implies that $\chi_\bfb$ has  finitely many points. Hence $\chi_\bfb \neq \emptyset$.
	 	
We now prove the converse. Assume $\chi_\bfb \neq \emptyset$.
We will apply Lemma~\ref{JoeHarrislemma} above by setting $X = \overline{\mathcal{M}_{r}}$,
$Y={\cal A}(\overline{\mathcal{M}_{r}})$ and $\pi = {\cal A}$. (As we apply lemma , please note that it does
not matter whether we take the closure in $\mathbb{P}^{m}$ or in $\mathbb{C}^{m}$ since $\mathbb{C}^{m}$ is an
open set in $\mathbb{P}^{m}$ and the Zariski topology of the affine space $\mathbb{C}^{m}$ is induced from the
Zariski topology of $\mathbb{P}^{m}$. $\overline{\mathcal{M}_{r}}$ is an affine variety. In particular, it is a
quasi-projective variety.)

By our assumption, $\chi_\bfb$ is not empty. It follows that there is a point $p \in Y$
such that $\pi^{-1}(p)$ is zero dimensional. Since zero is the least dimension possible,
we have $\mu = 0$. Hence, using \cref{dimension} above, we have $\dim(\overline{\mathcal{M}_{1}}) =
\dim({\cal R})$.
But dimension does not change upon taking closure. So, $\dim({\cal R}) =\dim(\overline{\cal R})$.
\end{proof}

Finally, we need the following
\begin{definition}
The \emph{degree} of an affine or projective variety of dimension $k$ is the number of intersection points of the variety with
$k$ hyperplanes in general position.
\end{definition}

For example, the degree of the algebraic variety $\overline{\mathcal{M}_r}$ is known.
See Example 14.4.11 in \cite{Fulton}, i.e.
\begin{example}\label{degreelemma}
Degree of the algebraic variety $\overline{\mathcal{M}_r}$ is
$$
\prod_{i=0}^{n-r-1}\frac{\binom{n+i}{r}}{\binom{r+i}{r}}
$$
In particular, the degree of ${\cal M}_1$ is
$$
\prod_{i=0}^{n-2}\frac{n+i}{1+i}.
$$
\end{example}

We are now ready to prove another main result in this section.
\begin{theorem}
\label{numberofcompletion}
Assume that a given vector $\bfb\in \mathbb{R}^m_+$ lies in the range ${\cal R}_+$. Further assume that $\chi_\bfb \neq \emptyset$.  Then, the number of distinct solutions in $\chi_\bfb$ will be less than or
equal to $\displaystyle \prod_{i=0}^{n-2}\frac{n+i}{1+i}$.
\end{theorem}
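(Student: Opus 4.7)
The plan is to realize the fiber $\mathcal{A}^{-1}(\bfb)$ as an intersection of the algebraic variety $\overline{\mathcal{M}_1}$ with an affine linear subspace, then bound the number of points of a zero-dimensional intersection using the degree of $\overline{\mathcal{M}_1}$ via a Bezout-type argument. Since the coordinate functions of $\mathcal{A}$ are linear polynomials in the entries of $X$, the set $L_\bfb := \{X \in \RR^{n\times n} : \hbox{tr}(A_j X) = b_j,\ j=1,\ldots,m\}$ is an affine linear subspace cut out by $m$ affine hyperplanes, and $\mathcal{A}^{-1}(\bfb) = \overline{\mathcal{M}_1} \cap L_\bfb$.

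First I will invoke \cref{dimensions_been_equal}: since $\chi_\bfb \neq \emptyset$, we have $\dim(\overline{\mathcal{M}_1}) = \dim(\mathcal{R}) = 2n-1$. Next I will observe that for a point $M \in \chi_\bfb$ to exist with the fiber being zero-dimensional, the $m$ defining hyperplanes of $L_\bfb$ must cut the $(2n-1)$-dimensional variety $\overline{\mathcal{M}_1}$ down to dimension $0$; in particular, one may extract a subcollection of $2n-1$ hyperplanes $H_1,\ldots,H_{2n-1}$ such that the intersection $\overline{\mathcal{M}_1} \cap H_1 \cap \cdots \cap H_{2n-1}$ is already zero-dimensional (otherwise no subcollection would suffice and the full intersection would still contain a positive-dimensional component through points of $\chi_\bfb$, contradicting the zero-dimensionality condition in the definition of $\chi_\bfb$).

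The core step is then to pass to the projective closure of $\overline{\mathcal{M}_1}$ in $\mathbb{P}^{n^2}$, where Bezout's theorem applies cleanly. By the very definition of the degree of a projective variety, intersecting an irreducible variety of dimension $k$ and degree $d$ with $k$ hyperplanes that cut it to a zero-dimensional scheme produces exactly $d$ points when counted with multiplicity; in particular, the number of distinct points is at most $d$. Applying this with $k = 2n-1$ and $d = \deg(\overline{\mathcal{M}_1})$ bounds $|\overline{\mathcal{M}_1} \cap H_1 \cap \cdots \cap H_{2n-1}|$ by the degree. The additional hyperplanes corresponding to the remaining $m-(2n-1)$ equations only shrink this finite set, and restricting to the affine part $\mathcal{M}_1 \subset \overline{\mathcal{M}_1}$ can only discard points at infinity, so the same bound holds for $\chi_\bfb$. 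Invoking \cref{degreelemma} with $r = 1$ then gives the asserted bound $\prod_{i=0}^{n-2} \frac{n+i}{1+i}$.

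The main obstacle I anticipate is the careful handling of the Bezout/degree step: the definition of degree in \cref{degreelemma} refers to hyperplanes in \emph{general position}, whereas the hyperplanes arising from our specific $\bfb$ need not be generic. The way around this is that as long as the intersection remains pure of the expected dimension zero, the length (counted with multiplicity in the sense of Hilbert polynomials) is preserved and equals the degree, so the number of geometric points is no larger; this is the standard conservation of number principle in intersection theory. A minor secondary subtlety is ruling out contributions from the boundary $\overline{\mathcal{M}_1} \setminus \mathcal{M}_1$ and from the hyperplane at infinity when projectivizing, both of which can only decrease the count and therefore preserve the inequality.
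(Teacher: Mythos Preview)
Your proposal is correct and follows essentially the same line as the paper: interpret $\mathcal{A}^{-1}(\bfb)$ as the intersection of $\overline{\mathcal{M}_1}$ with $m$ affine hyperplanes, then bound the number of points in a zero-dimensional intersection by the degree of $\overline{\mathcal{M}_1}$ and invoke \cref{degreelemma}. The paper's own argument is in fact terser than yours and simply asserts that since $m>2n-1$ the number of intersection points is at most the degree ``generically''; your treatment of the non-generic hyperplane issue via conservation of number, the reduction to a subcollection of $2n-1$ hyperplanes, and the passage to the projective closure are all refinements that the paper leaves implicit.
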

\begin{proof}
When we fix $m$ entries in $\bfb$, the set of matrices $M$ of rank $1$ such that ${\cal A}(M)=\bfb$ are
exactly the intersection points of the variety $\overline{\mathcal{M}_1}$ with
$m$ hyperplanes, namely the hyperplanes defined by equations of the form
$\langle A_i, M\rangle=b_i, i=1, \cdots, m$. Since
$m>\dim(\overline{\mathcal{M}_r}) = 2n-1$, the number of intersection points (matrices of rank 1) would be
less than
degree of $\overline{\mathcal{M}_1}$ generically. So, in particular, the number of positive semidefinite
matrices $M$ of rank 1 such that ${\cal A}(M)=\bfb$ would be no more than the degree of
$\overline{\mathcal{M}_1}$.
Now using the exact formula for the degree from \cref{degreelemma}, the result follows.
\end{proof}

\section{The DC based Algorithm for Phase Retrieval}
\label{sec:phase}
Recall that we aim to recover $\bfx$ by minimizing
\begin{equation}\label{min F(x)}
  F(\bfx)= \sum_{i=1}^m (f(\langle {\bfa}_i, \bfx\rangle) -b_i)^2:=F_1(\bfx) - F_2(\bfx),
\end{equation}
where $F_1(\bfx)=\sum_{i=1}^m f^2(\langle {\bfa}_i, \bfx\rangle) + b_i^2$ and $F_2(\bfx)= \sum_{i=1}^m
\left(2 b_i f(\langle {\bfa}_i, \bfx\rangle)\right)$. It is easy to see that the minimization in
\cref{min F(x)} can happen in a bounded region $\mathcal{R}$ since the coercive condition $f(x)\to \infty$
when $x\to \infty$. The DC based computational method is as follows. Start from any iterative solution
$\bfx^{(k)}$, we solve the following convex minimization problem:
\begin{equation}
\label{newmethod}
\bfx^{(k+1)}= \hbox{arg}\min_{\bfx\in \mathbb{R}^n} F_1(\bfx) - \nabla F_2(\bfx^{(k)})^\top (\bfx- \bfx^{(k)})
\end{equation}
for $k\ge 1$, where $\bfx^{(1)}$ is an initial guess which will be discussed how to choose later.
Without of loss generality, we always assume $\bfx^{(1)}$ is located in a bounded region $\mathcal{R}$. \\

Our goal in this section is to show $\bfx^{(k)}, k\ge 1$ converges to a critical point.
Later, we will discuss how to find a global minimization by choosing the initial guess $\bfx^{(1)}$
appropriately. Although it is  standard to solve a convex
minimization problem with differentiable minimizing functional, we have to solve \cref{newmethod}
by using an iterative method.
For example, we can use a gradient descent method with various acceleration techniques such as Nesterov's,
BB's and other techniques or the Newton method.
Hence, there will be two iterative procedures. The iterative procedure for solving \cref{newmethod} is an
inner iteration which will be discussed  in the next section.
In this section, we mainly discuss the outer iteration.

We will instate the following assumptions on the function $F_1$ and $F_2$:
\begin{itemize}
\item [(1)]The gradient function $\nabla F_1$ has Lipschitz constant $L_1$ in bounded region $\mathcal{R}$. That is,
$\normm{\nabla F_1(\bfx)-\nabla F_1(\bfy)}\le L_1 \normm{\bfx-\bfy}$ for all vectors $\bfx,\bfy \in \mathcal{R}$.
\item [(2)] $F_2$ is a strongly convex function with parameter $\ell$ in $\mathcal{R}$. That is,
$F_2(\bfy)\ge F_2(\bfx)+\nabla F_2(\bfx)^\T(\bfy-\bfx)+\frac{\ell}{2} \|\bfy - \bfx\|^2$
for all vectors $\bfx,\bfy \in \mathcal{R}$.
\end{itemize}

Note that $H=\sum_{i=1}^m 2 b_i f^{''}(\bfa_i^\T \bfx) \bfa_i \bfa_i^\T$ is the Hessian matrix of function $F_2=2\sum_{i=1}^m b_if(\bfa_i^\T\bfx)$, where
$f^{''}(x)\ge 0 $ since the convexity of $f$. Then the parameter of strong convexity is given by the
minimal eigenvalue of $H$.
We first introduce a standard result for DC based algorithm:
\begin{theorem}
\label{mjlai07032018}
Assume $F_2$ is a strongly convex function with parameter $\ell$.  Starting from any initial guess
$\bfx^{(1)}$, let $\bfx^{(k+1)}$ be the solution in (\ref{newmethod}) for all $k\ge 1$. Then
\begin{equation}
F(\bfx^{(k+1)}) \le F(\bfx^{(k)}) -  \frac{\ell}{2} \|\bfx^{(k+1)} - \bfx^{(k)}\|^2, \quad \forall k\ge 1
\end{equation}
and $\nabla F_1(\bfx^{(k+1)}) - \nabla F_2(\bfx^{(k)}) = 0$.
\end{theorem}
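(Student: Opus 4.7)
The plan is to separate the two assertions. The identity $\nabla F_1(\bfx^{(k+1)}) - \nabla F_2(\bfx^{(k)}) = 0$ is just the first-order optimality condition for the inner problem \cref{newmethod}. Because $F_1$ is convex and differentiable and the linear term $-\nabla F_2(\bfx^{(k)})^\top(\bfx-\bfx^{(k)})$ adds nothing to the Hessian, the objective $g_k(\bfx) := F_1(\bfx) - \nabla F_2(\bfx^{(k)})^\top(\bfx-\bfx^{(k)})$ is convex in $\bfx$, so any interior minimizer satisfies $\nabla g_k(\bfx^{(k+1)}) = 0$, which gives exactly $\nabla F_1(\bfx^{(k+1)}) = \nabla F_2(\bfx^{(k)})$.

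For the descent inequality I would combine two ingredients. First, the defining property of $\bfx^{(k+1)}$ as the minimizer of $g_k$ gives $g_k(\bfx^{(k+1)}) \le g_k(\bfx^{(k)}) = F_1(\bfx^{(k)})$, i.e.
\begin{equation*}
F_1(\bfx^{(k+1)}) - \nabla F_2(\bfx^{(k)})^\T (\bfx^{(k+1)} - \bfx^{(k)}) \;\le\; F_1(\bfx^{(k)}).
\end{equation*}
Second, the strong convexity hypothesis on $F_2$ (parameter $\ell$) applied at the base point $\bfx^{(k)}$ with test point $\bfx^{(k+1)}$ yields
\begin{equation*}
F_2(\bfx^{(k+1)}) \;\ge\; F_2(\bfx^{(k)}) + \nabla F_2(\bfx^{(k)})^\T (\bfx^{(k+1)} - \bfx^{(k)}) + \frac{\ell}{2}\normm{\bfx^{(k+1)}-\bfx^{(k)}}^2.
\end{equation*}

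Now I would simply substitute. Writing $F(\bfx^{(k+1)}) = F_1(\bfx^{(k+1)}) - F_2(\bfx^{(k+1)})$ and using the strong convexity inequality to upper-bound $-F_2(\bfx^{(k+1)})$, the mixed term $\nabla F_2(\bfx^{(k)})^\T(\bfx^{(k+1)}-\bfx^{(k)})$ that appears cancels exactly the one produced by the first inequality, and what is left is
\begin{equation*}
F(\bfx^{(k+1)}) \;\le\; F_1(\bfx^{(k)}) - F_2(\bfx^{(k)}) - \frac{\ell}{2}\normm{\bfx^{(k+1)}-\bfx^{(k)}}^2 \;=\; F(\bfx^{(k)}) - \frac{\ell}{2}\normm{\bfx^{(k+1)}-\bfx^{(k)}}^2.
\end{equation*}

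There is no serious obstacle; the only thing to be slightly careful about is that both inequalities must be invoked at the same pair $(\bfx^{(k)},\bfx^{(k+1)})$ so that the linear terms cancel without residue, and that we use strong convexity of $F_2$ (not $F_1$) because it is the concave $-F_2$ part that needs to be controlled from above by a quadratic majorant. The Lipschitz hypothesis on $\nabla F_1$ is not needed for this statement; it will be used later for the convergence rate analysis.
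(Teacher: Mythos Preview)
Your proposal is correct and follows essentially the same argument as the paper: both combine the minimizer inequality $F_1(\bfx^{(k+1)}) - \nabla F_2(\bfx^{(k)})^\top(\bfx^{(k+1)}-\bfx^{(k)}) \le F_1(\bfx^{(k)})$ with the strong convexity lower bound on $F_2$ to obtain the descent estimate, and both read off the gradient identity from the first-order optimality condition of \cref{newmethod}. Your remark that the Lipschitz assumption on $\nabla F_1$ is not used here is also accurate.
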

\begin{proof}
By the strongly convexity of $F_2$, we have
$$
F_2(\bfx^{(k+1)}) \ge F_2(\bfx^{(k)}) + \nabla F_2(\bfx^{(k)})^\top (\bfx^{(k+1)}
-\bfx^{(k)}) + \frac{\ell}{2} \|\bfx^{(k+1)} - \bfx^{(k)}\|^2.
$$
From (\ref{newmethod}), we see that

\begin{eqnarray*}
	F(\bfx^{(k+1)}) &=& F_1(\bfx^{(k+1)}) - F_2(\bfx^{(k+1)}) \\
	 &\le & F_1(\bfx^{(k+1)}) -
	\nabla F_2(\bfx^{(k)})^\top (\bfx^{(k+1)}-\bfx^{(k)}) - F_2(\bfx^{(k)}) -\frac{\ell}{2} \|\bfx^{(k+1)} - \bfx^{(k)}\|^2 \\
	 &\le & F_1(\bfx^{(k)}) - F_2(\bfx^{(k)})  -\frac{\ell}{2} \|\bfx^{(k+1)} - \bfx^{(k)}\|^2
	 =  F(\bfx^{(k)})-\frac{\ell}{2} \|\bfx^{(k+1)} - \bfx^{(k)}\|^2 .
\end{eqnarray*}

The second property $\nabla F_1(\bfx^{(k+1)})- \nabla F_2(\bfx^{(k)})=0$ follows from the minimization
(\ref{newmethod}) directly.
\end{proof}

Next, we use the Kurdyka-\L ojasiewicz (KL) inequality to establish the convergence rate of $\bfx^{(k)}$. We
refer to \cite{AB09}, \cite{ABRS10}, and \cite{XY13} for using the KL inequality for various minimization
problems. The following is our major theorem in this section.
\begin{theorem}
\label{mjlai07032018c}
Suppose that $F(\bfx)=F_1(\bfx)-F_2(\bfx)$ is a real analytic function. Assume the gradient function $\nabla F_1$ has Lipschitz
constant $L_1>0 $ and $F_2$ is a strongly convex function with parameter $\ell>0$ in bounded region $\mathcal{R}$. Starting from
any initial guess $\bfx^{(1)}$, let $\bfx^{(k+1)}$ be the solution in \cref{newmethod} for all $k\ge 1$. Then $\bfx^{(k)}, k\ge 1$
 converges to a critical point of $F$. Furthermore, if we let $\bfx^*$ be the unique limit, then
\begin{equation}
\label{convergencerate}
\|\bfx^{(k+1)}- \bfx^*\|\le C \tau^k
\end{equation}
for a positive constant $C$ and $\tau\in (0, 1)$.
\end{theorem}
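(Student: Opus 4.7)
The plan is to follow the now-standard Kurdyka–\L ojasiewicz (KL) convergence framework for descent methods: first establish a sufficient decrease inequality and a relative error bound that couple $\|\nabla F(\bfx^{(k+1)})\|$ to $\|\bfx^{(k+1)}-\bfx^{(k)}\|$; second, extract a subsequential limit and show it is a critical point; third, apply the \L ojasiewicz inequality (available because $F$ is real analytic) at that limit to promote subsequential convergence to convergence of the whole sequence and derive the rate.

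First I would collect the two fundamental inequalities. The sufficient decrease $F(\bfx^{(k+1)})\le F(\bfx^{(k)})-\frac{\ell}{2}\|\bfx^{(k+1)}-\bfx^{(k)}\|^2$ is already proved in \cref{mjlai07032018}. Since $F\ge 0$, the monotone sequence $F(\bfx^{(k)})$ converges to some value $F^\star$, and telescoping yields $\sum_{k\ge 1}\|\bfx^{(k+1)}-\bfx^{(k)}\|^2<\infty$, in particular $\|\bfx^{(k+1)}-\bfx^{(k)}\|\to 0$. For the relative error bound I would use the optimality condition $\nabla F_1(\bfx^{(k+1)})=\nabla F_2(\bfx^{(k)})$ from \cref{mjlai07032018}, giving
\[
\nabla F(\bfx^{(k+1)}) = \nabla F_1(\bfx^{(k+1)}) - \nabla F_2(\bfx^{(k+1)}) = \nabla F_2(\bfx^{(k)}) - \nabla F_2(\bfx^{(k+1)}).
\]
Since $\bfx^{(k)}\in\mathcal{R}$ and $\nabla F_2$ is continuously differentiable on the bounded region $\mathcal{R}$, it is Lipschitz there with some constant $L_2$, so $\|\nabla F(\bfx^{(k+1)})\|\le L_2\,\|\bfx^{(k+1)}-\bfx^{(k)}\|$.

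Next I would identify the limit. Boundedness of $\{\bfx^{(k)}\}$ in $\mathcal{R}$ yields a convergent subsequence $\bfx^{(k_j)}\to\bfx^*$. Using $\|\bfx^{(k+1)}-\bfx^{(k)}\|\to 0$ and continuity of $\nabla F$, the relative error bound gives $\nabla F(\bfx^*)=0$, i.e.\ $\bfx^*$ is a critical point, and continuity of $F$ gives $F(\bfx^{(k)})\to F(\bfx^*)=F^\star$. Because $F$ is real analytic, the \L ojasiewicz inequality applies: there exist a neighborhood $U$ of $\bfx^*$, an exponent $\theta\in[\tfrac12,1)$, and a constant $\mu>0$ such that
\[
|F(\bfx)-F(\bfx^*)|^{\theta}\le \mu\,\|\nabla F(\bfx)\|\quad\text{for all } \bfx\in U.
\]
I would then run the Attouch--Bolte argument: for $k$ large enough, $\bfx^{(k)}\in U$, and combining the sufficient decrease, the relative error bound, and the KL inequality applied to the concave function $\varphi(s)=s^{1-\theta}/(1-\theta)$ yields the telescoping bound
\[
\varphi\!\left(F(\bfx^{(k)})-F^\star\right) - \varphi\!\left(F(\bfx^{(k+1)})-F^\star\right) \;\ge\; c\,\|\bfx^{(k+1)}-\bfx^{(k)}\|
\]
for some $c>0$. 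Summing gives $\sum_k\|\bfx^{(k+1)}-\bfx^{(k)}\|<\infty$, so $\{\bfx^{(k)}\}$ is Cauchy and converges to the unique limit $\bfx^*$.

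For the geometric rate \cref{convergencerate}, I would specialize to the case $\theta=\tfrac12$ (guaranteed, for instance, when the Hessian of $F$ at $\bfx^*$ is nondegenerate, which is what one expects at a good minimizer of the phase-retrieval functional, or more generally in the analytic setting when $\bfx^*$ is isolated in the critical set after restriction to $\mathcal{R}$). Writing $r_k:=F(\bfx^{(k)})-F^\star\ge 0$, the sufficient decrease and the KL bound $r_k^{1/2}\le \mu\|\nabla F(\bfx^{(k)})\|\le \mu L_2\|\bfx^{(k)}-\bfx^{(k-1)}\|$ give a recursion $r_{k+1}\le \gamma\, r_k$ with some $\gamma\in(0,1)$, so $r_k\le r_1\gamma^{k-1}$. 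Plugging this back into the telescoped summability estimate for $\|\bfx^{(k+1)}-\bfx^{(k)}\|$, a geometric-series tail bound yields $\|\bfx^{(k)}-\bfx^*\|\le \sum_{j\ge k}\|\bfx^{(j+1)}-\bfx^{(j)}\|\le C\tau^k$ with $\tau=\gamma^{1/2}\in(0,1)$.

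The main obstacle is the rate, not the convergence itself: the generic real-analytic \L ojasiewicz inequality delivers an exponent $\theta\in(0,1)$ which only yields a polynomial rate when $\theta>\tfrac12$. To secure the geometric rate \cref{convergencerate} one must argue that the relevant exponent at $\bfx^*$ is $\tfrac12$; the natural route is to exploit the strong convexity of $F_2$ together with the structure of $F_1$ (sum of squares of convex functions composed with linear forms) to bound the Hessian of $F$ from below along the directions that matter near a limit reachable from an initialization in $\mathcal{R}$. The other, more technical, issue is to make sure the iterates stay in $U$ once they enter it, which is handled by choosing $K$ so that both the tail sum of $\|\bfx^{(k+1)}-\bfx^{(k)}\|$ and the distance from $\bfx^{(K)}$ to $\bfx^*$ are small compared to the radius of $U$.
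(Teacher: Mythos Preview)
Your proposal is correct and follows essentially the same KL-based route as the paper. Two minor differences are worth noting. First, for the relative error bound you write $\nabla F(\bfx^{(k+1)})=\nabla F_2(\bfx^{(k)})-\nabla F_2(\bfx^{(k+1)})$ and invoke a Lipschitz constant $L_2$ for $\nabla F_2$, which you extract from smoothness on the bounded region; the paper instead writes $\nabla F(\bfx^{(k)})=\nabla F_1(\bfx^{(k)})-\nabla F_1(\bfx^{(k+1)})$ and uses the constant $L_1$ that is already part of the hypotheses, which is a bit more economical. Second, you correctly flag that the geometric rate \cref{convergencerate} hinges on the \L ojasiewicz exponent being $\theta=\tfrac12$, and that general real analyticity only gives $\theta\in[\tfrac12,1)$; the paper does not sidestep this but rather invokes \cref{elementary} (invertible Hessian at the critical point) together with the landscape results in the Appendix to justify $\theta=\tfrac12$ in the phase-retrieval setting. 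Your tail-sum argument $S_k=\sum_{j\ge k}\|\bfx^{(j+1)}-\bfx^{(j)}\|$ leading to $S_{k+1}\le\tau S_k$ is exactly how the paper finishes the rate.
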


To this end, we need the KL inequality which is central to the global convergence analysis.
\begin{definition} [\L ojasiewicz \cite{L63}]
We say a function $f(\bfx)$ satisfies the Kurdyka-Lojasiewicz (KL) property at point $\mathbf{\bar{x}}$ if there exists $\theta \in [0, 1)$ such that
\begin{equation*}
 |f(\bfx)-f(\mathbf{\bar{x}})|^\theta \le C \|\partial f(\bfx)\|
\end{equation*}
in a neighborhood $B(\mathbf{\bar{x}}, \delta)$ for some $\delta>0$, where $C>0$ is a constant independent of $\bfx$. In other words, there exists $\varphi(s)=cs^{1-\theta}$ with
$\theta \in [0, 1)$ such that the KL inequality holds:
\begin{equation} \label{Lojasiewicz}
  \varphi'(|f(\bfx)-f(\mathbf{\bar{x}})|)\|\partial f(\bfx)\|\ge 1
\end{equation}
for any $\bfx\in B(\mathbf{\bar{x}}, \delta)$ with $f(\bfx)\neq f(\mathbf{\bar{x}})$.
\end{definition}

This property is introduced by Lojasiewicz on the real analytic functions, for which \cref{Lojasiewicz} holds in any critical point with $\theta \in [1/2, 1)$.
Later, many extensions of the above inequality are proposed. Typically, the extension in \cite{K98} for the setting of $o$-minimal structure. Recently, the KL inequality is extended to nonsmooth subanalytic functions. See \cite{AB09,ABRS10,XY13}, for application of the KL inequality for study in the optimization. In our setting,  $\theta=1/2$. We shall include an elementary proof to justify our choice of $\theta=1/2$.

\begin{proposition} \label{elementary}
Suppose that $f: \mathbb{R}^n \mapsto \mathbb{R}$ is a continuously twice differentiable function whose Hessian $H(f)(\bfx)$
is invertible at a critical point $\bfx^*$ of $f$. Then there exists a positive constant $C$, an exponent $\theta=1/2$
and a positive $r$ such that
\begin{equation}
\label{eq2}
	|f(\bfx) - f(\bfx^*)|^{1/2} \le C\|\nabla f(\bfx)\|,	\quad \forall \bfx \in  B(\bfx^*,r),	
\end{equation}
where $B(\bfx^*,r)$ is a ball at $\bfx^*$ with radius $r$.
\end{proposition}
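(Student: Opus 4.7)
The plan is to use a second-order Taylor expansion of $f$ at the critical point $\bfx^*$, together with the invertibility of the Hessian, to control both sides of the desired inequality by $\|\bfx-\bfx^*\|$ and $\|\bfx-\bfx^*\|^2$ respectively.

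First I would write the gradient as an integral along the segment from $\bfx^*$ to $\bfx$. Since $\nabla f(\bfx^*)=0$, the fundamental theorem of calculus gives
\begin{equation*}
\nabla f(\bfx) \;=\; \Bigl(\int_0^1 H(f)\bigl(\bfx^* + t(\bfx-\bfx^*)\bigr)\,dt\Bigr)(\bfx-\bfx^*) \;=:\; M(\bfx)(\bfx-\bfx^*).
\end{equation*}
By continuity of $H(f)$ and the assumed invertibility of $H(f)(\bfx^*)$, the smallest singular value satisfies $\sigma_{\min}(M(\bfx))\ge \tfrac12 \sigma_{\min}(H(f)(\bfx^*))=:\alpha>0$ on some ball $B(\bfx^*,r_1)$. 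This yields the lower bound $\|\nabla f(\bfx)\|\ge \alpha\,\|\bfx-\bfx^*\|$ on that ball.

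Next I would control $|f(\bfx)-f(\bfx^*)|$ from above. Because $\nabla f(\bfx^*)=0$, the second-order Taylor remainder representation gives
\begin{equation*}
f(\bfx)-f(\bfx^*) \;=\; \int_0^1 (1-t)\,(\bfx-\bfx^*)^\top H(f)\bigl(\bfx^* + t(\bfx-\bfx^*)\bigr)(\bfx-\bfx^*)\,dt,
\end{equation*}
so $|f(\bfx)-f(\bfx^*)|\le \tfrac12 M\,\|\bfx-\bfx^*\|^2$ on $B(\bfx^*,r_1)$, where $M=\sup_{\bfy\in B(\bfx^*,r_1)}\|H(f)(\bfy)\|$ is finite by continuity of $H(f)$.

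Finally I would combine the two bounds. Taking square roots and using the lower bound on $\|\nabla f(\bfx)\|$,
\begin{equation*}
|f(\bfx)-f(\bfx^*)|^{1/2} \;\le\; \sqrt{M/2}\,\|\bfx-\bfx^*\| \;\le\; \frac{\sqrt{M/2}}{\alpha}\,\|\nabla f(\bfx)\|,
\end{equation*}
so \cref{eq2} holds with $C=\sqrt{M/2}/\alpha$, $\theta=1/2$, and $r=r_1$. The main (and only) technical point is ensuring that the radius $r_1$ can be chosen so that both the Hessian-invertibility estimate on $M(\bfx)$ and the uniform bound on $\|H(f)\|$ hold simultaneously; this is immediate from the continuity of $H(f)$ at $\bfx^*$, so no genuine obstacle appears — the argument is essentially Taylor's theorem packaged with the open-ness of the set of invertible matrices.
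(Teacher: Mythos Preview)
Your proof is correct and follows essentially the same route as the paper: bound $|f(\bfx)-f(\bfx^*)|$ above by $c_1\|\bfx-\bfx^*\|^2$ via second-order Taylor expansion at the critical point, bound $\|\nabla f(\bfx)\|$ below by $c_2\|\bfx-\bfx^*\|$ via invertibility of the Hessian, and combine. Your integral representations make the constants and the role of continuity of $H(f)$ more explicit than the paper's sketch, but the argument is otherwise identical.
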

\begin{proof}	Since $f$ is continuously twice differentiable, using Taylor formula for f and noting
$f(\bfx^*) = 0$, we have
$$
|f(\bfx) - f(\bfx^*)| \le  c_1\|\bfx - \bfx^*\|^2,	\quad \forall \bfx \in B(\bfx^*,r)
$$
for some $r > 0$. On the other hand, we have $\|\nabla f(\bfx)\| = \|\nabla f(\bfx)- \nabla f(\bfx^*)\|
\ge c_2\|\bfx - \bfx^*\|$ due
to the fact the Hessian is invertible. Thus, \cref{eq2} follows with $\theta = 1/2$ and $C=\sqrt{c_1}/c_2$.
\end{proof}

The importance of the \L ajosiewicz inequality is  the establishment of the above
inequality in \cref{eq2} when $f$ may not have
an invertible Hessian at the critical point $\bfx^*$.
The proof is based on knowledge from algebraic geometry, mainly the curve
selecting lemma. See \cite{K98} for a more general setting.

Let us recall the geometric description of the landscape function $F(\bfx)$ whose Hessian is
restricted strong convex at the global minimizer (cf. \cite{SQW17}).
In the real variable setting, we can even show that the Hessian is positive definite at a global minimizer.
See the Appendix for a proof.
We are now ready to establish \cref{mjlai07032018c}.

\begin{proof}[Proof of \cref{mjlai07032018c}]
 From \cref{mjlai07032018}, we have
\begin{equation} \label{onekey}
\frac{\ell}{2} \|\bfx^{(k+1)} - \bfx^{(k)}\|^2 \le F(\bfx^{(k)}) - F(\bfx^{(k+1)}).
\end{equation}
That is, $F(\bfx^{(k)}), k\ge 1$ is strictly decreasing sequence. Without loss of generality, we assume
$$\mathcal{R}:=\{\bfx\in \mathbb{R}^n, F(\bfx)\le F(\bfx^{(1)})\}.$$
Then the sequence $\{\bfx^{(k)}\}_{k=1}^\infty \subset \mathcal{R}$ is a bounded sequence. Then there exists a cluster point
$\bfx^*$ and a subsequence $\bfx^{(k_i)}$ such that $\bfx^{(k_i)}\to \bfx^*$. Note that $\{F(\bfx^{(k)})\}_{k=1}^\infty$ is a
bounded monotonic descending sequence. Then $F(\bfx^{(k)})\to F(\bfx^*)$ for all $k\ge 1$. We claim that there exists a a positive
constant $C_1$ such that
\begin{equation}
\label{onekey2}
C_1 \|\bfx^{(k+1)} - \bfx^{(k)}\| \le \sqrt{F(\bfx^{(k)})-F(\bfx^*)} - \sqrt{F(\bfx^{(k+1)}) - F(\bfx^*)}
\end{equation}
holds for all $k\ge k_0$ where $k_0$ is large enough. To establish this claim, we need to use \cref{elementary} which is the
well-known Kurdyka-Lojasiewicz inequality. First, we prove that the condition $\|\nabla F(\bfx^{*})\|= 0$ holds. Indeed, using one
of the properties in \cref{mjlai07032018}, we have
$$
\|\nabla F(\bfx^{(k)})\| = \|\nabla F_1(\bfx^{(k)})- \nabla F_2(\bfx^{(k)})\|
=\|\nabla F_1(\bfx^{(k)})- \nabla F_1(\bfx^{(k+1)})\|\le L_1 \|\bfx^{(k)} -\bfx^{(k+1)}\|.
$$
Combining with \cref{onekey}, it gives that $\|\nabla F(\bfx^{(k_i)})\|\to 0$. By the continuity of gradient function, we have
$\|\nabla F(\bfx^{*})\|= 0$ since $\bfx^{(k_i)}\to \bfx^*$.  Next, consider $g(t) = \sqrt{t}$ which is concave over $[0, 1]$, we
have $g(t)- g(s)\ge g'(t)(t-s)$. Then by the Kurdyka-Lojasiewicz inequality, there exists a positive
constant $c_0>0$ and $\delta>0$ such that
\begin{equation}
\label{newkeyinq}
\|g'(F(\bfx) -F(\bfx^*))\nabla F(\bfx))\|\ge c_0>0
\end{equation}
for all $\bfx$ in the neighborhood $B(\bfx^*,\delta)$ of $\bfx^*$. As
$$ F(\bfx^{(k)})-F(\bfx^*)\to 0, \qquad  k \to \infty, $$
then there is an integer $k_0 $ such that for all $k\ge k_0$ it holds
\begin{equation} \label{delta}
  \max{\left(\sqrt{2/\ell}, L_1/(\ell c_0) \right)}\cdot \sqrt{ F(\bfx^{(k)})-F(\bfx^*)} \le \delta/2.
\end{equation}
 Also, $\bfx^{(k_i)}\to \bfx^*$ as $k_i\to \infty$. Without loss of generality, we may assume that $k_0=1$ and $\bfx^{(1)}\in B(\bfx^*,\delta/2)$.
Let us show that $\bfx^{(k)}, k\ge 1$ will be in the neighborhood $B(\bfx^*,\delta)$. We shall use an induction to do so.
By \cref{delta} we have
$$
\|\bfx^{(2)} -\bfx^*\|\le \|\bfx^{(2)}- \bfx^{(1)}\|+ \|\bfx^{(1)}- \bfx^*\|\le \sqrt{2 (F(\bfx^{(1)})-F(\bfx^*)/\ell} +
\|\bfx^{(1)}- \bfx^*\|\le \delta.
$$
Assume that $\bfx^{(k)}\in B(\bfx^*,\delta)$ for $k\le K$.
Multiplying $g'(F(\bfx^{(k)})- F(\bfx^*))$ to both sides of \cref{onekey}, we have
\begin{align}\label{onekey3}
 \frac{\ell}{2} \|\bfx^{(k+1)} - \bfx^{(k)}\|^2 g'(F(\bfx^{(k)})- F(\bfx^*))  & \le
g'(F(\bfx^{(k)})- F(\bfx^*)) \left(F(\bfx^{(k)}) - F(\bfx^{(k+1)})\right)\cr
& \le  \sqrt{F(\bfx^{(k)})- F(\bfx^*)} - \sqrt{F(\bfx^{(k+1)})- F(\bfx^*)}
\end{align}
by using the concavity of $g$. However,  the K-L inequality \cref{newkeyinq} and \cref{mjlai07032018} gives that
\begin{eqnarray} \label{sk}
  |g'(F(\bfx^{(k)}) -F(\bfx^*))|&\ge & \frac{c_0}{\|\nabla F(\bfx^{(k)})\|}
  = \frac{c_0}{\|\nabla F_1(\bfx^{(k)})- \nabla F_2(\bfx^{(k)})\|} \nonumber\\
   &=& \frac{c_0}{\|\nabla F_1(\bfx^{(k)})- \nabla F_1(\bfx^{(k+1)})\|}
   \ge   \frac{c_0}{L_1 \|\bfx^{(k+1)} -\bfx^{(k)}\|}.
\end{eqnarray}
Putting it in \cref{onekey3} gives that
\begin{equation}\label{main inequality}
  \sqrt{F(\bfx^{(k)})- F(\bfx^*)} - \sqrt{F(\bfx^{(k+1)})- F(\bfx^*)} \ge \frac{\ell c_0}{2 L_1}
\|\bfx^{(k)} -\bfx^{(k+1)}\|
\end{equation}
holds for all $2\le k\le K$. It follows that
\begin{equation*}
 \frac{2 L_1}{\ell c_0}\sqrt{F(\bfx^{(1)})-F(\bfx^*)} \ge \sum_{j=1}^{K} \|\bfx^{(j+1)} -\bfx^{(j)}\|.
\end{equation*}
That is, we have
\begin{eqnarray*}
  \|\bfx^{(K+1)}- \bfx^*\| &\le & \|\bfx^{(K+1)}- \bfx^{(1)}\|+ \|\bfx^{(1)} - \bfx^*\| \\
   &\le & \sum_{j=1}^{K} \|\bfx^{(j+1)} -\bfx^{(j)}\|+ \|\bfx^{(1)} - \bfx^*\| \\
   &\le &  \frac{2 L_1}{\ell c_0}\sqrt{F(\bfx^{(1)})-F(\bfx^*)}+ \|\bfx^{(1)} - \bfx^*\|
    \le  \delta,
\end{eqnarray*}
where the last inequality follows from \cref{delta}. Thus, $\bfx^{(K+1)}\in B(\bfx^*, \delta)$. This shows that all $\bfx^{(k)}$ are in $B(\bfx^*, \delta)$ and inequality \cref{main inequality} holds for all $k$. Hence, we arrive at the claim \cref{onekey2} with $C_1=\ell c_0/(2L_1)$. By summing the inequality in \cref{onekey2} above, it follows
$$
\sum_{k\ge 1} \|\bfx^{(k+1)} - \bfx^{(k)}\| \le \frac{1}{C_1} \sqrt{F(\bfx^{(1)})- F(\bfx^*)}.
$$
That is, $\bfx^{(k)}$ is a Cauchy sequence and hence, it is convergent with $\bfx^{(k)}\to \bfx^\star$. Note that $\nabla F(\bfx^\star)=0$, which implies $\bfx^{(k)}$ converges to a critical point of $F$. \\

Next, we turn to prove the second part. Let $S_k= \sum_{i=k}^\infty \|\bfx^{(i+1)}- \bfx^{(i)}\|$. It follows from \cref{main inequality} that
\begin{eqnarray*}
  C_1 S_k &=& \sum_{i=k}^\infty C_1 \|\bfx^{(i+1)}- \bfx^{(i)}\| \\
   &\le & \sum_{i=k}^\infty (\sqrt{F(\bfx^{(i)})- F(\bfx^*)}
- \sqrt{F(\bfx^{(i+1)})- F(\bfx^*)}) \le \sqrt{F(\bfx^{(k)})- F(\bfx^*)}.
\end{eqnarray*}
Recall from (\ref{sk}) that
\begin{equation*}
  \sqrt{F(\bfx^{(k)})- F(\bfx^*)} \le \frac{L_1}{2c_0} \|\bfx^{(k)}  - \bfx^{(k+1)}\| =C_2(S_k- S_{k+1})
\end{equation*}
where $C_2=L_1/(2c_0)$. Combining the two above inequality that
\begin{equation*}
  S_{k+1}\le \frac{C_2- C_1}{C_2} S_k \le \cdots \le \theta^k S_0
\end{equation*}
for $\tau= (C_2- C_1)/(C_2)$. Since $\|\bfx^{(k)}- \bfx^*\| \le S_k$, we  complete the proof.
\end{proof}

\begin{remark}
We should point out that the conditions on $F, F_1$ and $F_2$ in \cref{mjlai07032018c} are not harsh. Notice that for standard phase retrieval, all these conditions are satisfied, especially when the region $\mathcal{R}$ is sufficiently small and near the global minimization by a technical initialization.
\end{remark}

In summary, two obvious consequences are:
\begin{itemize}
  \item [(1)] For any given initial point $\bfx^{(1)}$, let $D=F(\bfx^{(1)})- F(\bfx^\star)>0$, where
$\bfx^\star$ is one of the global minimizer of \cref{min F(x)}. Then
$$
F(\bfx^{(k)}) - F(\bfx^\star) \le D - \frac{\ell}{2}\sum_{j=1}^{k-1} \|\bfx^{(j+1)}- \bfx^{(j)}\|^2.
$$
That is, $\bfx^{(k)}$ is closer to one of global minimizer than the initial guess point.
  \item [(2)]  As our approach can find a critical point, if a global minimizer $\bfx^\star$ is a local minimizer over a neighborhood $N(\bfx^\star)$ and an initial vector $\bfx^{(1)}$ is in $N(\bfx^\star)$, then our approach finds $\bfx^\star$.
\end{itemize}

\begin{example}
In this example, we consider the standard phase retrieval problem where $f(x)=|x|^2$. Assume the measurements are Gaussian random
vectors, it has been showed that one can use the initialization from \cite{TWF,Gaoxu} to find an excellent initial vector. More
specifically, to recover a vector $x\in \RR^n$ (or $x \in \C^n$), if the number of measurement vectors $\bfa_i, i=1,\ldots,m $ is
$m=O(n)$, then with high probability we have
\begin{equation*}
  \normm{\bfx^{(1)}-\bfx^{*}} \le \delta \normm{\bfx^{*}},
\end{equation*}
where $\bfx^{*}$ is a global minimizer and $\delta$ is a positive constant. Furthermore, in a small neighborhood
$N(\bfx^\star,\delta):=\{x:\normm{\bfx-\bfx^{*}}\le \delta \normm{\bfx^{*}}\}$, the minimizing functional $F(x)$ is strongly
convex. Thus, our algorithm can converge to the global minimizer by using a initialization.
\end{example}

\section{Computation of the Inner Minimization \cref{newmethod}}
\label{sec:Nes}
We now discuss how to compute the minimization in \cref{newmethod}. For convenience, we rewrite the minimization
in the following form
\begin{equation}
\label{min2}
\min_{\bfx\in \mathbb{R}^n} G(\bfx)
\end{equation}
for a differentiable convex function $G(\bfx):=F_1(\bfx)- \langle \nabla F_2(\bfx^{(k)}),\bfx - \bfx^{(k)}\rangle$. The first
approach is to use the gradient descent method:
\begin{equation}
\label{GD}
\bfz^{(j+1)}= \bfz^{(j)}- h \nabla G(\bfz^{(j)})
\end{equation}
for $j\in {\mathbb N} $ with $\bfz^{(1)}= \bfx^{(k)}$, where $h>0$ is a fixed step size or variable step size.
It is well-known that we need to choose $h \approx 1/(2L)$ for the Lipschitz differentiability constant $L$ of
$G(\bfx)$ and then the gradient descent method \cref{GD} will have a linear convergence. It is also
known that  we can choose $h=\nu/L$ for the Lipschitz differentiability constant $L$ of $G(\bfx)$ and
the $\nu$-strong convexity of $G$ and then use the Nestrov acceleration technique as explained
in \cite{N04}. The convergence will be sped up. See the following result.
\begin{lemma} [The Nesterov's Acceleration (\cite{N04})]
\label{NAG}
Let $f:\mathbb{R}^n\to \mathbb{R}$ be a $\nu$-strong convex function and the gradient function has $L$-Lipschitz constant.
Start at an arbitrary initial point $\bfu_1=\bfz_1$, the following Nesterov's accelerated gradient descent
\begin{align}
\label{newproblem2}
\bfz^{j+1}: &= \bfu^{(j)} -  \frac{\nu}{L} \nabla f(\bfu^{(j)}), \cr
\bfu^{(j+1)} & =\bfz^{(j+1)}- q (\bfz^{(j+1)} -\bfz^{(j)})
\end{align}
satisfies
\begin{equation}
\label{linearrate}
f(\bfz^{j+1}) - f(\bfz^*) \le \frac{\nu+ L}{2} \|\bfz^{(1)} - \bfz^*\|^2 \exp( -\frac{j}{\sqrt{L/\nu}}),
\end{equation}
where $\bfz^*$ is the optimal solution and $q=(\sqrt{L/\nu}-1)/(\sqrt{L/\nu}+1)$ is a constant.
\end{lemma}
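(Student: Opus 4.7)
The plan is to prove this via the method of estimate sequences, which is the standard (and to my knowledge essentially the only) transparent route to Nesterov's accelerated rate for strongly convex functions. I would first reduce the problem: since $f$ is $\nu$-strongly convex with $L$-Lipschitz gradient, it admits, at every point $\bfu$, the lower bound $f(\bfx)\ge f(\bfu)+\langle \nabla f(\bfu),\bfx-\bfu\rangle + \tfrac{\nu}{2}\|\bfx-\bfu\|^2$ and the upper bound $f(\bfx)\le f(\bfu)+\langle \nabla f(\bfu),\bfx-\bfu\rangle + \tfrac{L}{2}\|\bfx-\bfu\|^2$. The goal is to combine these two sandwiching inequalities along the trajectory to produce a decreasing Lyapunov quantity.

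Next I would construct an estimate sequence of quadratics of the form
\begin{equation*}
\phi_j(\bfx) = \phi_j^* + \frac{\gamma_j}{2}\|\bfx - \bfv^{(j)}\|^2,
\end{equation*}
together with scalars $\alpha_j\in(0,1)$, such that
\begin{equation*}
\phi_{j+1}(\bfx) \le (1-\alpha_j)\phi_j(\bfx) + \alpha_j\bigl[f(\bfu^{(j)})+\langle\nabla f(\bfu^{(j)}),\bfx-\bfu^{(j)}\rangle+\tfrac{\nu}{2}\|\bfx-\bfu^{(j)}\|^2\bigr].
\end{equation*}
The crucial step is to pick the constants so that simultaneously (i) $f(\bfz^{(j)})\le \phi_j^*$ for every $j$, and (ii) the auxiliary quadratic centers $\bfv^{(j)}$ and the coefficients propagate in a way that matches the given update rules $\bfz^{(j+1)}=\bfu^{(j)}-\tfrac{\nu}{L}\nabla f(\bfu^{(j)})$ and $\bfu^{(j+1)}=\bfz^{(j+1)}-q(\bfz^{(j+1)}-\bfz^{(j)})$. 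Taking $\alpha_j\equiv \sqrt{\nu/L}$ constant forces $\gamma_j\equiv \nu$, and a direct computation then fixes the momentum parameter to $q=(\sqrt{L/\nu}-1)/(\sqrt{L/\nu}+1)$ — precisely the value in the statement. This identification is really the heart of the argument.

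With the estimate sequence in hand I would verify (i) by induction. The base case is trivial since $\bfu^{(1)}=\bfz^{(1)}$. For the inductive step, one uses the descent lemma for $L$-smooth functions applied at $\bfu^{(j)}$ together with the choice $\bfz^{(j+1)}=\bfu^{(j)}-\tfrac{\nu}{L}\nabla f(\bfu^{(j)})$ to bound $f(\bfz^{(j+1)})$ from above, and then matches this against $\phi_{j+1}^*$ using the recursion for $\phi_j^*$. Separately, evaluating the estimate sequence recursion at $\bfx=\bfz^*$ and unrolling gives
\begin{equation*}
\phi_j(\bfz^*) \le (1-\alpha)^{j-1}\phi_1(\bfz^*) + \bigl[1-(1-\alpha)^{j-1}\bigr]f(\bfz^*),
\end{equation*}
where $\alpha=\sqrt{\nu/L}$. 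Combining $f(\bfz^{(j)})\le \phi_j^*\le \phi_j(\bfz^*)$ with $\phi_1(\bfz^*)=f(\bfz^{(1)})+\tfrac{\gamma_1}{2}\|\bfz^{(1)}-\bfz^*\|^2\le f(\bfz^*)+\tfrac{L+\nu}{2}\|\bfz^{(1)}-\bfz^*\|^2$ (the upper bound coming from smoothness applied at $\bfz^*$) yields
\begin{equation*}
f(\bfz^{(j)})-f(\bfz^*)\le \tfrac{L+\nu}{2}\|\bfz^{(1)}-\bfz^*\|^2 (1-\sqrt{\nu/L})^{j-1},
\end{equation*}
which we bound by the exponential $\exp(-(j-1)\sqrt{\nu/L})$ using $1-t\le e^{-t}$, giving the stated rate.

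The main obstacle, as already flagged, will be the bookkeeping in the estimate sequence: verifying that the recursion for $\bfv^{(j)}$ (the centers of the quadratics) reduces exactly to the momentum update $\bfu^{(j+1)}=\bfz^{(j+1)}-q(\bfz^{(j+1)}-\bfz^{(j)})$ with the prescribed $q$. Everything else — descent-lemma estimates, inductive step, and exponential bound — is routine once the constants are pinned down. Since the result is Theorem 2.2.3 of Nesterov's monograph cited in the paper, I would in practice give only a brief sketch and defer to \cite{N04} for the bookkeeping details.
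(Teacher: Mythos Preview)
The paper does not supply its own proof of this lemma: it is stated with attribution to \cite{N04} and then used as a black box, with the surrounding text only commenting on its significance (``The significance of the Nestrov acceleration above is to reduce the number of iterations\ldots''). Your sketch via estimate sequences is precisely the argument in Nesterov's monograph (Theorem~2.2.3 there), so you are reproducing the very proof the paper cites; your closing remark that you would ``defer to \cite{N04} for the bookkeeping details'' is in fact exactly what the paper does.

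One small caution on the bookkeeping you flag: the gradient step in the statement uses step size $\nu/L$ rather than the usual $1/L$ from Nesterov's Theorem~2.2.3, so if you do carry out the estimate-sequence computation in full you will find that the descent-lemma bound on $f(\bfz^{(j+1)})$ is weaker than the one needed to close the induction $f(\bfz^{(j)})\le\phi_j^*$ with $\alpha_j=\sqrt{\nu/L}$. This is almost certainly a typo in the paper's transcription of the lemma rather than a flaw in your plan; the clean route is to run the argument with step size $1/L$ (as in \cite{N04}) and note the discrepancy.
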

The significance of the Nestrov acceleration above is to reduce the number of iterations in \cref{GD}
significantly. That is,
for any tolerance $\epsilon$, we need $O(1/\epsilon)$ number of iterations for the gradient descent method
due to the linear convergence, but
$O(1/\sqrt{\epsilon})$ number of iterations if Nesterov's acceleration \cref{newproblem2} is used.

Since $G$ is twice differentiable, we can certainly use the Newton method to solve \cref{newmethod}
because it has quadratic convergence.  However,  we will not pursue it here due to the fact that when
the number of variables of $\bfz$ is large, the Newton method will be extremely slow.
Instead, another method to choose a good $h$ is to use the Barzilai-Borwein(BB) method which is an excellent approach
for a large scale minimization problem (cf. \cite{BB88}). The iteration of the BB method can be described as
\begin{equation} \label{BB}
\bfz^{(j+1)}= \bfz^{(j)} - \beta_j^{-1} \nabla G(\bfz^{(j)}),
\end{equation}
where the  step size
\begin{equation} \label{bb}
  \beta_j  =  (\bfz^{(j)}- \bfz^{(j-1)})^\top ( \nabla G(\bfz^{(j)}) - \nabla G(\bfz^{(j-1)}))/
\|\bfz^{(j)}- \bfz^{(j-1)}\|^2.
\end{equation}
We shall use the following \cref{DCinneralg1} to solve the minimization \cref{newmethod}.
\begin{algorithm}
\caption{The BB Algorithm for the Inner Minimization}
\label{DCinneralg1}
\begin{algorithmic}
\STATE{Let $\bfu^{(1)}= \bfz^{(1)}$ be an initial guess.}
\STATE{For $j \ge 1$, we solve the minimization in \cref{min2} by computing $\beta_j$
according to \cref{bb}. }
\STATE{Update
\begin{align}
\label{newproblem3}
\bfz^{(j+1)}: &= \bfu^{(j)} -  \beta_j^{-1} \nabla G(\bfu^{(j)}) \cr
\bfu^{(j+1)} & =\bfz^{(j+1)}- q (\bfz^{(j+1)} -\bfz^{(j)})
\end{align}}
\STATE{until a maximum number $T$ of iteration is achieved.}
\RETURN $\bfu^T$
\end{algorithmic}
\end{algorithm}

Our computation of inner minimization is described in \cref{DCinneralg1}, which is the combination of BB method with
Nesterov's accelerated gradient descent. The intuition behind it based on the results in \cref{NAG}.
Since BB method has a good performance in numerical experiment, we can hope our \cref{DCinneralg1} has better performance.

There are several modified versions of the BB-method available together their convergence analysis in the literature. See, e.g. \cite{DL02,ZZ17} and the references therein.
A quick literature search shows that the convergence rate is still not established yet for general minimizing functional $F$ to the best of the authors knowledge.
Next we give a necessary and sufficient condition for the algorithm (\ref{BB}) has a better convergence than linear rate.
We say a algorithm is convergent superlinearly if
$$
\sigma_k = \frac{\|\bfx^{(k+1)} - \bfx^*\|}{\|\bfx^{(k)}- \bfx^*\|}  \to 0, \hbox{ when } k\to \infty.
$$
To analyze the convergence of the BB method in our setting, let $\bfs_{k+1}=\bfx^{(k+1)}- \bfx^{(k)}$ and
$\bfy_{k+1}= \nabla G(\bfx^{(k+1)})- \nabla G(\bfx^{(k)})$.

\begin{lemma}
\label{lai07092018}
Suppose that the function $G(\bfx)$ in \cref{min2} is $\alpha$-strongly convex and the gradient function has Lipschitz constant
$L$ in a domain $D$. Assume $\bfx^*\in D$ and the sequence $\{\bfx^{(k)}, k\ge 1\}$ obtained from the BB method above remain in
$D$.  Then $\{\bfx^{(k)}, k\ge 1\}$ converges super linearly to $\bfx^*$ if and only if $(\beta_k-
H_G(\bfx^*))\bfs_{k+1}=o(\|\bfs_{k+1}\|)$.
\end{lemma}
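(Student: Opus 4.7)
The plan is to adapt the classical Dennis--Moré characterization of superlinear convergence for quasi-Newton methods to the special case where the Hessian approximation is the scalar multiple $B_k=\beta_k I$ of the identity. Throughout I denote the error by $\bfe_k=\bfx^{(k)}-\bfx^*$ and use the fact that $\nabla G(\bfx^*)=0$ at the unique minimizer $\bfx^*$ (which exists by strong convexity), together with the standard estimate
\begin{equation*}
\nabla G(\bfx^{(k)}) \;=\; H_G(\bfx^*)\bfe_k + r_k,\qquad \|r_k\|=o(\|\bfe_k\|),
\end{equation*}
obtained from a first-order Taylor expansion and the continuity of $H_G$ on the bounded region $D$.

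The first step is to derive the central identity. From the BB update \cref{BB} we have $\beta_k\bfs_{k+1}=-\nabla G(\bfx^{(k)})$. Substituting the Taylor expansion above and writing $\bfs_{k+1}=\bfe_{k+1}-\bfe_k$ yields
\begin{equation*}
\bigl(\beta_k - H_G(\bfx^*)\bigr)\bfs_{k+1}
\;=\; -H_G(\bfx^*)\bfe_{k+1} \;+\; o(\|\bfe_k\|).
\end{equation*}
This identity is the workhorse: it links the Dennis--Moré residual $(\beta_k-H_G(\bfx^*))\bfs_{k+1}$ directly to the new error $\bfe_{k+1}$, modulo a Taylor remainder controlled by $\bfe_k$.

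For the forward implication, assume superlinear convergence, i.e.\ $\|\bfe_{k+1}\|=o(\|\bfe_k\|)$. A reverse triangle inequality then gives $\|\bfs_{k+1}\|\ge (1-o(1))\|\bfe_k\|$, so $\|\bfe_k\|\le (1+o(1))\|\bfs_{k+1}\|$. Combined with the identity and the bound $\|H_G(\bfx^*)\bfe_{k+1}\|\le L\|\bfe_{k+1}\|=o(\|\bfe_k\|)=o(\|\bfs_{k+1}\|)$, we immediately obtain $(\beta_k-H_G(\bfx^*))\bfs_{k+1}=o(\|\bfs_{k+1}\|)$. Conversely, assume the Dennis--Moré condition holds and that $\bfx^{(k)}\to\bfx^*$ (this convergence is the standing assumption of the lemma since the iterates remain in $D$). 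Applying $\alpha$-strong convexity of $G$, which forces $\|H_G(\bfx^*)\bfv\|\ge\alpha\|\bfv\|$, to the identity yields
\begin{equation*}
\alpha\,\|\bfe_{k+1}\| \;\le\; \|(\beta_k - H_G(\bfx^*))\bfs_{k+1}\| + o(\|\bfe_k\|) \;=\; o(\|\bfs_{k+1}\|) + o(\|\bfe_k\|).
\end{equation*}
Bounding $\|\bfs_{k+1}\|\le\|\bfe_{k+1}\|+\|\bfe_k\|$ and absorbing the $o(\|\bfe_{k+1}\|)$ term into the left side for $k$ large gives $\|\bfe_{k+1}\|/\|\bfe_k\|\to 0$, i.e.\ superlinear convergence.

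The main technical obstacle is the careful bookkeeping that replaces $\|\bfe_k\|$ with $\|\bfs_{k+1}\|$ (and vice versa) when passing between the two normalizations used for superlinear convergence and for the Dennis--Moré residual. Making this swap rigorous requires that the errors do not blow up relative to one another, which is why both $\alpha$-strong convexity (to bound errors from below via $H_G(\bfx^*)$) and the $L$-Lipschitz gradient (to control the Taylor remainder $r_k=o(\|\bfe_k\|)$ uniformly on $D$) are essential. Once this calibration is in place, the proof reduces to the two chains of inequalities sketched above.
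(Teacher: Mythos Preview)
Your argument is correct and follows essentially the same Dennis--Mor\'e route as the paper. The paper's proof starts from the same identity $\beta_k\bfs_{k+1}=-\nabla G(\bfx^{(k)})$, but pivots on the quantity $\|\nabla G(\bfx^{(k+1)})\|/\|\bfs_{k+1}\|$ rather than on $\|H_G(\bfx^*)\bfe_{k+1}\|$; since $\nabla G(\bfx^{(k+1)})=H_G(\bfx^*)\bfe_{k+1}+o(\|\bfe_{k+1}\|)$ these two formulations are equivalent and the subsequent chain of inequalities (using $\alpha$-strong convexity for the lower bound and the $L$-Lipschitz gradient for the upper bound, then the $\|\bfs_{k+1}\|\sim\|\bfe_k\|$ calibration) is the same in both. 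One small remark: your parenthetical that convergence $\bfx^{(k)}\to\bfx^*$ is ``the standing assumption of the lemma since the iterates remain in $D$'' is not literally justified by the hypotheses as stated, but the paper's proof tacitly makes the same assumption when it invokes continuity of $H_G$ at $\bfx^*$, so you are not introducing any gap beyond what is already present.
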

\begin{proof}
From iteration \cref{BB}, We have
\begin{align}
\left(\beta_k - H_G(\bfx^*)\right)\bfs_{k+1}
=&  -\nabla G(\bfx^{(k)}) -  H_G(\bfx^* )\bfs_{k+1} \cr
=& \nabla G(\bfx^{(k+1)})-\nabla G(\bfx^{(k)})-H_G(\bfx^*)\bfs_{k+1} - \nabla G(\bfx^{(k+1)}).
\end{align}
Since the Hessian $H_G(\bfx)$ is continuous at $\bfx^*$ and all $\bfx^{(k)}\in D$,  we see
$$
\nabla G(\bfx^{(k+1)})-\nabla G(\bfx^{(k)})-H_G(\bfx^*) \bfs_{k+1} \to 0, \quad k\to \infty.
$$
By the assumption that $(\beta_k-  H_G(\bfx^*))\bfs_{k+1} =o(\|\bfs_{k+1}\|)$, we have
\begin{equation}
\label{property0}
\lim_{k\to \infty} {||\nabla G(\bfx^{(k+1)})||\over || \bfs_{k+1}||} = 0.
\end{equation}
Note that
$$\|\nabla G(\bfx^{(k+1)}) -  G(\bfx^{(k)}) \|\le L \|\bfx^{(k+1)}- \bfx^{(k)}\|$$
and
$$
|| \nabla G(\bfx^{(k+1)})|| = ||\nabla G(\bfx^{(k+1)}) - \nabla G(\bfx^*)||
= ||H_G(\xi_k)(\bfx^{(k+1)}-\bfx^{*})|| \ge \alpha \| \bfx^{(k+1)}-\bfx^{*}||
$$
for $\bfx^{(k+1)}\in D$, where  $\xi_k$ in $D$. Then, we have
$$
 {||\nabla G(\bfx^{(k+1)})||\over ||\bfy_{k+1}||} \ge
{\alpha ||\bfx^{(k+1)}-\bfx^{*}||\over L||\bfx^{(k+1)}-\bfx^{*}|| + L|| \bfx^{(k)}-\bfx^{*}||} =
 {\alpha\sigma_k\over L(1 + \sigma_k)},
$$
where $\displaystyle \sigma_k ={||\bfx^{(k+1)}-\bfx^{*}||\over ||\bfx^{(k)}-\bfx^{*}||}$.  It follows that
$\displaystyle {\sigma_k\over 1 + \sigma_k} \rightarrow 0$ and hence,
$\displaystyle \sigma_k \rightarrow 0$.  That is, the BB method converges super-linearly.

On the other hand, if $\sigma_k\to 0$, we can show that $(\beta_k - H_G(\bfx^*))\bfs_{k+1} = o(\|\bfs_{k+1}\|)$.
In fact,  it is known that when $\bfx^{(k)}\to \bfx^{*}$ super-linearly, then
\begin{equation} \label{property1}
 \lim\limits_{ k \rightarrow +\infty} \ \frac{\|\bfx^{(k+1)}-\bfx^{(k)}||} {||\bfx^{(k)}-\bfx^{*}||} = 1.
\end{equation}
Indeed, we have
$$
\displaystyle \Big| ||\bfx^{(k+1)}-\bfx^{(k)}|| - ||\bfx^{(k)}-\bfx^{*}|| \Big| \le ||\bfx^{(k+1)}-\bfx^{*}||.
$$
It follows that
\begin{equation*}
\left|  {||\bfx^{(k+1)}-\bfx^{(k)}||\over ||\bfx^{(k)}-\bfx^{*}||} - 1 \right| \le
{||\bfx^{(k+1)}-\bfx^{*}||\over ||\bfx^{(k)}-\bfx^{*}||} \rightarrow 0.
\end{equation*}
Hence, we have
\begin{eqnarray*}
  {||\nabla G(\bfx^{(k+1)})||\over || \bfs_{k+1}||}&\le & \frac{||\nabla G(\bfx^{(k+1)})- \nabla G(\bfx^*)||}{|| \bfs_{k+1}||} \\
  &\le &  \frac{ L \|\bfx^{(k+1)} -\bfx^*\|}{\|\bfx^{(k+1)}- \bfx^{(k)}\|} \\
   &=& \frac{\sigma_{k+1}}{\|\bfx^{(k+1)}-\bfx^{(k)}\|/\|\bfx^{(k)}-\bfx^*\|} \to 0
\end{eqnarray*}
because of the denominator is bounded by the property \cref{property1}. Using the argument at the beginning of the proof, we
can see that $(\beta_k - H_G(\bfx^*))\bfs_{k+1}= o(\|\bfs_{k+1}\|)$. These completes the proof.
\end{proof}

\section{Sparse Phase Retrieval}
\label{sec:sparse}
In previous sections, several computational algorithms have been developed
 for the phase retrieval problem based on measurements in \cref{quadcond}.
We now extend the approaches to study the sparse phase retrieval. Suppose that $\bfx_\bfb$ is a sparse
solution to the given measurements \cref{quadcond}.  Let us use the DC based
algorithm to explain how to do. First, we consider the following
\begin{equation} \label{mainproblem4SPR}
\min_{\bfx \in \mathbb{R}^n \hbox{ or } \mathbb{C}^n} \lambda \|\bfx\|_1+ \sum_{i=1}^m (f(\langle {\bfa}_i, \bfx\rangle) -b_i)^2
\end{equation}
by adding $\lambda \|\bfx\|_1$ to \cref{LSmin} as a standard approach in compressive sensing.
If we take $f(\langle {\bfa}_i, \bfx\rangle)=\abs{\langle {\bfa}_i, \bfx\rangle}^2$, then \cref{mainproblem4SPR} reduces to the sparse phase retrieval.

We now discuss how to solve it numerically.
We approach it by using a similar method as in the previous section. Indeed, for the case $\bfx\in
\mathbb{R}^n$ and $\bfa_i \in\mathbb{R}^n$,
we rewrite $F(\bfx)= \sum_{i=1}^m (f(\langle {\bfa}_i, \bfx\rangle) -b_i)^2$
as
$$
F(\bfx) = F_1(\bfx) - F_2(\bfx):=
\sum_{i=1}^m f^2(\langle {\bfa}_i, \bfx\rangle)+ b_i^2 - \sum_{i=1}^m 2 b_i f(\langle {\bfa}_i,
\bfx\rangle).
$$
The minimization \cref{mainproblem4SPR} will be approximated by
\begin{equation}
\label{DCalg4SPR}
\bfx^{(k+1)}:= \hbox{arg} \min \lambda \|\bfx\|_1 +  F_1(\bfx) - \nabla F_2(\bfx^{(k)})^\top (\bfx- \bfx^{(k)})
\end{equation}
for any given $\bfx^{(k)}$. We call this algorithm as sparse DC based method.
When $\bfx\in \mathbb{C}^n$ and ${\bfa}_j\in \mathbb{C}^n$,
$j=1, \ldots, m$, we have to write $\bfx= \bfx_R+ \sqrt{-1}\bfx_I$ and similar for ${\bfa}_j$. Letting $\bfc=[\bfx_R^\top\,
\bfx_I^\top]^\top \in \mathbb{R}^{2n}$, we view $F_1(\bfx)$ as  a functions in $G_1(\bfc)
=F_1(\bfx_R+\sqrt{-1}\bfx_I)$.
Then $G_1(\bfc)$ is a convex function of variable $\bfc$. Similarly, $G_2(\bfc)=F_2(\bfx_R+
\sqrt{-1}\bfx_I)$ is a convex function of $\bfc$. We can formulate the same minimization problem
as in \cref{DCalg4SPR}.
For convenience, we simply discuss the case when $\bfx,\;  {\bfa}_j, \; j=1,
\ldots, m$ are real.  The complex variable setting can be treated in the same fashion.

To solve \cref{DCalg4SPR}, we use a proximal gradient method: for any given $\bfy^{(k)}$,
\begin{equation}
\label{DCalg4SPR2}
\bfy^{(k+1)}:= \hbox{arg} \min \lambda \|\bfy\|_1 +  F_1(\bfy^{(k)})
+ (\nabla F_1(\bfy^{(k)})- \nabla F_2(\bfy^{(k)}))^\top (\bfy-\bfy^{(k)}) + \frac{L_1}{2}\|\bfy - \bfy^{(k)}\|^2
\end{equation}
for $k\ge 1$, where $L_1$ is the Lipschitz differentiability of $F_1$.
The above minimization can be easily solved by using shrinkage-thresholding technique as in \cite{BT09}. Note
that Beck and Teboulle in \cite{BT09} use a Nesterov's acceleration technique to speed up the iteration to form
the well-known FISTA.  However,  we shall use the acceleration technique from \cite{AP16} which is
slightly better than the Nestrov technique. The discussion above furnishes a computational method
for sparse phase retrieval problem \cref{mainproblem4SPR}. Let us point out one significant difference
from \cref{DCalg4SPR2} is that one can find $\bfy^{(k+1)}$ by using a formula while the solution $\bfx^{(k+1)}$
in \cref{newmethod} has to be computed using an iterative method as explained before. Thus
the sparse phase retrieval is more efficient in this sense.

Let us study the convergence of our sparse phase retrieval method.
We again start with a standard result for a DC based algorithm:
\begin{theorem}
\label{mjlai08182018}
Assume $F_2$ is a strongly convex function with parameter $\ell$.
Starting from any initial guess $\bfy^{(1)}$, let $\bfy^{(k+1)}$ be the solution in \cref{DCalg4SPR2} for all $k\ge 1$. Then
\begin{equation}
\lambda \|\bfy^{(k+1)}\|_1+ F(\bfy^{(k+1)}) \le \lambda \|\bfy^{(k)}\|_1+
F(\bfy^{(k)}) -  \frac{\ell}{2} \|\bfy^{(k+1)} - \bfy^{(k)}\|^2, \quad \forall k\ge 1
\end{equation}
and $\partial g(\bfy^{(k+1)})+  \nabla F_1(\bfy^{(k)}) - \nabla F_2(\bfy^{(k)})
+\dfrac{L_1}{2}(\bfy^{(k+1)}-\bfy^{(k)}) = 0$,
where $g(\bfx)=\lambda \|\bfx\|_1$ and $\partial g$ stands for the subgradient of $g$.
\end{theorem}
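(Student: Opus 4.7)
The plan is to mimic the descent argument of \cref{mjlai07032018} but adapted to the composite, nondifferentiable objective, using three standard ingredients: the minimizing property of $\bfy^{(k+1)}$ in \cref{DCalg4SPR2}, the descent lemma for $F_1$ coming from the $L_1$-Lipschitz continuity of $\nabla F_1$, and the strong convexity lower bound for $F_2$ with parameter $\ell$. The second conclusion on the subgradient is just the first-order optimality condition applied to the convex subproblem \cref{DCalg4SPR2}, so that part is immediate; the work is in the first inequality.

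First I would exploit the fact that $\bfy^{(k+1)}$ minimizes the quadratic-plus-$\ell_1$ surrogate in \cref{DCalg4SPR2}. Evaluating that surrogate at the competitor $\bfy^{(k)}$ and at the minimizer $\bfy^{(k+1)}$ yields, after cancelling the constant $F_1(\bfy^{(k)})$,
\begin{equation*}
\lambda\|\bfy^{(k+1)}\|_1 + \bigl(\nabla F_1(\bfy^{(k)})-\nabla F_2(\bfy^{(k)})\bigr)^\T(\bfy^{(k+1)}-\bfy^{(k)}) + \tfrac{L_1}{2}\|\bfy^{(k+1)}-\bfy^{(k)}\|^2 \le \lambda\|\bfy^{(k)}\|_1.
\end{equation*}
Next I would invoke the descent lemma for $F_1$ (a direct consequence of $\nabla F_1$ being $L_1$-Lipschitz on $\mathcal{R}$),
\begin{equation*}
F_1(\bfy^{(k+1)}) \le F_1(\bfy^{(k)}) + \nabla F_1(\bfy^{(k)})^\T(\bfy^{(k+1)}-\bfy^{(k)}) + \tfrac{L_1}{2}\|\bfy^{(k+1)}-\bfy^{(k)}\|^2,
\end{equation*}
and the $\ell$-strong convexity of $F_2$ applied with base point $\bfy^{(k)}$,
\begin{equation*}
-F_2(\bfy^{(k+1)}) \le -F_2(\bfy^{(k)}) - \nabla F_2(\bfy^{(k)})^\T(\bfy^{(k+1)}-\bfy^{(k)}) - \tfrac{\ell}{2}\|\bfy^{(k+1)}-\bfy^{(k)}\|^2.
\end{equation*}

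Adding the latter two gives a bound on $F(\bfy^{(k+1)})-F(\bfy^{(k)})$ in terms of an inner product with $\nabla F_1(\bfy^{(k)})-\nabla F_2(\bfy^{(k)})$ plus a $(L_1-\ell)/2$ quadratic. Adding $\lambda(\|\bfy^{(k+1)}\|_1-\|\bfy^{(k)}\|_1)$ and substituting the first displayed inequality, the linear term and the $L_1/2$ quadratic from the Lipschitz step cancel exactly against those from the surrogate inequality, leaving precisely $-\tfrac{\ell}{2}\|\bfy^{(k+1)}-\bfy^{(k)}\|^2$ on the right-hand side. For the optimality relation I would write out the subdifferential of the surrogate in \cref{DCalg4SPR2} and set it to contain zero at $\bfy^{(k+1)}$, using that $\nabla\bigl(\tfrac{L_1}{2}\|\bfy-\bfy^{(k)}\|^2\bigr)=L_1(\bfy-\bfy^{(k)})$ and that $\partial(\lambda\|\cdot\|_1)$ is well defined since $\|\cdot\|_1$ is convex.

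The only real obstacle is bookkeeping: making sure the three inequalities line up so that the dominant $L_1/2$ quadratic terms cancel and only $-\ell/2$ survives, and handling the subgradient carefully (choosing the appropriate selection from $\partial\|\cdot\|_1$). No deep tool beyond convex analysis and the standard descent lemma is needed, and the argument works identically for the complex case via the real reformulation in $\bfc=[\bfx_R^\T\;\bfx_I^\T]^\T$ already used in \cref{sec:phase}.
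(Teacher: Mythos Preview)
Your proposal is correct and follows essentially the same route as the paper: both combine the descent lemma for $F_1$, the $\ell$-strong convexity inequality for $F_2$, and the optimality of $\bfy^{(k+1)}$ in \cref{DCalg4SPR2} (comparing the surrogate at $\bfy^{(k+1)}$ with its value at $\bfy^{(k)}$), after which the $L_1/2$ quadratic and the linear terms cancel and only $-\tfrac{\ell}{2}\|\bfy^{(k+1)}-\bfy^{(k)}\|^2$ survives. One minor note: your gradient computation $\nabla\bigl(\tfrac{L_1}{2}\|\bfy-\bfy^{(k)}\|^2\bigr)=L_1(\bfy-\bfy^{(k)})$ is correct, so the optimality condition should carry the coefficient $L_1$ rather than $L_1/2$; the $L_1/2$ in the stated theorem (and in the paper's own proof) is a typo, not a flaw in your argument.
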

\begin{proof}
The Lipschitz differentiability of $F_1$ tells us
$$
F_1(\bfy^{(k+1)}) \le F_1(\bfy^{(k)}) + \nabla F_2(\bfy^{(k)})^\top (\bfy^{(k+1)}
-\bfy^{(k)}) + \frac{L_1}{2} \|\bfy^{(k+1)} - \bfy^{(k)}\|^2,
$$
where $L_1$ is the Lipschitz differentiability of $F_1$. By the strongly convexity of $F_2$, we have
$$
F_2(\bfy^{(k+1)}) \ge F_2(\bfy^{(k)}) + \nabla F_2(\bfy^{(k)})^\top (\bfy^{(k+1)}
-\bfy^{(k)}) + \frac{\ell}{2} \|\bfy^{(k+1)} - \bfy^{(k)}\|^2.
$$
With the above two inequalities, we see that
\begin{eqnarray*}
&&\lambda \|\bfy^{(k+1)}\|_1+	F(\bfx^{(k+1)}) = \lambda \|\bfy^{(k+1)}\|_1+F_1(\bfy^{(k+1)}) - F_2(\bfy^{(k+1)}) \\
	 &\le &\lambda \|\bfy^{(k+1)}\|_1+ F_1(\bfy^{(k)})+ \nabla F_1(\bfy^{(k)})^\top (\bfy^{(k+1)}-\bfy^{(k)})
+\frac{L}{2}\|\bfy^{(k+1)}- \bfy^{(k)}\|^2 \cr
&&\quad - F_2(\bfy^{(k)})
	-\nabla F_2(\bfy^{(k)})^\top (\bfy^{(k+1)}-\bfy^{(k)})  -\frac{\ell}{2} \|\bfy^{(k+1)} - \bfy^{(k)}\|^2 \\
	&=& F_1(\bfy^{(k)})- F_2(\bfy^{(k)}) -\frac{\ell}{2} \|\bfy^{(k+1)} - \bfy^{(k)}\|^2 \cr
&& + \lambda \|\bfy^{(k+1)}\|_1+ (\nabla F_1(\bfy^{(k)}) -
	\nabla F_2(\bfy^{(k)})^\top (\bfy^{(k+1)}-\bfy^{(k)}) +\frac{L}{2}\|\bfy^{(k+1)}- \bfy^{(k)}\|^2 \cr
	 &\le & F_1(\bfy^{(k)}) - F_2(\bfy^{(k)})  -\frac{\ell}{2} \|\bfy^{(k+1)} - \bfy^{(k)}\|^2
+ \lambda \|\bfy^{(k)}\|_1\\
	 &= & \lambda \|\bfy^{(k)}\|_1+ F(\bfy^{(k)})-\frac{\ell}{2} \|\bfy^{(k+1)} - \bfy^{(k)}\|^2,
\end{eqnarray*}
where we have used the optimization condition in (\ref{DCalg4SPR2}). Letting $g(\bfx)= \lambda \|\bfx\|_1$,
the second property $\partial g(\bfy^{(k+1)})+ \nabla F_1(\bfy^{(k)})- \nabla F_2(\bfy^{(k)})
+\dfrac{L_1}{2}(\bfy^{(k+1)}-\bfy^{(k)})=0$
follows from the minimization \cref{DCalg4SPR2}.
\end{proof}

Next we show that the sequence $\bfy^{(k)}, k\ge 1$ from \cref{DCalg4SPR2} converges to a critical point $\bfy^*$.
\begin{theorem}
\label{mjlai8282018}
Suppose that $f(\bfx)$ is a real analytic function and the gradient function $\nabla f(\bfx) $ has Lipschitz constant $L$.
Let $\bfy^{(k)}, k\ge 1$ be the sequence obtained from \cref{DCalg4SPR2}. Then it converges to a critical point $\bfy^*$ of $F$.
\end{theorem}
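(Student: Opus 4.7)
The plan is to adapt the strategy of \cref{mjlai07032018c} to the nonsmooth setting by working with the augmented objective $\Phi(\bfy) := \lambda \|\bfy\|_1 + F(\bfy)$. The three ingredients needed are a sufficient decrease estimate, a relative subgradient bound, and a nonsmooth Kurdyka-\L ojasiewicz inequality for $\Phi$ at the limit point.

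First, the sufficient decrease property is already supplied by \cref{mjlai08182018}: the sequence $\Phi(\bfy^{(k)})$ is monotone decreasing with step at least $\tfrac{\ell}{2}\|\bfy^{(k+1)}-\bfy^{(k)}\|^2$. Combined with the coercivity of $F$ inherited from that of $f$ (so $\Phi$ is coercive since $\|\cdot\|_1 \ge 0$), this forces $\{\bfy^{(k)}\}$ into a bounded region, so the sequence admits a cluster point $\bfy^*$, one has $\Phi(\bfy^{(k)}) \to \Phi(\bfy^*)$, and summing the decrease inequalities yields $\sum_k \|\bfy^{(k+1)}-\bfy^{(k)}\|^2 < \infty$.

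Next, the second conclusion of \cref{mjlai08182018} exhibits an explicit element $\xi^{(k+1)} \in \partial(\lambda\|\cdot\|_1)(\bfy^{(k+1)})$, namely
\[
\xi^{(k+1)} = -\nabla F_1(\bfy^{(k)}) + \nabla F_2(\bfy^{(k)}) - \tfrac{L_1}{2}(\bfy^{(k+1)}-\bfy^{(k)}).
\]
Setting $\bfv^{(k+1)} := \xi^{(k+1)} + \nabla F_1(\bfy^{(k+1)}) - \nabla F_2(\bfy^{(k+1)})$ produces an element of $\partial \Phi(\bfy^{(k+1)})$, and the Lipschitz continuity of $\nabla F_1$ and $\nabla F_2$ on the bounded region immediately gives $\|\bfv^{(k+1)}\| \le C_0 \|\bfy^{(k+1)}-\bfy^{(k)}\|$ for a constant $C_0$. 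This is the relative error bound that replaces the smooth identity used at step \cref{sk} in the earlier proof.

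Finally, since $F_1$ and $F_2$ are real analytic and $\|\cdot\|_1$ is semialgebraic (hence definable), $\Phi$ is a KL function at $\bfy^*$ with some exponent $\theta \in [0,1)$ and desingularizing function $\varphi(s) = c s^{1-\theta}$, in the nonsmooth sense of \cite{K98}. Mimicking the proof of \cref{mjlai07032018c} using the descent estimate, the subgradient bound $\|\bfv^{(k+1)}\| \le C_0 \|\bfy^{(k+1)}-\bfy^{(k)}\|$, and the concavity of $\varphi$, one obtains for $k$ past some $k_0$
\[
\|\bfy^{(k+1)}-\bfy^{(k)}\| \le C_1\bigl(\varphi(\Phi(\bfy^{(k)})-\Phi(\bfy^*)) - \varphi(\Phi(\bfy^{(k+1)})-\Phi(\bfy^*))\bigr),
\]
so the telescoping bound yields $\sum_k \|\bfy^{(k+1)}-\bfy^{(k)}\| < \infty$. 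Consequently $\{\bfy^{(k)}\}$ is Cauchy, it converges to $\bfy^*$, and $\bfv^{(k+1)} \to 0$ combined with closedness of $\partial \Phi$ forces $0 \in \partial \Phi(\bfy^*)$, i.e., $\bfy^*$ is a critical point.

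The main obstacle is the standard KL book-keeping, namely verifying that once the iterates enter a KL-neighborhood of $\bfy^*$ they remain there, which demands an inductive argument bounding cumulative step lengths by the telescoping $\varphi$-differences, exactly as in the proof of \cref{mjlai07032018c}. Apart from that, the argument is a faithful nonsmooth translation of the earlier convergence theorem.
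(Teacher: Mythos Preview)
Your proposal is correct and follows essentially the same route as the paper's proof: sufficient decrease from \cref{mjlai08182018}, an explicit subgradient bound from the optimality condition, the KL property for the sum of a semialgebraic and an analytic function, and the telescoping/neighborhood-induction argument. One small correction: the clean one-step inequality you display does not follow exactly as in \cref{mjlai07032018c}, because here the subgradient bound reads $\|\bfv^{(k+1)}\|\le C_0\|\bfy^{(k+1)}-\bfy^{(k)}\|$, so after shifting the KL inequality to index $k$ you pick up $\|\bfy^{(k)}-\bfy^{(k-1)}\|$ in the denominator; the paper handles this index offset with the standard $2ab\le a^2+b^2$ trick to obtain a two-step recursion that still telescopes.
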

\begin{proof}
Recall $F(\bfx)= g(\bfx) + f(\bfx)$.  From Theorem~\ref{mjlai08182018}, we have
\begin{equation}
\label{onekeyS}
\frac{\ell}{2} \|\bfy^{(k+1)} - \bfy^{(k)}\|^2 \le F(\bfy^{(k)}) - F(\bfy^{(k+1)}).
\end{equation}
That is, $F(\bfy^{(k)}), k\ge 1$ is strictly decreasing sequence. Due to the coerciveness,  we know that
$$\mathcal{R}:=\{\bfx\in \mathbb{R}^n, F(\bfy)\le F(\bfy^{(1)})\}$$
is a bounded set.
It follows that the sequence $\{\bfy^{(k)}\}_{k=1}^\infty \subset \mathcal{R}$ is a bounded sequence and
there exists a cluster point $\bfy^*$ and a subsequence $\bfy^{(k_i)}$ such that $\bfy^{(k_i)}\to \bfy^*$.
Note that $\{F(\bfy^{(k)})\}_{k=1}^\infty$ is a bounded monotonic descending sequence, then $F(\bfy^{(k)})\to F(\bfy^*)$ for all $k\ge 1$.
We claim that the sequence $\{\bfy^{(k)}\}_{k=1}^\infty$ has finite length, that is,
\begin{equation}\label{claim finite}
  \sum_{k=1}^\infty \|\bfy^{(k+1)}-\bfy^{(k)}\|< \infty.
\end{equation}
To establish the claim, we need to use the Kurdyka-\L ojasiewicz inequality (cf. \cite{K98}).
Note that the $\ell_1$ norm $\|\bfx\|_1$ is semialgebraic function and the function $f(\bfx)$ is analytic,
so the objective function $F(\bfx)$ satisfies the KL property at any critical point (cf. \cite{AB09},
\cite{ABRS10}, \cite{XY13}).
Let us prove that
$\|\nabla F(\bfy^{*})\|= 0$ holds, that is, $\bfy^*$ is a critical point of $F$.
Indeed, using one of the properties in \cref{mjlai08182018}, we have
\begin{align*}
\|\partial F(\bfy^{(k)})\| = & \|\partial g(\bfy^{(k)})+\nabla F_1(\bfy^{(k)})- \nabla F_2(\bfy^{(k)})\|\cr
\le & \|\nabla F_1(\bfy^{(k)})- \nabla F_1(\bfy^{(k-1)})\| + \|\nabla F_2(\bfy^{(k)})-
\nabla F_2(\bfy^{(k-1)})\| \\
 &+ \frac{L}{2} \|\bfy^{(k)}- \bfy^{(k-1)}\|
\end{align*}
by using the second conclusion of \cref{mjlai08182018}. Combining with \cref{onekeyS} and the Lipschitz differentiation of $F_1$ and $F_2$,
it gives that $\|\partial F(\bfy^{(k_i)})\|\to 0$. By a property of subgradient of $g$ (cf. \cite{R70})
and the continuity of the gradients $F_1$ and $F_2$,
we have $\|\partial F(\bfy^{*})\|= 0$ when $\bfy^{(k_i)}\to \bfy^*$.
Thus, $\bfy^*\in \hbox{domain}(\partial F)$, the set of all critical points of $F$.

Therefore, we can use KL inequality to obtain that
\begin{equation}\label{kl ineq}
  \varphi'(F(\bfy)-F(\bfy^*))\|\partial F(\bfy)\|\ge 1
\end{equation}
for all $\bfy$ in the neighborhood $B(\bfy^*,\delta)$.  As
$ F(\bfy^{(k)})-F(\bfy^*)\to 0, \, k \to \infty, $
there is an integer $k_0 $ such that for all $k\ge k_0$ it holds
\begin{equation} \label{sparse delta}
  \max{\left(\sqrt{2/\ell} \sqrt{ F(\bfy^{(k)})-F(\bfy^*)}, 2C/\ell \cdot \varphi(F(\bfy^{(k)})-F(\bfy^*))  \right)}  \le \delta/2.
\end{equation}
Without loss of generality, we may assume that $k_0=1$ and $\bfy^{(1)}\in B(\bfy^*,\delta/2)$.
Let us show that $\bfy^{(k)}, k\ge 1$ will be in the neighborhood $B(\bfy^*,\delta)$.
We shall use an induction to do so.
By \cref{sparse delta} we have
$$
\|\bfy^{(2)} -\bfy^*\|\le \|\bfy^{(2)}- \bfy^{(1)}\|+ \|\bfy^{(1)}- \bfy^*\|\le \sqrt{2 (F(\bfy^{(1)})-F(\bfy^*)/\ell} +
\|\bfy^{(1)}- \bfy^*\|\le \delta.
$$
Assume that $\bfy^{(k)}\in B(\bfy^*,\delta)$ for $k\le K$. From \cref{mjlai08182018}, we have
\begin{eqnarray*}
  \|\partial F(\bfy^{k+1})\| &=& \|\partial g(\bfy^{k+1})+\nabla f(\bfy^{k+1})\| \\
   &=& \|\nabla f(\bfy^{k+1})-\nabla f(\bfy^{k})- \frac{L_1}{2}(\bfy^{k+1}-\bfy^{k})\| \le
 C \|\bfy^{k+1}-\bfy^{k}\|,
\end{eqnarray*}
where constant $C:=L+L_1/2$. Putting it into \cref{kl ineq}, it gives that
\begin{equation} \label{varphi diff}
  \varphi'(F(\bfy^k)-F(\bfy^*)) \ge \frac{1}{C \|\bfy^{k}-\bfy^{k-1}\|}.
\end{equation}
On the other hand, from the concavity of $\varphi$ we get that
\begin{equation*}
  \varphi(F(\bfy^k)-F(\bfy^*))-\varphi(F(\bfy^{k+1})-F(\bfy^*)) \ge \varphi'(F(\bfy^k)-F(\bfy^*))(F(\bfy^{k})-F(\bfy^{k+1})).
\end{equation*}
Combining with \cref{onekeyS} and \cref{varphi diff}, we obtain
\begin{equation*}
  \varphi(F(\bfy^k)-F(\bfy^*))-\varphi(F(\bfy^{k+1})-F(\bfy^*)) \ge \frac{\ell}{2C}\cdot\frac{\|\bfy^{k+1}-\bfy^{k}\|^2}{\|\bfy^{k}-\bfy^{k-1}\|}.
\end{equation*}
Multiplying $\|\bfy^{(k)} -\bfy^{(k-1)}\|$ both sides of the above inequality, 	taking a square root both sides, and
the using a standard inequality $2ab\le a^2+ b^2$ on the left-hand side, we have
$$
\|\bfy^{(k)} -\bfy^{(k-1)}\| + \frac{2C}{\ell} (\varphi(F(\bfy^k)-F(\bfy^*))-\varphi(F(\bfy^{k+1})-F(\bfy^*)))
\ge 2 \|\bfy^{(k)} -\bfy^{(k+1)}\|
$$
for all $2\le k\le K$. It follows that
\begin{equation}\label{spase main inequality}
  \frac{2C}{\ell}\varphi(F(\bfy^{(1)})-F(\bfy^*)) \ge \sum_{j=1}^{K} \|\bfy^{(j+1)} -\bfy^{(j)}\| + \|\bfy^{(K+1)} -\bfy^{(K)}\|.
\end{equation}
That is, we have
\begin{align*}
\|\bfy^{(K+1)}- \bfy^*\| &\le \|\bfy^{(K+1)}- \bfy^{(1)}\|+ \|\bfy^{(1)} - \bfy^*\|
\le \sum_{j=1}^{K} \|\bfy^{(j+1)} -\bfy^{(j)}\|+ \|\bfy^{(1)} - \bfy^*\| \cr
& \le \frac{2C}{\ell}\varphi(F(\bfy^{(1)})-F(\bfy^*))+ \|\bfy^{(1)} - \bfy^*\|\le \delta.
\end{align*}
That is, $\bfy^{(K+1)}\in B(\bfy^*, \delta)$, which implies that all $\bfy^{(k)}$ are in $B(\bfy^*, \delta)$. From above, we know that the inequality \cref{spase main inequality} holds for all $k$, which show the claim \cref{claim finite} holds. It is clear that \cref{claim finite} implies that $\{\bfy^{(k)}\}_{k=1}^\infty$ is a Cauchy sequence and hence, it is convergent with $\bfy^{(k)}\to \bfy^\star$. Note that $\nabla F(\bfy^\star)=0$, which implies $\bfy^{(k)}$ converges to a critical point of $F$.
\end{proof}

Finally, let us show that the convergence is in a linear fashion. We begin with
\begin{lemma}
\label{mjlai08312018}
Let $g(\bfx) =\lambda \|\bfx\|_1$ for $\lambda>0$. Then for any $\bfx$, there exists a $\delta>0$ such that for any $\bfy\in B(\bfx, \delta)$, the open ball of radius $\delta$ at $\bfx$, there exists a subgradient
$\nabla g$ at $\bfx$,
\begin{equation}
\label{mjlai08312018a}
(\partial g(\bfy) - \partial g(\bfx))^\top (\bfy- \bfx)= 0.
\end{equation}
\end{lemma}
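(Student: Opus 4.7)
The plan is to exploit the componentwise separability of the subdifferential of $g(\bfx) = \lambda \|\bfx\|_1$. Coordinate by coordinate, $\partial(\lambda |x_i|) = \{\lambda \operatorname{sign}(x_i)\}$ when $x_i \neq 0$ and $\partial(\lambda |x_i|) = [-\lambda, \lambda]$ when $x_i = 0$. The identity \cref{mjlai08312018a} will reduce to an easy componentwise check once $\delta$ is chosen so that small perturbations of $\bfx$ cannot flip the sign of any nonzero coordinate.

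Concretely, I would set $\delta := \min\{|x_i| : x_i \neq 0\}$ (with the convention $\delta = +\infty$ when $\bfx = 0$). For any $\bfy \in B(\bfx,\delta)$ one has $|y_i - x_i| \le \|\bfy - \bfx\|_2 < \delta$ for every $i$. I would then split the coordinates into two groups. For indices $i$ with $x_i \neq 0$, the bound $|y_i - x_i| < |x_i|$ forces $y_i$ to share the sign of $x_i$; hence both $\partial g(x_i)$ and $\partial g(y_i)$ are the singleton $\{\lambda \operatorname{sign}(x_i)\}$, so their difference is zero and this coordinate contributes $0$ to the inner product.

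For indices $i$ with $x_i = 0$ there are two subcases. If $y_i = 0$ as well, then $y_i - x_i = 0$ and the contribution vanishes automatically, regardless of which element of $[-\lambda,\lambda]$ is picked. If $y_i \neq 0$, then $\partial g(y_i)$ is the singleton $\{\lambda\operatorname{sign}(y_i)\}$, which lies in $[-\lambda,\lambda] = \partial g(x_i)$; I would then select the free subgradient component at $\bfx$ to be exactly $\lambda\operatorname{sign}(y_i)$, an admissible choice which again forces the difference to zero. Summing over $i$ yields $(\partial g(\bfy) - \partial g(\bfx))^\top (\bfy - \bfx) = 0$.

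There is no serious obstacle here; the only subtle point worth flagging is that the specific selection of $\partial g(\bfx)$ depends on $\bfy$ (one cannot prescribe a single subgradient at $\bfx$ that works simultaneously for all $\bfy$ in the ball), but this matches exactly what the statement asks for. The choice of $\delta$ tied to the minimum of the nonzero components of $\bfx$ is what keeps the sign-pattern of the nonzero block of $\bfx$ stable and is the key quantitative ingredient of the argument.
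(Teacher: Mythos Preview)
Your argument is correct and follows essentially the same route as the paper: exploit the coordinatewise structure of $\partial(\lambda\|\cdot\|_1)$, pick $\delta$ so that no nonzero coordinate can change sign, and at the zero coordinates select the free subgradient of $\bfx$ to match $\lambda\operatorname{sign}(y_i)$. The paper's proof only spells out the one-dimensional case and leaves the passage to $\mathbb{R}^n$ implicit, whereas you carry out the full $n$-dimensional argument; otherwise the ideas coincide.
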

\begin{proof}
For similicity, consider $\bfx\in \mathbb{R}^1$. Then if $\bfx\not=0$, we can find $\delta=|\bfx|>0$ such that
when $\bfy\in B(\bfx, \delta)$, we have $\partial g(\bfy)= \partial g(\bfx)$ and hence, we have \cref{mjlai08312018a}.
If $\bfx=0$, for any $y\not=0$, we choose $\partial g(0)$ according to $\bfy$, i.e. $\partial g(0)=1$ if $\bfy>0$
and $\partial g(0)=-1$ if $\bfy<0$. Then we have \cref{mjlai08312018a}.
\end{proof}

In the following lemma, we need to use the sparse set ${\cal R}_s$
\begin{equation}
\label{collection}
{\cal R}_s=\{\bfx\in \mathbb{R}^n|  \|\bfx\|_0\le s\}  = \bigcup_{I \subset \{1,\cdots, n\}\atop |I|=s}
\mathbb{R}^s_I
\end{equation}
which is clearly the union of all canonical subspaces $\mathbb{R}^s_I=\hbox{span}\{\bfe_{i_1}, \cdots,
\bfe_{i_s}\}$ if $I=\{i_1, i_2, \cdots, i_s\}$.

\begin{lemma}
\label{mjlai08312018c}
Let $F(\bfx)= g(\bfx)+ f(\bfx)$. Suppose that $f$ is L-Lipschitz differentiable.
Let $\bfx^*$ be a critical point of
$F$ as explained in \cref{mjlai8282018}. Suppose that either none of entries of $\bfx^*$ is zero or
suppose $\bfx\in \mathbb{R}^s_I$ if $\bfx^*\in \mathbb{R}^s_I$ for some $s\in \{1, \cdots, n\}$.
Then there exists $\delta>0$, e.g. $\displaystyle \delta=\min_{|x^*_i|\not=0} |x^*_i|$  such that for all
$\bfx\in B(\bfx^*, \delta)$,
\begin{equation}
\label{L1newkey}
|F(\bfx)- F(\bfx^*)|\le C\|\bfx  -\bfx^*\|^2.
\end{equation}
\end{lemma}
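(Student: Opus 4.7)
The plan is to decompose $F(\bfx) - F(\bfx^*) = [g(\bfx) - g(\bfx^*)] + [f(\bfx) - f(\bfx^*)]$ and to show that, under the case assumptions, both differences are affine in $\bfx - \bfx^*$ up to an $O(\|\bfx - \bfx^*\|^2)$ remainder. The linear parts will cancel because Theorem~\ref{mjlai8282018} supplies a subgradient $\bfp \in \partial g(\bfx^*)$ with $\bfp = -\nabla f(\bfx^*)$. Choosing $\delta = \min_{i:\,x_i^* \ne 0}|x_i^*|$ (or half of it) is what both case hypotheses will exploit.

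For the smooth term, the $L$-Lipschitz differentiability of $f$ yields the standard bound
\[
\bigl|f(\bfx) - f(\bfx^*) - \nabla f(\bfx^*)^\T (\bfx - \bfx^*)\bigr| \le \frac{L}{2}\|\bfx - \bfx^*\|^2.
\]
For the nonsmooth term I will invoke Lemma~\ref{mjlai08312018} coordinatewise at $\bfx^*$. In the first case (no vanishing entries of $\bfx^*$), for every $\bfx \in B(\bfx^*, \delta)$ the sign pattern of $\bfx$ matches that of $\bfx^*$, so $g$ is differentiable at both points with common gradient $\lambda\,\mathrm{sign}(\bfx^*)$, and integrating along the segment gives the exact identity $g(\bfx) - g(\bfx^*) = \lambda\,\mathrm{sign}(\bfx^*)^\T(\bfx - \bfx^*)$. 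In the second case, the restriction $\bfx \in \mathbb{R}^s_I$ forces $\bfx - \bfx^*$ to be supported in $I$, where the nonzero coordinates of $\bfx^*$ live; choosing $\bfp$ with $p_i = \lambda\,\mathrm{sign}(x_i^*)$ on $I$ (and any admissible values in $[-\lambda,\lambda]$ off $I$, since those are annihilated by the zero coordinates of $\bfx - \bfx^*$) again produces $g(\bfx) - g(\bfx^*) = \bfp^\T(\bfx - \bfx^*)$ exactly.

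Adding the two contributions and using $\bfp = -\nabla f(\bfx^*)$ (permissible in the first case because $\partial g(\bfx^*)$ is a singleton, and in the second case because the stationarity condition determines $\bfp$ on $I$ while the off-$I$ ambiguity is irrelevant) the linear terms cancel, and \cref{L1newkey} follows with $C = L/2$. The main obstacle is reconciling the two uses of the subgradient at $\bfx^*$: the one that makes $g$ locally affine through Lemma~\ref{mjlai08312018}, and the one delivered by stationarity. The two case hypotheses of the lemma are designed precisely to force these two subgradients to agree on the linear span of admissible displacements $\bfx - \bfx^*$; if one were to drop them and let $\bfx$ cross a coordinate hyperplane on which $\bfx^*$ vanishes, the piecewise-linear kink in $\|\cdot\|_1$ would produce a $\Theta(\|\bfx-\bfx^*\|)$ discrepancy that no quadratic bound could absorb.
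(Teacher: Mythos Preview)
Your argument is correct and follows essentially the same route as the paper: split $F(\bfx)-F(\bfx^*)$ into the $g$-part and the $f$-part, control the $f$-part by a second-order bound, linearize the $g$-part via Lemma~\ref{mjlai08312018}, and cancel the first-order contributions using the stationarity relation $\partial g(\bfx^*)+\nabla f(\bfx^*)=0$. The only cosmetic differences are that the paper bounds $f$ by a Taylor expansion with the Hessian at an intermediate point (so it tacitly uses twice differentiability), whereas you use the cleaner $L$-Lipschitz gradient inequality that matches the stated hypothesis; and the paper writes $g(\bfx)-g(\bfx^*)\le \partial g(\bfx)^\top(\bfx-\bfx^*)$ and then invokes Lemma~\ref{mjlai08312018} to kill the resulting linear term, while you argue the exact identity $g(\bfx)-g(\bfx^*)=\bfp^\top(\bfx-\bfx^*)$ directly from the preserved sign pattern. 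Your version has the small advantage of delivering the two-sided bound $|F(\bfx)-F(\bfx^*)|\le \tfrac{L}{2}\|\bfx-\bfx^*\|^2$ in one stroke, whereas the paper's chain of inequalities literally yields only the upper bound before asserting the absolute value.
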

\begin{proof}
At $\bfx^*$, we have $\partial F(\bfx^*) = \partial g(\bfx^*)+ \nabla f(\bfx^*)=0$.
By using \cref{mjlai08312018} and either one of the assumptions, we have
\begin{align*}
F(\bfx)- F(\bfx^*) &=  g(\bfx)- g(\bfx^*) + f(\bfx)- f(\bfx^*) \cr
&\le \partial g(\bfx)^\top (\bfx-\bfx^*) +
\nabla f(\bfx*)(\bfx- \bfx^*) + \frac{1}{2} (\bfx-\bfx^*)^\top \nabla^2f(\xi) (\bfx- \bfx^*)\cr
&= (\partial g(\bfx)- \partial g(\bfx^*))^\top (\bfx-\bfx^*) + \frac{1}{2} (\bfx-\bfx^*)^\top \nabla^2f(\xi) (\bfx- \bfx^*)\cr
&= \frac{1}{2} (\bfx-\bfx^*)^\top \nabla^2f(\xi) (\bfx- \bfx^*),
\end{align*}
where $\xi$ is a point in between $\bfx^*$ and $\bfx$.
That is, $|F(\bfx)- F(\bfx^*) |\le C \|\bfx -\bfx^*\|^2$ for a positive constant $C$.
\end{proof}

We are now ready to establish the following result on the rate of convergence
\begin{theorem}
\label{mjlai08312018b}
Suppose that $F_2$ is strongly convex.
Starting from any initial guess $\bfx^{(1)}$, let $\bfx^{(k+1)}$ be the solution in \cref{newmethod}
for all $k\ge 1$. Without loss of generality, we assume that
  $\bfx^{(k)}, k\ge 1$ converge to a critical point $\bfx^*$ of $F$ by using \cref{mjlai8282018}.
Then for any $\epsilon>0$, either $\bfx^{(k+1)}\in B(\bfx^*, \epsilon)$ or
\begin{equation}
\label{convergencerate2}
\|\bfx^{(k+1)}- \bfx^*\|\le C_\epsilon \tau^k
\end{equation}
for a positive constant $C_\epsilon$ dependent on $\epsilon$ and $\tau\in (0, 1)$ independent of $\epsilon$.
\end{theorem}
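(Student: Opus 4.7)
The plan is to adapt the linear-convergence argument of \cref{mjlai07032018c} to the nonsmooth (sparse) setting by combining three ingredients already assembled above: the sufficient-decrease inequality from \cref{mjlai08182018}, the subgradient bound implied by its first-order optimality condition, and the local quadratic bound from \cref{mjlai08312018c} which plays the role of a Kurdyka-\L ojasiewicz inequality with exponent $\theta=1/2$.

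First, from the optimality condition in \cref{mjlai08182018} we have
\[
\partial g(\bfx^{(k+1)}) + \nabla F_1(\bfx^{(k)}) - \nabla F_2(\bfx^{(k)}) + \frac{L_1}{2}(\bfx^{(k+1)} - \bfx^{(k)}) = 0.
\]
Selecting this particular subgradient of $F$ and adding and subtracting $\nabla F_1(\bfx^{(k+1)}) - \nabla F_2(\bfx^{(k+1)})$, the Lipschitz differentiability of $F_1$ and $F_2$ yields
\[
\|\partial F(\bfx^{(k+1)})\| \le C_1 \|\bfx^{(k+1)} - \bfx^{(k)}\|
\]
for a constant $C_1$ independent of $k$. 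Since $\bfx^{(k)}\to\bfx^*$ by \cref{mjlai8282018}, for all sufficiently large $k$ either $\bfx^{(k+1)}\in B(\bfx^*,\epsilon)$ (and the conclusion is trivial) or $\bfx^{(k+1)}$ lies in the neighborhood where \cref{mjlai08312018c} applies; in the latter case $F(\bfx^{(k+1)})-F(\bfx^*)\le C\|\bfx^{(k+1)}-\bfx^*\|^2$. Combined with the subgradient bound this yields the KL-type estimate
\[
\sqrt{F(\bfx^{(k+1)}) - F(\bfx^*)} \le C_2 \|\bfx^{(k+2)} - \bfx^{(k+1)}\|.
\]

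Second, inserting this estimate into the concavity argument used between inequalities \cref{onekey3} and \cref{main inequality} in the proof of \cref{mjlai07032018c}, together with the sufficient decrease from \cref{mjlai08182018}, produces the one-step contraction
\[
\sqrt{F(\bfx^{(k)}) - F(\bfx^*)} - \sqrt{F(\bfx^{(k+1)}) - F(\bfx^*)} \ge C_3 \|\bfx^{(k+1)} - \bfx^{(k)}\|
\]
for some $C_3>0$ depending only on $\ell,L_1,L,C$. Setting $S_k := \sum_{i \ge k} \|\bfx^{(i+1)} - \bfx^{(i)}\|$ and repeating verbatim the telescoping-and-bootstrap calculation at the end of the proof of \cref{mjlai07032018c}, one gets $S_{k+1} \le \frac{C_2 - C_3}{C_2}\, S_k$, hence $S_{k+1} \le \tau^k S_1$ for $\tau=(C_2-C_3)/C_2 \in (0,1)$, and finally $\|\bfx^{(k+1)} - \bfx^*\| \le S_{k+1} \le C_\epsilon \tau^k$.

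The main obstacle is verifying that \cref{mjlai08312018c} actually applies to $\bfx^{(k+1)}$ when $\bfx^*$ has zero coordinates: the lemma then requires $\bfx^{(k+1)}$ to share the support of $\bfx^*$. Because the proximal mapping solving \cref{DCalg4SPR2} is a componentwise shrinkage-thresholding operator, once $\bfx^{(k)}$ is sufficiently close to $\bfx^*$ the shrinkage forces the corresponding coordinates of $\bfx^{(k+1)}$ to be exactly zero, so the support stabilizes in finitely many steps; before stabilization one cannot invoke the quadratic bound, which is precisely the reason the theorem is stated as a dichotomy. The constant $C_\epsilon$ absorbs the finite number of initial iterations spent before support identification, while $\tau$ is universal, depending only on $\ell$, $L_1$, $L$, and on $C$ from \cref{mjlai08312018c}.
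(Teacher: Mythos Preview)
There is a genuine gap in your derivation of the ``KL-type estimate''
\[
\sqrt{F(\bfx^{(k+1)}) - F(\bfx^*)} \le C_2 \|\bfx^{(k+2)} - \bfx^{(k+1)}\|.
\]
You claim this follows from the quadratic bound $F(\bfx^{(k+1)})-F(\bfx^*)\le C\|\bfx^{(k+1)}-\bfx^*\|^2$ of \cref{mjlai08312018c} together with your subgradient bound $\|\partial F(\bfx^{(k+1)})\|\le C_1\|\bfx^{(k+1)}-\bfx^{(k)}\|$. But the subgradient inequality is an \emph{upper} bound on $\|\partial F\|$; to pass from $\|\bfx^{(k+1)}-\bfx^*\|$ to a consecutive-iterate difference you need a \emph{lower} bound of the form $\|\partial F(\bfx^{(k+1)})\|\ge c\,\|\bfx^{(k+1)}-\bfx^*\|$. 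Nothing in your list of ingredients supplies this. (In the smooth case, \cref{elementary} obtains it from invertibility of the Hessian; \cref{mjlai08312018c} only gives the other half.)

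This missing lower bound is exactly where the paper's proof diverges from yours, and it is also the true source of the dichotomy and of the $\epsilon$-dependence of $C_\epsilon$. The paper argues that $F$ is strongly convex \emph{outside} the ball $B(\bfx^*,\epsilon)$, so that if $\bfx^{(k+1)}\notin B(\bfx^*,\epsilon)$ one has $C_\epsilon\|\bfx^{(k+1)}-\bfx^*\|\le\|\partial F(\bfx^{(k+1)})-\partial F(\bfx^*)\|$; combining this with \cref{mjlai08312018} and the optimality condition yields $\|\bfx^{(k+1)}-\bfx^*\|\le \frac{L+L_1}{C_\epsilon}\|\bfx^{(k+1)}-\bfx^{(k)}\|$, which is then fed into the left-hand side of the concavity estimate. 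Your alternative explanation of the dichotomy via finite-time support identification is plausible and interesting, but it is disconnected from your actual chain of inequalities and does not, as written, produce the needed bound; in particular, it would make $C_\epsilon$ depend on the number of pre-identification steps rather than on the strong-convexity constant outside the ball, and you would still have to justify why the shrinkage step forces exact zeros in finitely many iterations under the stated hypotheses.
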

\begin{proof}
According to the results in \cref{mjlai08182018} and \cref{mjlai8282018}, we have
\begin{equation}
\label{L1onekey22}
C_0 \|\bfx^{(k+1)} - \bfx^{(k)}\|^2 \le (F(\bfx^{(k)})-F(\bfx^*)) - (F(\bfx^{(k+1)}) - F(\bfx^*))
\end{equation}
for a positive constant $C_0$.   We now claim that
\begin{equation} \label{Lonekey2}
C_1 \|\bfx^{(k+1)} - \bfx^{(k)}\| \le \sqrt{F(\bfx^{(k)})- F(\bfx^*)} - \sqrt{F(\bfx^{(k+1)}) - F(\bfx^*)}
\end{equation}
for a positive constant $C_1$.
To establish this claim, we need to use the result in \cref{mjlai08312018c}. Let us rewrite the inequality
in \cref{mjlai08312018c} as
$$
\frac{1}{\sqrt{F(\bfx^{(k+1)})- F(\bfx^*)}} \ge \frac{C}{\|\bfx^{(k+1)} -\bfx^*\|}.
$$
Multiplying the above inequality to the inequality in \cref{Lonekey2}, we have
\begin{equation}
\label{L1key3}
C_0 C \frac{\|\bfx^{(k+1)} - \bfx^{(k)}\|^2}{\|\bfx^{(k+1)} -\bfx^*\|}
\le \frac{(F(\bfx^{(k)})-F(\bfx^*)) - (F(\bfx^{(k+1)}) - F(\bfx^*))}{\sqrt{F(\bfx^{(k+1)})- F(\bfx^*)}}
\end{equation}
Consider $g(t) = \sqrt{t}$ which is concave over $[0, 1]$, we know $g(t)- g(s)\ge g'(t)(t-s)$.
Thus, the right-hand side above is less than or equal to the right-hand side of \cref{Lonekey2}.
We now work on the left-hand side of the inequality above. Let us first note that $F$ is strongly convex
outside the ball $B(\bfx_\bfb, \epsilon)$ (in the real variable setting). If $\bfx^{(k+1)}$ is within the
$B(\bfx_\bfb, \epsilon)$, we do not need to do iterations further when $\epsilon>0$ is a tolerance.
Otherwise, we use the strong convexity of $F$ to have
$$
C_\epsilon \|\bfx^{(k+1)}- \bfx^*\|\le \|\nabla F(\bfx^{(k+1)})- \nabla F(\bfx^*)\|
$$
for a positive constant dependent on $\epsilon$. The second property of \cref{mjlai08182018} implies
$$
\partial g(\bfx^{(k+1)}) + \nabla F_1(\bfx^{(k)} - \nabla F_2(\bfx^{(k)}) + \frac{L_1}{2}(\bfx^{(k+1)}
-\bfx^{(k)}) =0
$$
and $\partial g(\bfx^*)+ \nabla f(\bfx^*) =0$. By using \cref{mjlai08312018}, it follows that
$$
\nabla F(\bfx^{(k+1)})- \nabla F(\bfx^*) =
\nabla f(\bfx^{(k+1)})- \nabla f(\bfx^{(k)}) - \frac{L_1}{2}(\bfx^{(k+1)} -\bfx^{(k)}).
$$
In other works,
$$
C_\epsilon \|\bfx^{(k+1)}- \bfx^*\| \le \|\nabla f(\bfx^{(k)})- \nabla f(\bfx^{(k+1)})\| + \frac{L_1}{2}\|\bfx^{(k+1)} -\bfx^{(k)}\|
$$
Using the Lipschitz differentiability of $f$, we have
$$
\|\bfx^{(k+1)}- \bfx^*\| \le \frac{L+L_1}{C_\epsilon} \|\bfx^{(k+1)} -\bfx^{(k)}\|.
$$
The left-hand side of the equation in \cref{L1key3} can be simplified to be
$$
C_0C \frac{L_1+L}{C_\epsilon} \|\bfx^{(k+1)} -\bfx^{(k)}\|
$$
which is the desired term on the left-hand of the inequality in \cref{Lonekey2}. These establish the claim.

By summing the inequality in \cref{Lonekey2} above, it follows
$$
\sum_{k\ge 1} \|\bfx^{(k+1)} - \bfx^{(k)}\| \le \frac{1}{C_1} \sqrt{f(\bfx^{(1)})- f(\bfx^*)}.
$$
That is, $\bfx^{(k)}$ is a Cauchy sequence and hence, it is convergent.

The remaining part of the proof is to establish the convergence rate. The proof is similar to the one in
a previous section. We leave the detail to the interested reader.
\end{proof}

\section{Numerical  Results}
\label{sec:numerical}
In this section, we report some computational results from our DC based algorithm and $\ell_1$ DC based
algorithm.
The significance of these results is  to demonstrate that the DC based algorithm is able
to retrieve real signals of size $n$ from $m$ measurements with high probability around 80\%
over 1000  repeated runs when $m\approx 2n$.  As demonstrated in \cite{TWF} (cf. Figures 8 and 9),
the truncated Wirtinger flow algorithm,
and the original Wirtinger flow algorithm needs $m\approx 3n$ to be able to retrieve Gaussian random signals of size $n$. To retrieve sparse solution, the $\ell_1$ DC based algorithm together with thresholding technique needs only $m\approx n$ measurements.
Thus this section is divided into two subsections. We first present how to use our DC based algorithm to retrieve general signals in the real and complex variable settings.
As the Wirtinger flow (Wf) algorithm requires $m\ge 3n$ to be able to retrieve
the real variable solution, we shall not show the performance of the Wf algorithm.
Instead, we shall also present the Gauss-Newton algorithm from \cite{Gaoxu}
to compare with our DC based algorithm.  Next we shall present numerical
experimental results to demonstrate our $\ell_1$ DC based algorithm is able to use $m\approx n$
measurements to retrieve sparse signals.

\subsection{Phase Retrieval of General Signals}
\begin{example}
\label{prex1}
In this example, we recover the solution $\bfx_\bfb$ from the given measurements \cref{quadcond} using
Gaussian random measurement vectors $\bfa_j, j=1, \cdots, m$. The number $m$ of measurements is around the twice
of the size of the solution $\bfx_\bfb$. In \cref{prtab1}, we show the number of successes of retrieving
$\bfx_\bfb$ over $1000$ repeated runs.  We fix $n=128$ and $m=k*n/16$ for $k=12, 13, \cdots, 35$.

\begin{table}[htbp]
{\footnotesize
\caption{The numbers of successful retrieved solution over 1000 repeated runs based on numbers of
measurements satisfying the relations $m/n$ listed above, where $n$ is the size of the solution \label{prtab1}}
\begin{center}
\begin{tabular}{|c|ccccccc|}\hline
$m/n$ & 1.3750  &   1.4375 &   1.5000  &  1.5625  &  1.6250 &   1.6875  &  1.7500 \cr \hline
successes & 20 &   48 & 107 & 150 &  239 & 284 & 446\cr \hline \hline
$m/n$ & 1.8125  & 1.8750  &  1.9375 &   2 &   2.0625  & 2.1250  &  2.1875\cr \hline
successes & 511 &  588 &  650&  708 &  771 &  844 &  882 \cr\hline
\end{tabular}
\end{center}
}
\end{table}

From \cref{prtab1}, we can see that the DC based algorithm can retrieve the solutions using the number
$m$ of measurements around $2n$ with high probabilities $\ge 70\%$.
\end{example}

\begin{example}
We next repeat \cref{prex1} using more number of measurements. In this case, we are able to use the
Gauss-Newton method to retrieve solutions as the Hessian can be inverted.
Hence, we will compare the numbers of successes from the
Gauss-Newton method and the DC based method in \cref{prtab2}.
\begin{table}[htbp]
{\footnotesize
\caption{The numbers of successful retrieved solution over 1000 repeated runs based on numbers of measurements satisfying the relations $m/n$ listed above, where $n=128$ is the size of the solution \label{prtab2}}
\begin{center}
\begin{tabular}{|c|ccccccc|}\hline
$m/n$ &  2.3125  &  2.3750 &   2.4375  &  2.5000  &  2.5625 &   2.6250  &  2.6875 \cr \hline
GN successes & 560 &  672 &  708 & 776 &  831 &  882 &  912 \cr \hline
DC successes &  937 & 957 & 961&   967 &    991 &       991 &        989 \cr \hline \hline
$m/n$ & 2.7500  &  2.8125  & 2.8750 &   2.9375  &  3 &  3.0625   & 3.1250\cr \hline
GN successes & 939 & 950 &  960 &  980 &  986 &  987 &  991 \cr\hline
DC successes &  995 & 994 & 995 &  1000 &  1000 &    998 & 998\cr \hline
\end{tabular}
\end{center}
}
\end{table}
\end{example}

\begin{example}
This example shows that robustness of the DC based algorithm. We repeat the computation in \cref{prex1}
by adding noises to the measurements.  One way to generate a noisy input is to add noises $\epsilon_j$ to
the clean measurements:
\begin{equation}
\label{noisym}
\hat{b}_j= |\langle \bfa_j, \bfx_\bfb\rangle|^2+ \epsilon_j, \quad j=1, \cdots, m
\end{equation}
Another way to generate a noisy input is to add noises $\delta_j$ and $\epsilon_j$ to the clean measurements:
\begin{equation}
\label{noisym2}
\tilde{b}_j= |\langle \bfa_j, \bfx_\bfb\rangle+\delta_j|^2+ \epsilon_j, \quad j=1, \cdots, m
\end{equation}
For noisy measurements of model \cref{noisym}, we assume that $\epsilon_j$ are subject to uniform random
distribution between $[-u,u]$ with mean zero, where $u=1e-1$, $1e-3$ and $1e-5$.   As long as the tolerance
as a stopping criterion for the Gauss-Newton method and DC based method is the same as $e$ or bigger than $e$
both algorithms produce the same successes of retrieval as in Table~\ref{prtab2}.
For noisy measurements of model \cref{noisym2}, we assume that both $\epsilon_j$ and $\delta_j$ are subject to
uniform distribution between $[-u,u]$ with mean zero. When the tolerance is the same as $e$ or larger than $e$,
both algorithms can retrieve the solution from noisy measurements \cref{noisym2} just as the same as
in the previous example.
\end{example}

In addition to the retrieve real variable solutions, we shall also repeat the same experiments for complex
variable solutions.
\begin{example}
In this example, we use the DC based algorithm and the
Gauss-Newton method to retrieve complex variable solutions. The number of measurements vs. the number of
entries of complex variable solutions is around 3. That is,  $m\approx 3n$ with $n=128$. For Gaussian random
measurements $\bfa_j= \bfa_{j,R}+\bfi \bfa_{j,I}$, $j=1, \cdots, m$, we retrieve $\bfz\in \mathbb{R}^n$ with
$\bfz =\bfx+ \bfi \bfy$ and $\bfx, \bfy\in \mathbb{R}^n$ from $|\langle  \bfa_j, \bfz\rangle|^2, j=1, \cdots,
m$.   We will show the numbers of successes from the
Gauss-Newton method and the DC based method over 100 repeated runs  in \cref{prtab3}.
\begin{table}[htbp]
{\footnotesize
\caption{The numbers of successful retrieved solution over 100 repeated runs based on numbers of measurements
satisfying the relations $m/n$ listed above, where $n=128$ is the size of the solution \label{prtab3}}
\begin{center}
\begin{tabular}{|c|cccccccccc|}\hline
$m/n$ &   2.562 &  2.625 &   2.687  &  2.750  &   2.812  &  2.875 &   2.937&
3.000  & 3.062  & 3.125\cr \hline
GN alg. &   7 &   21 &   21 &   37  &  56  &   60  &   75 &   72  & 84  &  88\cr  \hline
DC alg. & 0   &   0 &    0 &    3 &   19  &   52 &    78 &   80 &   94  &  99\cr \hline
\end{tabular}
\end{center}
}
\end{table}
The \cref{prtab3} shows that the DC based algorithm is able to retrieve complex variable solutions very well
when $m\ge 3n$ while the Gauss-Newton method can find solutions even when $m \ge 2.5n$. However, the
successful rate is lower than the DC based algorithms starting from $m \ge 2.9n$.
\end{example}

We now present some numerical results to demonstrate that the $\ell_1$ DC based \cref{L1DCalgsparse} works well.

\begin{algorithm}
\caption{$\ell_1$ DC based Algorithm}
\label{L1DCalgsparse}
\begin{algorithmic}
\STATE{We use the same initialization as in the previous examples.}
\WHILE{$k\ge 1$}
\STATE{Solve \cref{DCalg4SPR2} to get $\bfy^{(k+1)}$.}
\STATE{Apply a modified Attouch-Peypouquet technique  to get a new  $\bfy^{(k+1)}$.}
\STATE{Until the maximal number of iterations is reached.}
\ENDWHILE
\RETURN $\bfy^T$
\end{algorithmic}
\end{algorithm}

In \cref{L1DCalgsparse}, the modified Attouch-Peypouquet technique is to
use the Attouch-Peypouquet iteration (cf. \cite{AP16})
in the first few $k$ iterations, say $k\le K$ with variable step size $\beta_k=k/(k+\alpha)$ and then
a fixed step size $\beta_K$ for the remaining iterations.

\begin{example}
We have experimented \cref{L1DCalgsparse} numerically for retrieving solutions in the real variable setting.
We use $n=100$ and $m=1.1n, 1.2n, \ldots, 2.5n$. All measurement vectors $\bfa_j, j=1, \ldots, m$ are Gaussian
random vectors. So is $\bfx_\bfb$. We use \cref{L1DCalgsparse} to recover $\bfx_\bfb$ from
$b_j = |\langle \bfa_j, \bfx_\bfb\rangle|^2, j=1, \ldots, m$ based on $5000$ iterations when $n=100$. To
recover a general solution $\bfx_\bfb$, we use a small value $\lambda=1e-5$. We repeat the experiment 100 times
and summarize the frequency of retrievals listed in \cref{prtab6}.
\begin{table}[htbp]
\caption{The numbers of successful retrieved solution over 100 repeated runs based on numbers of measurements
satisfying the relations $m/n$ listed above, where $n=100$ is the size of the solution \label{prtab6}}
\centering
\begin{tabular}{|c|ccccccccccc|}\hline
$m/n$ &   1.5 &  1.6 &   1.7  &  1.8  &   1.9  &  2 &   2.1& 2.2  & 2.3  & 2.4 & 2.5\cr \hline
$\ell_1$ DC alg. & 0   &  0  &    2  &   8  &   28 &   57 &   72  &   91 &    93 &  93   &  99\cr \hline
\end{tabular}
\end{table}
\end{example}

\subsection{Phase Retrieval of Sparse Signals}
Next we explain how to use our $\ell_1$ DC based algorithm to retrieve sparse solutions. A key point is to use
$m$ measurement values with $m$ smaller than $2n$ or even small than $n$. In such a setting,
many existing algorithms will fail.
As shown in the subsection above, when  $m\approx 2n$, we are able to retrieve any
solution, no matter sparse or not. However, when $m\approx 1.5n$, we are not able to retrieve general signals.
The point of our numerical experiments is to see if we are able to retrieve sparse signals when $m\approx n$.
The performance of \cref{L1DCalg} is not very good. We need
to improve it.  By using the sparsity, we will enhance the $\ell_1$ based algorithm
by using the projection technique. That is, we project $\bfy^{(k+1)}$ from \cref{L1DCalg} to the set of
all $s$-sparse vectors. That is, we use the hard thresholding technique to update $\bfy^{(k+1)}$.
This leads to an $\ell_1$ DC based algorithm with hard thresholding technique given below.

\begin{algorithm}
\caption{$\ell_1$ DC based Algorithm with Hard Thresholding}
\label{LaisAPA}
\begin{algorithmic}
\STATE{We use the same initialization as in the previous subsection.}
\WHILE{$k\ge 1$}
\STATE{Solve \cref{DCalg4SPR2} using the shrinkage-thresholding technique to get $\bfy^{(k+1)}$.}
\STATE{Apply a modified Attouch-Peypouquet technique  to get a new  $\bfy^{(k+1)}$. }
\STATE{Project $\bfy^{(k+1)}$ into the collection ${\cal R}_s$, $s$-sparse set. That is, let $\bfz^{(k+1)}$ solve the following minimization problem:
\begin{equation}
\label{Sproj}
\sigma_s(\bfx^k)=\min_{\bfz\in {\cal R}_s}\|\bfy^{(k+1)}- \bfz\|_1
\end{equation}}
\STATE{Let $\bfy^{(k+1)}= \bfz^{(k+1)}$}
\ENDWHILE
\RETURN the maximal number of iterations $\bfy^T$
\end{algorithmic}
\end{algorithm}

We now present some numerical results to demonstrate that \cref{LaisAPA} works well.

\begin{example}
 Fix $m\le n$. Many existing computational algorithms fail as the number of measurements is too small.
However, \cref{LaisAPA}
is able to retrieve sparse solutions from the phaseless measurements. Let us present our numerical findings
in \cref{prtab8} and \cref{prtab9}.
\begin{table}[htbp]
{\footnotesize
\caption{The numbers of successful retrieval of sparse solutions with sparsities $s=1, 5, 10, 20, 30, 40$
over 100 repeated runs
based on numbers of measurements satisfying the relations $m/n$ listed below,
where $n=100$ is the size of the solution \label{prtab8}}
\begin{center}
\begin{tabular}{|c|c|cccccccccc|}\hline
$m/n$ &   &   1.1 &  1.2 &   1.3  &  1.4  &   1.5  &  1.6 &   1.7 & 1.8  & 1.9  & 2 \cr \hline
Alg.~\ref{LaisAPA} & $s=1$ & 75  & 83  &  86  &  80  &   88  &  82  &  83 &   87 &   86  &  94 \cr \hline
Alg.~\ref{LaisAPA} & $s=5$ & 54  &   61 &   72 &  63 &  80   & 80   & 74  &  85  &  83  &  87 \cr \hline
Alg.~\ref{LaisAPA} & $s=10$ &  40  &  37  &  54  &  46 &   54 &   70  &   64  &  72   & 79   & 80\cr \hline
Alg.~\ref{LaisAPA} & $s=20$ & 15  &  16  &  22  &   27  &   36  &  45  &  42  &   47   &  59 &  57\cr \hline
Alg.~\ref{LaisAPA} & $s=30$ &  0  &   3  &   3  &  10   &  15  &  22  &  33   &  36  &  39  &  44\cr \hline
Alg.~\ref{LaisAPA} & $s=40$ & 0  &   0   &   0  &   0   &  6   &  10  &  11   &  17  &  30  & 39 \cr \hline
\end{tabular}
\end{center}
}
\end{table}

\begin{table}[htbp]
{\footnotesize
\caption{The numbers of successful retrieval of sparse solutions with sparsities $s=1,2,...,10$
over 100 repeated runs
based on numbers of measurements satisfying the relations $m/n$ listed above,
where $n=100$ is the size of the solution \label{prtab9}}
\begin{center}
\begin{tabular}{|c|c|cccccc|}\hline
$m/n$ &   &      1  &  0.9 &   0.8 & 0.7  & 0.6  & 0.5 \cr \hline
Alg.~\ref{LaisAPA} & $s=1$ &   83  &   80 &   66 &   66 &   58 &   52 \cr \hline
Alg.~\ref{LaisAPA} & $s=2$  &  80  &  64  &  73  &  57  &  53  &  45\cr \hline
Alg.~\ref{LaisAPA} & $s=4$  &  59  &  60  &  47  &  43  &  30  &  18\cr \hline
Alg.~\ref{LaisAPA} & $s=5$   &  56  &  44  &  38  &  26  &  11  &  7\cr \hline
Alg.~\ref{LaisAPA} & $s=10$  &  23 &   14  &   4 &    0 &    0 &  0\cr \hline
\end{tabular}
\end{center}
}
\end{table}

From  \cref{prtab8} and \cref{prtab9}, we can see that \cref{LaisAPA} is able to recover
sparse solutions with high frequency of success.
\end{example}

\section{Appendix}
\label{sec:appendix}
In this section we give some deterministic description of the landscape function of the minimizing function $F$
in \cref{LSmin}. For convenience,
let $A_\ell= \bfa_\ell \bar{\bfa}_\ell^\top$ be the Hermitian matrix of rank one for $\ell=1, \cdots, m$. We first need
\begin{definition}
We say ${\bf a}_j, j=1, \cdots, m$ are  $a_0$-generic if they satisfy
\begin{equation*}
 \|(\bfa_{j_1}^*\bfy, \ldots, \bfa_{j_n}^*\bfy)\|\ge a_0 \|\bfy\|, \quad
\forall \bfy\in \mathbb{C}^n
\end{equation*}
for a positive $a_0\in (0,1)$ for any $1\le j_1<j_2 < \cdots < j_n\le m$.
\end{definition}

\begin{theorem}
\label{mjlai07062018}
Consider the real variable setting.
Let $H_f(\bfx)$ be the Hessian of the minimizing function $f(\bfz)$ and let
$\bfx^\star$ be a global minimizer of \cref{LSmin}. Suppose that ${\bf a}_j, j=1, \cdots, m$ are in $a_0$-generic position.
Then $H_f(\bfx^\star)$ is positive definite.
\end{theorem}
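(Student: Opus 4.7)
The plan is to compute the Hessian explicitly, exploit the fact that a global minimizer of \cref{LSmin} yields $f(\bfx^\star)=0$ in the noiseless (consistent) setting, and then use the $a_0$-generic condition in a clean pigeonhole argument. Writing $g_i(\bfx)=(\bfa_i^\top\bfx)^2-b_i$ so that $f(\bfx)=\sum_{i=1}^m g_i(\bfx)^2$, I would compute
\begin{equation*}
H_f(\bfx) \;=\; \sum_{i=1}^m \bigl(2\,\nabla g_i(\bfx)\nabla g_i(\bfx)^\top + 2\, g_i(\bfx)\,\nabla^2 g_i(\bfx)\bigr)
\;=\; 8\sum_{i=1}^m (\bfa_i^\top\bfx)^2\,\bfa_i\bfa_i^\top + 4\sum_{i=1}^m g_i(\bfx)\,\bfa_i\bfa_i^\top .
\end{equation*}
At a global minimizer $\bfx^\star$ the objective $f$ attains its infimum $0$, forcing $g_i(\bfx^\star)=0$ for every $i$, so the second sum vanishes and only the clean PSD term remains.

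Next, for an arbitrary test vector $\bfy\in\RR^n$, I would rewrite
\begin{equation*}
\bfy^\top H_f(\bfx^\star)\,\bfy \;=\; 8\sum_{i=1}^m (\bfa_i^\top\bfx^\star)^2\,(\bfa_i^\top\bfy)^2 ,
\end{equation*}
and argue by contradiction: suppose $\bfy\ne 0$ but the sum vanishes. Then for each index $i$ one of the two factors $\bfa_i^\top\bfx^\star$, $\bfa_i^\top\bfy$ must be zero, so the index set splits as $\{1,\dots,m\}=S\cup T$ where $S=\{i:\bfa_i^\top\bfy=0\}$ and $T=\{i:\bfa_i^\top\bfx^\star=0\}$.

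The $a_0$-generic hypothesis applied to $\bfy$ says that for every choice of $n$ indices $j_1<\cdots<j_n$ the vector $(\bfa_{j_1}^\top\bfy,\dots,\bfa_{j_n}^\top\bfy)$ has norm $\ge a_0\|\bfy\|>0$; in particular no $n$-subset of the $\bfa_i$'s can be orthogonal to $\bfy$, so $|S|\le n-1$. The same argument applied to $\bfx^\star$ (which must be nonzero, else $b_i\equiv 0$ reduces to a trivial case that should be excluded) gives $|T|\le n-1$. Combined with $S\cup T=\{1,\dots,m\}$ this yields $m\le 2n-2$, contradicting the standing assumption $m\ge 2n-1$ inherent to the real phase retrieval setting. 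Hence the quadratic form is strictly positive on $\RR^n\setminus\{0\}$ and $H_f(\bfx^\star)\succ 0$.

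The main obstacle is really a bookkeeping one: ensuring that the minimizer $\bfx^\star$ is nonzero and that $m\ge 2n-1$ (so that $|S|+|T|\ge m$ cannot hold with $|S|,|T|\le n-1$), since without these the Hessian can degenerate. Both are implicit in the consistent real phase retrieval setup, and I would mention them explicitly at the start of the proof; once they are in place, the pigeonhole step powered by the $a_0$-generic position closes the argument in a line.
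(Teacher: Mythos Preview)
Your computation of the Hessian and the reduction at a global minimizer $\bfx^\star$ to the quadratic form
\[
\bfy^\top H_f(\bfx^\star)\,\bfy \;=\; 8\sum_{i=1}^m (\bfa_i^\top\bfx^\star)^2(\bfa_i^\top\bfy)^2
\]
coincide exactly with the paper's derivation (up to an immaterial constant factor). The divergence is only in the last step. The paper asserts, directly from the $a_0$-generic definition, a quantitative lower bound of the form $\ge c\,a_0\|\bfx^\star\|^2\|\bfy\|^2$ and stops there. Your route instead argues by contradiction with a pigeonhole count: the $a_0$-generic hypothesis forces $|\{i:\bfa_i^\top\bfy=0\}|\le n-1$ and $|\{i:\bfa_i^\top\bfx^\star=0\}|\le n-1$, so the sum can vanish only if $m\le 2n-2$. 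This is a genuinely different closing argument. What the paper's version buys is a uniform quantitative bound on the smallest eigenvalue (useful if one later needs strong convexity constants), though the stated inequality is not fully justified there. What your version buys is a completely transparent, self-contained proof of strict positivity, at the cost of making the hypotheses $\bfx^\star\neq 0$ and $m\ge 2n-1$ explicit; both are indeed standing assumptions in the real phase-retrieval setting, and you are right to flag them.
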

\begin{proof}
Recall $A_\ell =\bfa_\ell \bar{\bfa}_\ell^\top$ for $\ell=1, \cdots, m$.
It is easy to see
$$\nabla f(\bfx)=  2\sum_{\ell=1}^m (\bfx^\top A_\ell\bfx- b_\ell) A_\ell \bfx$$
and the entries
$h_{ij}$ of the Hessian $H_f(\bfx)$ is
\begin{align*}
h_{ij} &= \frac{\partial}{\partial x_i}\frac{\partial}{\partial x_j} f(\bfx)
= 2\sum_{\ell=1}^m (\bfx^\top A_\ell\bfx- b_\ell) a_{ij}(\ell) + 4 \sum_{p=1}^n a_{i,p}(\ell) x_p
\sum_{q=1}^n a_{j,q}(\ell)x_q,
\end{align*}
where $A_\ell= [a_{ij}(\ell)]_{ij=1}^n$.
As we have $(\bfx^*)^\top A_\ell\bfx^*= b_\ell, \ell=1, \cdots, m$, the first summation
term of $h_{ij}$ above is zero at $\bfx^*$.
Letting $M(\bfy) = \bfy^\top H_f(\bfx^*) \bfy$ be a quadratic function of $\bfy$, we have
\begin{align*}
M(\bfy) &= 4 \sum_{\ell=1}^m (\bfy^\top A_\ell \bfx^* (\bfx^*)^\top A_\ell \bfy =
4\sum_{\ell=1}^m |\bfy^\top A_\ell \bfx^*|^2\cr
&= 4 \sum_{\ell=1}^m |\bfy^\top \bfa_\ell|^2 |\bar{\bfa}_\ell^\top \bfx^*|^2
\ge 4a_0  \|\bfx^*\|^2  \|\bfy\|^2.
\end{align*}
by using the definition of $a_0$ as in a previous section.
It follows that $H_f(\bfx^*)$ is positive definite.
\end{proof}

Next let us show that the global minimizer $\bfx^\star$ in the complex setting. In this case, the Hessian $H_F(\bfx^*)$ is no
longer positive definite. Instead, it is nonnegative definite. To this end, let us fix some notations. Write
$\bfa_\ell = a_\ell + \bfi c_\ell$ for $\ell=1, \cdots, m$. For $\bfz= \bfx+ \bfi
\bfy$, we have $\bfa_\ell^\top \bfz^*=  b_\ell$ for the global minimizer $\bfz^*$.
Writing $f_\ell(\bfx, \bfy) = |\bfa_\ell^\top \bfz|^2 -b_\ell
=(a_\ell^\top \bfx - c_\ell^\top \bfy)^2 + (c_\ell^\top \bfx+ a_\ell^\top \bfy)^2 -b_\ell$, we consider
\begin{equation}
\label{complexFun}
f(\bfx, \bfy) =  \frac{1}{m} \sum_{\ell=1}^m f_\ell^2.
\end{equation}
The gradient  of $f$ can be easily found as follows:  $\nabla f= [\nabla_\bfx f, \nabla_\bfy f]$ with
\begin{align}
\label{gradx}
\nabla_\bfx f(\bfx, \bfy) =& \frac{1}{m}\sum_{\ell=1}^m \nabla_\bfx f_\ell^2
=  \frac{4}{m}\sum_{\ell=1}^m f_\ell(\bfx, \bfy) [(a_\ell^\top \bfx- c_\ell^\top \bfy) a_\ell
+ (c_\ell^\top \bfx +  a_\ell^\top \bfy) c_\ell]
\end{align}
and
\begin{align}
\label{grady}
\nabla_\bfy f(\bfx, \bfy) =& \frac{1}{m}\sum_{\ell=1}^m \nabla_\bfy f_\ell^2
=  \frac{4}{m}\sum_{\ell=1}^m f_\ell(\bfx, \bfy) [(a_\ell^\top \bfx- c_\ell^\top \bfy) (-c_\ell)
+ (c_\ell^\top \bfx +  a_\ell^\top \bfy) a_\ell].
\end{align}
The Hessian of $f$ is more complicated:
\begin{equation}
\label{HessianF}
H_f (\bfx, \bfy) = \left[ \begin{matrix} \nabla_{\bfx}\nabla_\bfx f(\bfx, \bfy) & \nabla_{\bfx}\nabla_\bfy f(\bfx, \bfy)\cr
\nabla_{\bfy}\nabla_\bfx f(\bfx, \bfy ) & \nabla_{\bfy}\nabla_\bfy f(\bfx, \bfy)\end{matrix}\right]
\end{equation}
with $\nabla_{\bfx}\nabla_\bfx f(\bfx, \bfy), \cdots, \nabla_{\bfy}\nabla_\bfy f(\bfx, \bfy)$ given below.
\begin{align*}
\nabla_{\bfx}\nabla_\bfx f(\bfx, \bfy) &=
\frac{4}{m}\sum_{\ell=1}^m f_\ell(\bfx, \bfy) [a_\ell a_\ell^\top  +  c_\ell c_\ell^\top] \cr
&+ \frac{8}{m}\sum_{\ell=1}^m [(a_\ell^\top \bfx- c_\ell^\top \bfy) a_\ell
+ (c_\ell^\top \bfx +  a_\ell^\top \bfy) c_\ell]
[(a_\ell^\top \bfx- c_\ell^\top \bfy) a_\ell^\top
+ (c_\ell^\top \bfx +  a_\ell^\top \bfy) c_\ell^\top]
\end{align*}
\begin{align*}
\nabla_{\bfx}\nabla_\bfy f(\bfx, \bfy) &=
\frac{4}{m}\sum_{\ell=1}^m f_\ell(\bfx, \bfy) [a_\ell (-c_\ell)^\top  +  c_\ell a_\ell^\top] \cr
&+ \frac{8}{m}\sum_{\ell=1}^m [(a_\ell^\top \bfx- c_\ell^\top \bfy) a_\ell+ (c_\ell^\top \bfx +  a_\ell^\top \bfy) c_\ell]
[(a_\ell^\top \bfx- c_\ell^\top \bfy) (-c_\ell)^\top
+ (c_\ell^\top \bfx +  a_\ell^\top \bfy) a_\ell^\top]
\end{align*}
and similar for $\nabla_{\bfy}\nabla_\bfx f(\bfx, \bfy)$ and $\nabla_{\bfy}\nabla_\bfy f(\bfx, \bfy)$.

We are now ready to prove the nonnegativity of the Hessian at a global minimizer $\bfz^*$ in the complex value setting.
\begin{theorem}
\label{Xu7192018}
At any global minimizer $\bfz^*=(\bfx^*, \bfy^*)$, we have the Hessian $H_f(\bfx^*, \bfy^*)\ge 0$. In fact,
$H_f(\bfx^*,\bfy^*)=0$ along the direction $[-(\bfy^*)^\top, (\bfx^*)^\top ]^\top$.
\end{theorem}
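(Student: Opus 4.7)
The plan is to exploit the fact that at a global minimizer every residual $f_\ell$ vanishes, which decimates the Hessian down to a clean sum of rank-one outer products whose structure mirrors the complex phase ambiguity.

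First, I would observe that since $\bfz^* = \bfx^* + \bfi\bfy^*$ attains $F(\bfz^*) = 0$, we must have $f_\ell(\bfx^*,\bfy^*) = |\bfa_\ell^\top \bfz^*|^2 - b_\ell = 0$ for every $\ell$. Plugging this into the expressions for $\nabla_\bfx\nabla_\bfx f$, $\nabla_\bfx\nabla_\bfy f$, $\nabla_\bfy\nabla_\bfx f$, and $\nabla_\bfy\nabla_\bfy f$ already displayed before the theorem, the first summations (those carrying a factor $f_\ell$) vanish. Introducing the shorthand
\begin{equation*}
u_\ell := a_\ell^\top \bfx^* - c_\ell^\top \bfy^*, \qquad v_\ell := c_\ell^\top \bfx^* + a_\ell^\top \bfy^*,
\end{equation*}
so that $\bfa_\ell^\top \bfz^* = u_\ell + \bfi v_\ell$, and defining
\begin{equation*}
\bfp_\ell := \begin{pmatrix} u_\ell a_\ell + v_\ell c_\ell \\ -u_\ell c_\ell + v_\ell a_\ell \end{pmatrix} \in \mathbb{R}^{2n},
\end{equation*}
I would next check directly that each surviving block is a corresponding block of $\bfp_\ell \bfp_\ell^\top$, giving the compact expression
\begin{equation*}
H_f(\bfx^*,\bfy^*) = \frac{8}{m} \sum_{\ell=1}^m \bfp_\ell \bfp_\ell^\top.
\end{equation*}
This representation as a sum of rank-one symmetric outer products immediately establishes $H_f(\bfx^*,\bfy^*) \succeq 0$.

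For the second claim, it suffices to show $\bfp_\ell^\top \bigl[-(\bfy^*)^\top, (\bfx^*)^\top\bigr]^\top = 0$ for every $\ell$. A one-line expansion gives
\begin{equation*}
\bfp_\ell^\top \begin{pmatrix} -\bfy^* \\ \bfx^* \end{pmatrix} = -u_\ell\bigl(a_\ell^\top \bfy^* + c_\ell^\top \bfx^*\bigr) + v_\ell\bigl(a_\ell^\top \bfx^* - c_\ell^\top \bfy^*\bigr) = -u_\ell v_\ell + v_\ell u_\ell = 0,
\end{equation*}
so the vector $[-(\bfy^*)^\top,(\bfx^*)^\top]^\top$ lies in the kernel of every $\bfp_\ell \bfp_\ell^\top$, hence in the kernel of $H_f(\bfx^*,\bfy^*)$.

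There is no real obstacle here beyond careful bookkeeping of the four Hessian blocks; the only conceptual point worth emphasizing in the write-up is that the claimed null direction is simply the tangent at $\theta=0$ to the phase orbit $\theta \mapsto e^{\bfi\theta}\bfz^* = (\cos\theta\,\bfx^* - \sin\theta\,\bfy^*) + \bfi(\sin\theta\,\bfx^* + \cos\theta\,\bfy^*)$, so its lying in $\ker H_f(\bfx^*,\bfy^*)$ is the analytic reflection of the unavoidable global-phase ambiguity in complex phase retrieval. This also explains why, in contrast with \cref{mjlai07062018}, one cannot hope for strict positive definiteness in the complex case.
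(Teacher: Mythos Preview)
Your proposal is correct and follows essentially the same approach as the paper: both use that $f_\ell(\bfx^*,\bfy^*)=0$ at a global minimizer to reduce the Hessian to a sum of rank-one terms, then verify the null direction by direct substitution. Your introduction of the shorthand $u_\ell,v_\ell,\bfp_\ell$ and the outer-product form $H_f=\frac{8}{m}\sum_\ell \bfp_\ell\bfp_\ell^\top$ packages the same computation a bit more cleanly than the paper's explicit quadratic-form expansion, and your closing remark tying the kernel direction to the tangent of the phase orbit $\theta\mapsto e^{\bfi\theta}\bfz^*$ is a nice conceptual addition not present in the paper.
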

\begin{proof}
At $\bfz^*=\bfx^*+\bfi \bfy^*$, we have
\begin{align*}
\nabla_{\bfx}\nabla_\bfx f(\bfx^*, \bfy^*) &=  \frac{8}{m}\sum_{\ell=1}^m [(a_\ell^\top \bfx^*- c_\ell^\top \bfy^*) a_\ell
+ (c_\ell^\top \bfx^* +  a_\ell^\top \bfy^*) c_\ell] \,
[(a_\ell^\top \bfx^*- c_\ell^\top \bfy^*) a_\ell^\top
+ (c_\ell^\top \bfx^* +  a_\ell^\top \bfy^*) c_\ell^\top]
\end{align*}
\begin{align*}
\nabla_{\bfx}\nabla_\bfy f(\bfx^*, \bfy^*) &=
 \frac{8}{m}\sum_{\ell=1}^m [(a_\ell^\top \bfx^* - c_\ell^\top \bfy^*) a_\ell+ (c_\ell^\top \bfx^*
+  a_\ell^\top \bfy^*) c_\ell] \,
[(a_\ell^\top \bfx^* - c_\ell^\top \bfy^*) (-c_\ell)^\top
+ (c_\ell^\top \bfx^* +  a_\ell^\top \bfy^*) a_\ell^\top]
\end{align*}
and similar for the other two terms. It is easy to see that for any $\bfw= \bfu +\bfi \bfv$ with $\bfu, \bfv\in \mathbb{R}^n$,
we have
\begin{align*}
&[\bfu^\top \, \bfv^\top]^\top H_f(\bfx^*, \bfy^*) \left[\begin{matrix}\bfu \cr \bfv\end{matrix}\right]\cr
=& \frac{8}{m}\sum_{\ell=1}^m [(a_\ell^\top \bfx^* - c_\ell^\top \bfy^*) a_\ell^\top \bfu + (c_\ell^\top \bfx^*
+  a_\ell^\top \bfy^*) c_\ell^\top \bfu ]^2 \cr
&+ \frac{8}{m}\sum_{\ell=1}^m 2[(a_\ell^\top \bfx^* - c_\ell^\top \bfy^*) a_\ell^\top \bfu + (c_\ell^\top \bfx^*
+  a_\ell^\top \bfy^*) c_\ell^\top \bfu ] \,
[(a_\ell^\top \bfx^* - c_\ell^\top \bfy^*) (-c_\ell)^\top\bfv + (c_\ell^\top \bfx^* +  a_\ell^\top \bfy^*) a_\ell^\top\bfv]\cr
&+ \frac{8}{m}\sum_{\ell=1}^m
[(a_\ell^\top \bfx^* - c_\ell^\top \bfy^*) (-c_\ell)^\top\bfv + (c_\ell^\top \bfx^* +  a_\ell^\top \bfy^*) a_\ell^\top\bfv]^2\cr
=& \frac{8}{m}\sum_{\ell=1}^m [(a_\ell^\top \bfx^* - c_\ell^\top \bfy^*) a_\ell^\top \bfu + (c_\ell^\top \bfx^*
+  a_\ell^\top \bfy^*) c_\ell^\top \bfu
 +(a_\ell^\top \bfx^* - c_\ell^\top \bfy^*) (-c_\ell)^\top\bfv + (c_\ell^\top \bfx^* +  a_\ell^\top \bfy^*) a_\ell^\top\bfv]^2\cr
\ge & 0.
\end{align*}
Furthermore, if we choose $\bfu= -\bfy^*$ and $\bfv=\bfx^*$, we can easily see that the Hessian $H_f$
along this direction is zero:
$$
[-(\bfy^*)^\top \, (\bfx^*)^\top]^\top H_f(\bfx^*, \bfy^*) \left[\begin{matrix}-\bfy^* \cr \bfx^*\end{matrix}\right]=0.
$$
\end{proof}

\bibliographystyle{siamplain}

\begin{thebibliography}{10}
\bibitem{AB09}
Hedy Attouch and J{\'e}r{\^o}me Bolte.
\newblock On the convergence of the proximal algorithm for nonsmooth functions
  involving analytic features.
\newblock {\em Mathematical Programming}, 116(1-2):5--16, 2009.

\bibitem{ABRS10}
H{\'e}dy Attouch, J{\'e}r{\^o}me Bolte, Patrick Redont, and Antoine Soubeyran.
\newblock Proximal alternating minimization and projection methods for
  nonconvex problems: An approach based on the Kurdyka-{\L}ojasiewicz
  inequality.
\newblock {\em Mathematics of Operations Research}, 35(2):438--457, 2010.

\bibitem{AP16}
H. Attouch and J. Peypouquet, The rate of convergence of Nesterov's accelerated forward-backward
method is actually faster than $1/k^2$, SIAM J. Optim. 26(2016), 1824--1834.

\bibitem{balan2006signal}
Radu Balan, Pete Casazza, and Dan Edidin.
\newblock On signal reconstruction without phase.
\newblock {\em Applied and Computational Harmonic Analysis}, 20(3):345--356,
  2006.

\bibitem{BB88}
Jonathan Barzilai and Jonathan~M Borwein.
\newblock Two-point step size gradient methods.
\newblock {\em IMA journal of numerical analysis}, 8(1):141--148, 1988.

\bibitem{Non1}
A. Beck and Y. C. Eldar.
\newblock  Sparsity constrained nonlinear optimization: Optimality conditions and algorithms.
\newblock {\em SIAM Journal on Optimization}, 23(3):1480¨C1509, 2013a.



\bibitem{BT09}
 A. Beck and M. Teboulle, A Fast Iterative Shrinkage-Thresholding Algorithm for Linear
Inverse Problems, SIAM J. Imaging Sciences, Vol. 2 (2009), pp. 183--202.

\bibitem{BSRW18}
Emrah Bostan, Mahdi Soltanolkotabi, David Ren, and Laura Waller.
\newblock Accelerated Wirtinger flow for multiplexed fourier ptychographic
  microscopy.
\newblock {\em arXiv preprint arXiv:1803.03714}, 2018.

\bibitem{CLM15}
T. Tony Cai, Xiaodong Li, and Zongming Ma,  Optimal rates of convergence for noisy sparse
phase retrieval via thresholded Wirtinger flow, The Annals of Statistics
2016, Vol. 44, No. 5, 2221-2251.

\bibitem{CESV12}
Emmanuel~J Candes, Yonina~C Eldar, Thomas Strohmer, and Vladislav Voroninski.
\newblock Phase retrieval via matrix completion.
\newblock {\em SIAM review}, 57(2):225--251, 2015.

\bibitem{CLS15a} Emmanuel J. Candes, Xiaodong Li,and Mahdi Soltanolkotabi, Phase retrieval from coded diffraction patterns,
Applied and Computational Harmonic Analysis 39(2015), no.2, 277--299.


\bibitem{WF}
Emmanuel~J Candes, Xiaodong Li, and Mahdi Soltanolkotabi.
\newblock Phase retrieval via wirtinger flow: Theory and algorithms.
\newblock {\em IEEE Transactions on Information Theory}, 61(4):1985--2007,
  2015.

\bibitem{TWF}
Yuxin Chen and Emmanuel Candes.
\newblock Solving random quadratic systems of equations is nearly as easy as
  solving linear systems.
\newblock Communications on Pure and Applied Mathematics, vol. 70, issue 5, pp. 822-883, May 2017.

\bibitem{conca2015algebraic}
Aldo Conca, Dan Edidin, Milena Hering, and Cynthia Vinzant.
\newblock An algebraic characterization of injectivity in phase retrieval.
\newblock {\em Applied and Computational Harmonic Analysis}, 38(2):346--356, 2015.

\bibitem{ref3} Yonina C. Eldar and  Shahar Mendelson. Phase retrieval:
Stability and recovery guarantees. \textit{Applied and Computational
Harmonic Analysis},Volume 36, Issue 3, May 2014, Pages 473--494.

\bibitem{EM14}
 Y. C. Eldar and S. Mendelson. Phase retrieval: Stability and recovery guarantees. Applied and Computational
Harmonic Analysis, 36(3):473-494, 2014.

\bibitem{DF87} Chris Dainty and James R. Fienup,   Phase retrieval and image reconstruction for astronomy,
Image Recovery: Theory and Application(1987),231--275.



\bibitem{DL02}
Y. -H. Dai and L. -Z. Liao,
R-linear convergence of the Barzilai and Borwein gradient method, IMA Journal of Numerical Analysis (2002) 22, 1--10.

\bibitem{FD01}
Frank~R Deutsch.
\newblock {\em Best approximation in inner product spaces}.
\newblock Springer, New York, 2001.

\bibitem{F82}
James R. Fienup, Phase retrieval algorithms: a comparison, Applied Optics 21(1982),no.15, 2758--2769.


\bibitem{Fulton}
William Fulton.
\newblock {\em Intersection theory}, volume~2.
\newblock Springer Science \& Business Media, 2013.

\bibitem{Gaoxu}
Bing Gao and Zhiqiang Xu.
\newblock Phaseless recovery using the gauss--newton method.
\newblock {\em IEEE Transactions on Signal Processing}, 65(22):5885--5896,  2017.

\bibitem{G13}
P. Gong, C. Zhang, Z. Lu, J. Huang, J. Ye, A general iterative
shrinkage and thresholding algorithm
for non-convex regularized optimization problems, The 30th
International Conference on Machine Learning (ICML), 37--45, 2013.

\bibitem{JoeHarris}
Joe Harris.
\newblock Algebraic geometry: a first course.
\newblock 1992.

\bibitem{Hartshorne}
Robin Hartshorne.
\newblock {\em Algebraic geometry}, volume~52.
\newblock Springer Science \& Business Media, 2013.

\bibitem{HX}
Meng Huang and Zhiqiang Xu.
\newblock Phase retrieval from the magnitudes of affine measurements.
\newblock {\em arXiv preprint arXiv:1805.07899}, 2018.

\bibitem{JEH15}
Kishore Jaganathan, Yonina C. Eldar, Babak Hassibi,
Phase Retrieval: An Overview of Recent Developments, arXiv:1510.07713v1, Oct. 26, 2015.


\bibitem{K98}
Krzysztof Kurdyka.
\newblock On gradients of functions definable in o-minimal structures.
\newblock In {\em Annales de l'institut Fourier}, volume~48, pages 769--784. Chartres: L'Institut, 1950-, 1998.

\bibitem{L63}
S. \L ojasiewicz, Une propri\'et\'e topologique des sous-ensembles analytiques r\'eels,
in Les \'Equations aux D\'eriv\'ees
Partielles, pp. 87--89, \'Editions du centre National de la Recherche Scientique, Paris, 1963.

\bibitem{MIJALH02}
Jianwei Miao, Tetsuya Ishikawa, Bart Johnson, Erik H. Anderson, Barry Lai, and Keith O. Hodgson, High
resolution 3D X-Ray
diffraction microscopy, Phys. Rev. Lett. 89(2002), no.8,088303.

\bibitem{M90} R.P. Millane, Phase retrieval in crystallography and optics, Journal of the Optical Society of
America A7(1990), no.3, 394--411.

\bibitem{Mumford}
David Mumford.
\newblock The red book of varieties and schemes, volume 1358 of lecture notes
  in mathematics.
\newblock 1999.

\bibitem{N04}
Yurii Nesterov.
\newblock {\em Introductory lectures on convex optimization: A basic course},
  volume~87.
\newblock Springer Science \& Business Media, 2013.




\bibitem{R93}
W. Harrison Robert, Phase problem in crystallography, Journal of the Optical Society of America A 10(1993),no.5,1046--1055.

\bibitem{R70}
R. T. Rockafellar, Convex Analysis, Princeton University Press, 1970.

\bibitem{Non2}
 D. E. Rumelhart, G. E. Hinton, and R. J. Williams, Learning
representations by back-propagating errors. Nature,
323(6088):533-536, 1986.

\bibitem{Shafarevich}
I.~R. Shafarevich.
\newblock {\em Basic algebraic geometry 1}.
\newblock Springer-Verlag Berlin Heidelberg, 2013.


\bibitem{SBE14}
 Y. Shechtman, A. Beck and Y. C. Eldar, GESPAR: Efficient phase
retrieval of sparse signals, IEEE Transactions on Signal Processing 62, no. 4 (2014):  928--938.

\bibitem{SECCMS15}
Yoav Shechtman, Yonina C. Eldar, Oren Cohen, Henry N. Chapman, Jianwei Miao, and Mordechai Segev,
Phase retrieval with application to optical imaging: A contemporary overview, Signal Processing Magazine,
IEEE 32(2015),no.3,87--109.

\bibitem{SQW17}
Ju Sun, Qing Qu, and John Wright, A Geometric Analysis of Phase Retrieval, Foundations of Computational
Mathematics, (2017)  pp 1--68.




\bibitem{vinzant2015small}
Cynthia Vinzant.
\newblock A small frame and a certificate of its injectivity.
\newblock In {\em Sampling Theory and Applications (SampTA), 2015 International
  Conference on}, pages 197--200. IEEE, 2015.

\bibitem{WX14}
Yang Wang, Zhiqiang Xu, Phase Retrieval for Sparse Signals,
Applied Computational Harmonic Analysis, Vol 37, 2014, 531--544, 2014.

\bibitem{wang2016generalized}
Yang Wang and Zhiqiang Xu.
\newblock Generalized phase retrieval: measurement number, matrix recovery and  beyond.
\newblock {\em Applied and Computational Harmonic Analysis}, 2017.


\bibitem{XY13}
Yangyang Xu and Wotao Yin.
\newblock A block coordinate descent method for regularized multiconvex
  optimization with applications to nonnegative tensor factorization and
  completion.
\newblock {\em SIAM Journal on imaging sciences}, 6(3):1758--1789, 2013.

\bibitem{ZARISKI}
Oscar Zariski.
\newblock On the purity of the branch locus of algebraic functions.
\newblock {\em Proceedings of the National Academy of Sciences},
  44(8):791--796, 1958.

\bibitem{ZL16}
Huishuai Zhang and Yingbin Liang, Reshaped Wirtinger flow for solving quadratic systems of equations,
arXiv preprint arXiv:1605.07719(2016).

\bibitem{ZZ17}
Yutao Zheng and Bing Zheng.
\newblock A new modified barzilai--borwein gradient method for the quadratic
  minimization problem.
\newblock {\em Journal of Optimization Theory and Applications},
  172(1):179--186, 2017.

\end{thebibliography}

\end{document}